\newcommand{\dref}{\mathbin{{\sf dref}}}
\newcommand{\linkto}{\mathbin{{\sf link}}}
\newcommand{\Guard}[1]{{[}#1{]}}
\def \cdota{\!\cdot\!}
\def \hasgn{\asgn}
\def \deref{\mathop{*}} 
\def \derefe{\mathop{*\!}}
\newcommand{\prev}{{\bf prev}}
\theoremstyle{plain}
\newcounter{thm}
\newtheorem{theorem}{Theorem}[section]
\newtheorem{lemma}[thm]{Lemma}
\theoremstyle{definition}
\newtheorem{definition}{Definition}[section]
\newtheorem{example}{Example}[section]
\def \Eval{{\sf eval}} 
\def \Update{{\sf update}}
\def \stable{{\sf stable}}
\newcommand{\ceil}[1]{\lceil #1 \rceil}
\newcommand{\Idle}{{\sf Idle}}
\newcommand{\False}{{\sf False}}
\newcommand{\Fin}{{\sf Fin}}
\newcommand{\Inf}{{\sf Inf}}
\newcommand{\lub}{{\sf lub}}
\newcommand{\glb}{{\sf glb}}
\newcommand{\True}{{\sf True}}
\renewcommand{\Chaos}{{\sf Chaos}}
\def\Return{\mathop{\textbf{return}}}
\newcommand{\st}{~{\scriptscriptstyle ^\bullet}~}
\def\figrule{\rule{\columnwidth}{0.5pt}}
\def\uop{\ominus}
\def\bop{\oplus}
\def \rely {\mathop{\textsc{Rely}}}
\def \enf {\mathop{\textsc{Enf}}}
\def\Init{\mathop{\textsc{Init}}}
\newcommand{\NoteEnv}[3]{\newenvironment{#1}{\par\color{#3}#2: }{}}
\definecolor{brijeshcolor}{rgb}{0,0,1}
\definecolor{iancolor}{rgb}{1,0,0}
\definecolor{lindsaycolor}{cmyk}{0.2,1,0,0}
\definecolor{johncolor}{rgb}{1,0,0}
\def \llb {\llbracket}
\def \rrb {\rrbracket}
\def\abssynt{\mathop{:\joinrel:\joinrel=}}
\newcommand{\Context}[2]
{\!\left\llb #1 \!\left/
    \begin{array}[c]{@{}l@{}}
      #2
    \end{array}\right.
  \right\rrb }
\newcommand{\Enf}[2]{
  \enf
  \left(\begin{array}[c]{@{}l@{}}
    #1
  \end{array}\right)
  \st 
  \begin{array}[c]{@{}l@{}}
    #2
  \end{array}}
\newcommand{\Rely}[2]
{\rely #1 \st
    \begin{array}[c]{@{}l@{}}
      #2
    \end{array}}
\newcommand{\Code}[1]
{\begin{array}[c]{@{}l@{}}
      #1
    \end{array}}
\newcommand{\Par}{\textstyle\mathop{\|}}
\newcommand{\Empty}{{\sf Empty}}
\newcommand{\Always}{\textstyle\mathord{\boxdot}}
\newcommand{\Eventually}{\rotatebox[origin=c]{45}{$\textstyle\boxdot$}}
\def\adjoins{\mathbin{\varpropto}}
\newcommand{\adj}[1]{\adjoins_{_{\!#1}}}
\newcommand{\adjd}[1]{\adj{\Delta}}
\def\ch{\mathbin{;}}
\def \kif{\mathop{\mathsf{if}}}
\def \kthen{\mathbin{\mathsf{then}}}
\def \kelse{\mathbin{\mathsf{else}}}
\def \klet {\mathop{{\sf let}}}
\def \kin {\mathbin{{\sf in}}}
\def \bs {\backslash}
\def \dom {\mathrm{dom}}
\def \seq {\mathrm{seq}}
\DeclareMathSymbol{\Box}{\mathord}{lasy}{"32}
\DeclareMathSymbol{\Diamond}{\mathord}{lasy}{"33}
\def \prev {{\sf prev}}
\begin{document}

\title{Proving linearisability via coarse-grained abstraction}

\author{Brijesh Dongol \qquad John Derrick\\ \\
  Department of Computer Science, \\
  The University of
  Sheffield, S1 4DP, \\
  United Kingdom \\
  \texttt{B.Dongol@sheffield.ac.uk,
    J.Derrick@dcs.shef.ac.uk} 
}



\maketitle

\begin{abstract}
  Linearisability has become the standard safety criterion for
  concurrent data structures ensuring that the effect of a concrete
  operation takes place after the execution some atomic statement
  (often referred to as the linearisation point). Identification of
  linearisation points is a non-trivial task and it is even possible
  for an operation to be linearised by the execution of other
  concurrent operations. This paper presents a method for verifying
  linearisability that does not require identification of
  linearisation points in the concrete code. Instead, we show that the
  concrete program is a refinement of some coarse-grained
  abstraction. The linearisation points in the abstraction are
  straightforward to identify and the linearisability proof itself is
  simpler due to the coarse granularity of its atomic statements. The
  concrete fine-grained program is a refinement of the coarse-grained
  program, and hence is also linearisable because every behaviour of
  the concrete program is a possible behaviour its abstraction.
\end{abstract}

\section{Introduction}

With the increasing prevalence of concurrent computation in modern
systems, development of concurrent data structures that enable a
greater degree of parallelism have become increasingly important. To
improve efficiency, programs that implement concurrent data structures
often exhibit fine-grained atomicity and use atomic non-blocking
compare-and-swap operations as their main synchronisation primitive. A
consequence of these features is the increase in complexity of the
programs, making their correctness harder to judge. Hence, formal
verification of programs for concurrent data structures is known to be
a necessity. There are even examples of errors being uncovered by
formal verification in published algorithm that were previously
believed to be correct \cite{Doherty03}.

The main correctness criterion for programs that implement concurrent
data structures is linearisability \cite{Herlihy90}, which allows one
to view operations on concurrent objects as though they occur in some
sequential order. Over the years, numerous approaches to verifying
linearisability have been developed using a variety of different
frameworks, and several of these approaches are partially/fully
mechanised.  Herlihy and Wing's original paper use the notion of a
possibilities mapping, which defines the set of possible abstract data
structures that corresponds to each point of interleaving. Doherty et
al \cite{CDG05,DGLM04} use a simulation-based method using
input/output automata with proofs mechanised using PVS. Vafeiadis et
al use a framework that combines separation logic and rely/guarantee
reasoning \cite{Vaf07,VHHS06}. An automated method based on this
theory has been developed, but the method is known not to apply to a
more complex programs \cite{Vaf10}. Derrick et al have developed
refinement-based methods that have been mechanised using the theorem
prove KIV \cite{DSW07,DSW11TOPLAS,DSW11}. Turon and Wand propose a
compositional rely/guarantee framework with separation logic to show
that the concrete programs implement another so-called ``obviously
correct'' program, which may or may not be linearisable \cite{TW11}.
Several other verification methods have been proposed which we discuss
in more detail in \refsec{sec:concl-relat-work}.  

Each of the existing methods above build on the fact that
linearisability guarantees the existence of a so-called
\emph{linearisation point}, which is an atomic statement whose
execution causes the effect of an operation to be
felt.
\begin{quote}
  ``Linearisability provides the illusion that each operation applied
  by concurrent processes takes effect instantaneously at some point
  between its invocation and its response.'' \cite{Herlihy90}
\end{quote}
Hence, the methods in
\cite{CDG05,DSW07,DSW11TOPLAS,DSW11,DGLM04,Vaf07,Vaf10,VHHS06} involve
identification of linearisation points in the concrete code, and a
proof that execution of a linearisation point does indeed correspond
to an execution of the corresponding abstract operation. However, in
many sophisticated programs, the linearisation points are not always
immediately identifiable, and often require a high degree of expertise
on the proof techniques as well as the program at hand. In more
complex algorithms, it is even possible for some operations to be
linearised by the execution of other concurrent operations and hence
require the use backwards reasoning techniques
\cite{CDG05,DSW07,DSW11,Sha11,Vaf07,VHHS06}. Algorithms that require
backwards reasoning are precisely those that cause difficulties for
the method described in \cite{Vaf10}.

We present a method for verifying linearisability where we aim to
establish a relationship between a fine-grained concrete program and a
program in which the operations execute with coarse-grained
atomicity. In particular, the fine-grained program is one that
implements the one with coarse-grained atomicity.  Groves \cite{Gro08}
and separately Elmas et al \cite{EQSST10} start with a coarse-grained
program, which is incrementally refined to an implementation with
finer-grained atomicity. Splitting the atomicity of a statement is
justified using \emph{reduction} \cite{Lip75}, which ensures that the
operations are immune to interleavings with other concurrent
statements.  However, because one must consider each pair of
interleavings, reduction-based methods are not compositional, and
hence do not scale well as the complexity of an operation increases.

We develop a framework that enables reasoning over the interval time
in which a program executes, presenting an alternative to traditional
reasoning over the pre/post states of a program and captures the
possible interference that may occur during a program's execution
\cite{DDH12}. Our model incorporates fractional permissions
\cite{Boy03} to simplify reasoning about conflicting accesses to
shared variables \cite{DDH12}. Permissions are also used to model
properties such as interference freedom and locality. Our framework
also incorporates reasoning about pointer-based programs by allowing
the domains of each state is assumed to consist of variables and
addresses, and take extra care when updating or evaluating pointers
because the values at their addresses may be dynamically changing. The
behaviour of a command is defined over an interval, and hence, for
example, expression evaluation is assumed to take a number of
steps. Note though that we do not take into account all the
complexities of non-deterministic expression evaluation \cite{HBDJ11}.

We develop interval-based theories refinement, namely \emph{behaviour
  refinement} (which is akin to operation refinement) and \emph{data
  refinement} (which allows the state spaces of two programs to be
linked using a simulation predicate). As far as we are aware, our
formulation of data refinement over intervals is novel to this
paper. Data refinement is used to link the abstract representation of
the data structure (in which each operation takes place in a single
atomic step) and the coarse-grained abstraction (in which operations
execute on the same data representation as the implementation, but
with a coarse-grained atomicity). Like Derrick et al
\cite{SWD12,DSW11,DSW11TOPLAS}, the data refinement proof encapsulates
a proof that the coarse-grained program is linearisable with respect
to the abstract representation. Then to show that the final
implementation is linearisable, we show that the it is a behaviour
refinement of the coarse-grained abstraction. We use Treiber's Stack
\cite{Tre86} to illustrate our approach.

Interval-based methods for proving linearisability have also been
proposed by Baumler et al \cite{BSTR11}. However, their model assumes
that a program executes with its environment by interleaving the
statements of a program with those of its environment, as opposed to
our model, that allows true concurrency, which allows one to model the
inherent true parallelism in modern many/multicore systems.
Furthermore, Baumler et al prove linearisability of the concrete
program directly unlike our method in which we first show that the
fine-grained (concrete) implementation refines a coarse-grained
abstraction. Due to the coarse-granularity of the statements in the
second program, its linearisation points are straightforward to
identify, and the linearisation proof itself is simpler.

This paper is structured as follows. In
\refsec{sec:linearisability-1}, we present a formalisation of
linearisability and an alternative definition that simplifies its
proof. In \refsec{sec:exampl-treib-stack}, we present the Treiber
stack, which we use as a running example throughout the paper. We
present our interval-based framework in \refsec{sec:an-interval-based}
and use it to define a semantics for a language that allows explicit
control of a program's atomicity
(\refsec{sec:interv-based-semant}). We also develop a theory for
refining the behaviour of commands within this framework. In
\refsec{sec:coarse-grain-line}, we develop the coarse-grained
abstraction of the Treiber stack, present methods of verifying its
linearisability using data refinement, and present the actual proof of
its correctness. As part of this, we develop data refinement rules for
our interval-based framework. We then develop methods for showing that
a fine-grained program implements a coarse-grained abstraction. To
this end, we develop compositional rely/guarantee-style rules in
\refsec{sec:compositional-proofs} and some high-level transformation
rules specific to CAS-based implementations in
\refsec{sec:transformation-rules}. We apply these to prove that the
Treiber stack implements the coarse-grained abstraction in
\refsec{sec:verification}.

\section{Linearisability}
\label{sec:linearisability-1}

In this section, we present the original definition of linearisability
\cite{Herlihy90} and an alternative definition that simplifies proofs
of linearisability. 

Two operations $opi$ and $opj$ of a concurrent program are said to
execute \emph{concurrently} iff the invocation of $opi$ occurs after
the invocation but before the response of
$opj$. 
During a concurrent execution of two or more operations, the atomic
statements of the operations may be arbitrarily interleaved. As a
result, the effect of two concurrent operations may take place in any
order and does not correspond to the ordering of invocations and
responses. For example, \reffig{fig:stack-hist-ex} depicts a scenario
where process $r$ linearises before process $p$ even though the
invocation of $p$ occurs before the invocation of $r$. Similarly,
process $q$ linearises after process $q$ even though the response of
process $q$ occurs before the response of process $p$. 

Not every ordering of invocations and responses is linearisable. In
particular, linearisability requires that a chosen ordering of effects
of the concurrent operations corresponds to a valid sequential
history. For example, assuming that we start with an empty stack,
history in \reffig{fig:stack-hist-ex} is linearisable whereas
\reffig{fig:stack-hist-ex2} is not \cite{DSW11TOPLAS}. In particular,
history \reffig{fig:stack-hist-ex} can be linearised by selecting
linearisation points marked by the crosses. In
contrast, there is no possible selection of linearisation points for
\reffig{fig:stack-hist-ex2} that results in a valid sequential history because
processes $q$ and $r$ both return $x$, even though there is only one
concurrent $push(x)$ operation and execution started with an empty
stack.  We give a more formal presentation of these concepts in
\refex{ex:linlin}.

\subsection{Herlihy and Wing's definition}
\label{sec:linearisability}

To formalise linearisability using the nomenclature of Herlihy and
Wing \cite{Herlihy90}, we let $\seq.X$ denote sequences of type
$X$. We assume sequences start with index $0$.  An event is a tuple
$Event \sdef Op \times Proc \times \{invoke,return\} \times \seq.Val$,
respectively corresponding to an operation identifier, a process
identifier, the type of the event (invoke or return), and a sequence
corresponding to the input/output parameters of the event. We use
$op_p^I(k_1, k_2, \dots k_n)$ and $op_p^R(k_1, k_2, \dots k_n)$ to
denote operations $(op, p, invoke, \aang{k_1, \dots, k_n})$ and $(op,
p, response, \aang{k_1, \dots, k_n})$, respectively. Notations
$op_p^I$ and $op_p^R$ denote an invoke and response events with no
parameters, respectively. A \emph{history} is a
sequence of invocation of response events. 

\begin{example}
  For a stack data structure, sequences \refeq{eq:23} and
  \refeq{eq:32} below are possible histories of invocation and
  response events, where $p$, $q$ and $r$ are pairwise distinct
  processes.
  \begin{eqnarray}
    \label{eq:23}
    & \langle push_p^I(x), push_q^I(y), pop_r^I, push_q^R, push_p^R,
    pop_r^R(Empty) \rangle
    \\
    \label{eq:32}
    &  \langle push_p^I(x), pop_q^I, pop_r^I, pop_q^R(x), push_r^R,
    pop_p^R(x)  \rangle
  \end{eqnarray}
  A visualisation of histories \refeq{eq:23} and \refeq{eq:32} are
  given in \reffig{fig:stack-hist-ex} and \reffig{fig:stack-hist-ex2},
  respectively.  \hfill ${}_\clubsuit$
\end{example}

For $H \in \seq.Event$ of
invocations and responses, $H\!|\!p$ denotes the subsequence of $H$
consisting of all invocations and responses of process $p$.  Two
histories $H_1$, $H_2$ are \emph{equivalent} if for all processes $p$,
$H_1\!|\!  p = H_2\!|\!p$.  An invocation $opi^I_p(x)$ \emph{matches}
a response $opj^R_q(y)$ iff $opi = opj$ and $p = q$. An invocation is
\emph{pending} in a history $H$ iff there is no matching response to
the invocation in $H$. We let $complete(H)$ denote the maximal
subsequence of history $H$ consisting of all invocations and matching
responses in $H$, i.e., the history obtained by removing all pending
invocations of $H$.

\begin{figure}[t]
  
  \hfill
  \begin{minipage}[t]{0.46\columnwidth}
    \scalebox{0.8}{\input{stack-hist.pspdftex}}
    \caption{History corresponding to \refeq{eq:23}}
    \label{fig:stack-hist-ex}
  \end{minipage}
  \begin{minipage}[t]{0.46\columnwidth}
    \scalebox{0.8}{\input{stack-hist2.pspdftex}}
    \caption{History corresponding to \refeq{eq:32}}
    \label{fig:stack-hist-ex2}
  \end{minipage}
  \hfill
\end{figure}

An operation $op$ in a history is defined by an invocation
$invocation.op$ followed by the next matching response $response.op$.
For a history $H$, $<_H$ is an irreflexive partial order on
operations where $opi <_H opj$ iff $response.opi$ occurs before
$invocation.opj$ in $H$, i.e., $opi$ and $opj$ do not execute
concurrently and $opi$ occurs before $opj$. 

\begin{definition}[Sequential history]
  A history $H$ is \emph{sequential} iff the first element of $H$ is an
  invocation and each invocation (except possibly the last) is
  immediately followed by its matching response.
\end{definition}

\begin{definition}[Linearisability \cite{Herlihy90}]
  \label{def:linearisability}
  A (concurrent) history $HC$ is \emph{linearisable} iff $HC$ can be
  extended to a (concurrent) history $HC'$ by adding zero or more
  matching responses to pending invocations such that $complete(HC')$
  is equivalent to some sequential history $HS$ and ${}<_{HC}
  {}\subseteq {}<_{HS}{}$.
\end{definition}

\begin{example}
  \label{ex:linlin}
  Using \refdef{def:linearisability}, assuming that the stack is
  initially empty, history \refeq{eq:23} is may be linearised by the
  following sequential history:
  \begin{eqnarray}
    \label{eq:49}
    \langle pop_r^I, pop_r^R(Empty), push_q^I(y),
    push_q^R , push_p^I(x), push_p^R \rangle 
  \end{eqnarray}
  Note that a single concurrent history may be linearised by several
  sequential histories. For example \refeq{eq:23} can also be
  linearised by the following sequential history, in which case the
  order of the linearisation points of process $p$ and $q$ shown in
  \reffig{fig:stack-hist-ex} would be swapped.
  \begin{eqnarray}
    \label{eq:50}
    \langle pop_r^I, pop_r^R(Empty), push_p^I(x), push_p^R, push_q^I(y),
    push_q^R \rangle
  \end{eqnarray}
  Linearising history \refeq{eq:23} using \refeq{eq:49} results in a
  final stack $\aang{y,x}$ with element $y$ at the top, whereas
  \refeq{eq:50} results in a final stack $\aang{x,y}$ with element $x$
  at the top.

  Unlike \refeq{eq:23}, there is no valid linearisation of
  \refeq{eq:32}.  \hfill ${}_\clubsuit$
\end{example}
The definition of linearisability allows histories to be extended with
matching responses to pending invocations. This is necesssary because
some operations may be past their linearisation point, but not
yet responded. For example consider the following stack history, where
the stack is initially empty.
\begin{eqnarray}
  \label{eq:87}
  \langle push_p^I(x), pop_q^I, pop_q^R(x) \rangle
\end{eqnarray}
The effect of the invocation $push_p^I(x)$ has clearly been felt in
\refeq{eq:87} because the $pop_q$ returns $x$, i.e., the linearisation
point of $push_p^I(x)$ occurs before that of $pop_q^I$. We can
validate this formally because \refeq{eq:87} can be extended with a
matching response to $push_p^I(x)$, then linearised by the following
sequential history
\begin{eqnarray*}
  \langle push_p^I(x), push_p^R, pop_q^I, pop_q^R(x) \rangle
\end{eqnarray*}

\subsection{An alternative definition of linearisability}
\label{sec:an-altern-defin}

Verifying linearisability (\refdef{def:linearisability}) by reasoning
at the level of histories of invocations and responses directly is
clearly infeasible. Hence, we follow the methods of Derrick et al who
link the concurrent and sequential histories using a matching function
\cite{SWD12,DSW11TOPLAS,DSW11}. This then allows one to prove
linearisability via data refinement. The key idea here is to
distinguish between invocations that have and have not linearised but
not yet returned. This allows one to determine whether or not a
concrete program contributes to the abstract history. 

One must first define matching pairs of events and pending invocation
events within a sequence of events. 
\begin{definition}[Matching pair, Pending invocation]
  \label{def-matching}
  For a sequence of events $H$, we say $i, j \in \dom.H$ forms a
  \emph{matching pair} in $H$ iff $mp_H(i,j)$ holds and $i$ is a
  \emph{pending invocation} in $H$ iff 
  $pending_H.i$ holds, where
  \begin{eqnarray*}
    mp_H.(i,j) & \sdef &
    \begin{array}[t]{@{}l@{}}
      (i < j) \land (H.i.opid = H.j.opid) \land
      (H.i.proc = H.j.proc)  \land \\
      (H.i.type = invoke) \land (H.j.type
      = response) \\
      (\all k \st i < k < j \imp H.k.proc
      \neq H.i.proc)
    \end{array}
    \\
    pending_H.i & \sdef & (H.i.type = invoke) \land (\all j : \dom.H
    \st j \geq i
    \imp \neg mp_H.(i, j))
  \end{eqnarray*}
\end{definition}  
Hence, two indices $i$ and $j$ form a matching pair in a sequence of
events $H$ iff $j$ follows $i$, events $H.i$, $H.j$ are invocations
and responses of the same operation by the same process and there are
not other invocations/responses by process $H.i.proc$ between $i$ and
$j$. Index $i$ of sequence $H$ is pending if there is no matching
index $j > i$.
\begin{example}
  Given that $H$ is the history corresponding to \refeq{eq:23},
  $mp_H.(0,4)$, $mp_H.(1,3)$ and $mp_H.(2,5)$ hold. If $h$ is the
  history corresponding to \refeq{eq:87}, $mp_{h}.(1,2)$ and
  $pending_{h}.0$ hold.  \hfill ${}_\clubsuit$
\end{example}
Using these, one may now define the set of legal histories. 
\begin{definition}[Legal history]
  \label{def-legalhist}
  A history $H$ is \emph{legal} iff $legal.H$ holds,
  where
  \begin{eqnarray*}
    legal.H & \sdef & \all i : \dom.H \spot
    \begin{array}[t]{@{}l@{}}
      \kif (H.i.type = invoke)
      \kthen (pending_H.i \lor \exists j : \dom.H \spot mp_H.(i, j))
      \\
      \kelse \ 
      (\exists j : \dom.H \spot mp_H.(j, i))
    \end{array}
  \end{eqnarray*}
\end{definition}
Hence, $H$ is legal iff for every index corresponding to an invocation
in $H$ is either pending or has a matching index, and indices
corresponding to responses have an earlier matching invocation. Using
this, one may now define a notion of a lin-relation between two
histories.  

\begin{definition}[Lin-relation]
  \label{def-lin-rel}
  A history $HC$ is said to be in a \emph{lin-relation} with history
  $HS$ with respect to a matching function $f$ iff $linrel(HC, f, HS)$
  holds, where
  \begin{eqnarray}
    \label{eq:1}
    linrel(HC, f, HS) & \sdef & f \in  \dom.HC \surj \dom.HS \land \\
    \label{eq:4}
    && (\all i, j: \dom.HC \spot mp_{HC}.(i,j)
    \imp \{i,j\} \subseteq \dom.f) \land \\
    \label{eq:55}
    && (\all i: \dom.HS \spot HC.i =
    HS.(f.i)) \land \\
    \label{eq:73}
    && (\all i, j : \dom.HC \spot i < j \land mp_{HC}.(i, j) 
    \imp f.j =
    f.i + 1) \land \\
    \label{eq:74}
    && (\all i, j, k, l : \dom.HC \spot j < k \land mp_{HC}.(i, j) \land
    mp_{HC}.(k, l) \imp f.j < f.k)
  \end{eqnarray}
\end{definition}
Hence, a concrete history $HC$ is in a lin-relation with sequential
history $HS$ with respect to linearising function $f$ iff $f$ is a
surjection (an onto function) mapping the indices of $HC$ to the
indices of $HS$ (\ref{eq:1}), every pair of indices of $HC$ that forms
a matching pair in $HC$ is in the domain of $f$ (\ref{eq:4}), for
every index $i$ of the sequential history $HC$, element $HC.i$ is the
same as element $HS.(f.i)$ \refeq{eq:55}, for every matching pair of
indices $i$ and $j$, $f.j$ is one greater than $f.i$ \refeq{eq:73},
and matching indices $i$, $j$ must occur before matching indices $k$,
$l$ if $j$ occurs before $k$ \refeq{eq:74}. 

We let $prefix.tt$ denote the set of all prefixes of a sequence $tt$
and hence $ss \in prefix.tt$ denotes that sequence $ss$ is a prefix of
$tt$. Using the definitions above, we now present Derrick et al's
alternative definition of linearisability \cite{DSW11TOPLAS}.
\begin{definition}[Linearisability]
  \label{def-linearisable}
  A (concurrent) history $HC$ is \emph{linearisable} with respect to a
  sequential history $HS$ iff $linearisable(HC, HS)$ holds, where
  \begin{eqnarray*}
    linearisable(HC, HS) & \sdef & \exists HE : \seq.Event
    \spot HC \in prefix.HE \land  
    legal.HE \land \exists f \st linrel(HE, f,  HS)
  \end{eqnarray*}
\end{definition}
Hence, concrete history $HC$ is linearisable with respect to
sequential history $HS$ iff $HC$ can be extended to a sequence of
(return) events $HE$ such that the exended history $HE$ is legal and
there exists a linearising function between $HE$ and $HS$. 

\section{Example: The Treiber Stack}
\label{sec:exampl-treib-stack}
We present our methods via a verification of the Treiber Stack as a
running example (see \reffig{fig:con-stack}), which is a well-known
program that implements a list-based concurrent stack
\cite{Tre86}. Verification of this stack has become a standard
exercise in the literature. The program uses a pointer ${\tt Top}$ of
type ${\tt ptr\_ctr}$ within which the pointer field ${\tt ptr}$
stores a pointer to the top of the stack and counter field ${\tt ctr}$
stores the number of times ${\tt Top}$ has been modified. Stack nodes
have a field ${\tt key}$ for the value of the node and a next field
${\tt nxt}$, which is a pointer to the next node of the stack.

\begin{figure}[t]
  \centering

  $\begin{array}[t]{@{}l@{\qquad\qquad\qquad}l}
    \begin{array}[t]{l}
      {\tt \mathbf{data}\ node\ \{ key: Val; nxt: *node \}} \\
      {\tt \mathbf{struct}\ ptr\_ctr\ \{ ptr: *node; ctr: nat \}} \\
      {\tt \mathbf{var}\ Top : *ptr\_ctr} \\
      {\tt \mathbf{initially}\ Top = (null, 0)}
    \end{array} 
    \\
    \\
    \begin{array}[t]{l@{~}l}
      \multicolumn{2}{l}{{\tt push(x)} \sdef } 
      \\  
      h_1: & {\tt n \asgn {\bf new}(Node)\ ;}
      \\
      h_2: & {\tt n . key \asgn x}\ ;
      \\
      & {\bf repeat}
      \\
      h_3: & \quad {\tt t \asgn *Top\ ;}
      \\
      h_4: & \quad {\tt n. nxt \asgn t.ptr}  
      \\
      h_5: & {\bf until}\ {\tt CAS(Top, t, (n, t.ctr+1))} 
    \end{array}
    & 
    \begin{array}[t]{l@{~}l}
      \multicolumn{2}{l}{{\tt pop} \sdef } 
      \\
      & {\bf repeat} \\
      l_1: & \quad {\tt t \asgn *Top\ ;}
      \\
      l_2: & \quad \If {\tt t.ptr  = null}\ {\bf then}\\ 
      l_3: & \qquad {\tt rv \asgn Empty}\  ; 
      \\
      l_4: & \qquad \Return 
      \\
      & \quad \Else  
      \\
      l_5:   & \qquad {\tt tn \asgn t.ptr \rightarrow nxt\ ; } \\
      l_6: & \qquad {\tt rv \asgn\ t.ptr \rightarrow key}   \\
      & \quad \Fi\ ;
      \\
      l_7: & {\bf until}\ {\tt CAS(Top, t, (tn, t.ctr +1))} \ch {} \\
      l_8: & \Return {\tt rv}
    \end{array}
  \end{array}$

  \figrule
  
  \caption{Push and pop operations of Treiber's lock-free stack}
  \label{fig:con-stack}
\end{figure}

Like many lock-free algorithms, the stack is implemented using an
atomic non-blocking compare-and-swap $CAS(ae, \alpha, \beta)$
primitive, which takes an address-valued expression $ae$ and variables
$\alpha$ and $\beta$ as input. If the value at $ae$ is equal to
$\alpha$, the CAS updates the value at $ae$ to $\beta$ and returns
$true$, otherwise (the value at $ae$ is not equal to $\alpha$) the CAS
does not modify anything and returns $false$.

The stack processes may perform either execute a push ${\tt push(x)}$
or a pop operation ${\tt pop}$. Within operation ${\tt push(x)}$, the
executing process uses local variables ${\tt n}$ and ${\tt t}$. The
process executing ${\tt push(x)}$ sets up a new node with value $x$
(lines $h_1$-$h_2$), then executes a try-CAS loop (lines $h_3$-$h_5$),
where the value of global variable ${\tt Top}$ is read and stored in
local variable ${\tt t}$ (line $h_3$), ${\tt n.nxt}$ is set to be the
local ${\tt t}$ (line $h_4$) and a ${\tt CAS(Top, t, (n, t.ctr + 1))}$
is executed (line $h_5$). The loop terminates if the CAS is
successful, otherwise lines $h_3$-$h_5$ are re-executed.  Operation
${\tt pop}$ stores ${\tt Top}$ in local variable ${\tt t}$ (line
$l_1$). If ${\tt t}$ is null (line $l_2$) it returns empty (line
$l_3$), otherwise it stores the value of ${\tt t.ptr \rightarrow nxt}$
in local variable ${\tt tn}$ (line $l_4$) and the value of ${\tt t.ptr
  \rightarrow key}$ in local variable ${\tt rv}$ (line $l_5$). The
loop terminates if the CAS is successful (line $l_6$), otherwise it
re-executes the loop body (i.e., from line $l_1$).

We verify this algorithm for an arbitrarily chosen set of processes
$P$ as follows, assuming that $AS(P)$, $LS(P)$, and $TS(P)$ denote,
respectively, the abstract stack (which is implemented as a sequence),
the coarse-grained abstraction (which is implemented as a linked list)
and the fine-grained concrete implementation (i.e., the Treiber
Stack). The approach we propose is to show that $TS(P)$ implements
(i.e., is a refinement of) $LS(P)$, which itself implements $AS(P)$.
Furthermore, we show that $LS(P)$ is \emph{linearisable} with respect
to $AS(P)$ \cite{Herlihy90,DSW11TOPLAS,VHHS06,Vaf07}.

The idea is that the coarse-grained abstraction $LS(P)$ allows
concurrency, but large parts of the code are sequential. This
simplifies the proof of linearisability because a data refinement is
performed from an abstract data structure representation in $AS(P)$ to
$LS(P)$, as opposed to a data refinement from $AS(P)$ directly to the
fine-grained program $TS(P)$. The data refinement between $AS(P)$ and
$LS(P)$ is performed on an extended specification that includes
histories of invocations and responses \cite{DSW11TOPLAS}, which
allows one to relate the refinement to Herlihy and Wing's original
definition. Verification that $TS(P)$ is a refinement of $LS(P)$ is
performed using a number of decomposition and transformation theorems.

\section{Interval-based framework}
\label{sec:an-interval-based}

We start by presenting our programming syntax. We then present our
interval-based framework, which has similarities to Interval Temporal
Logic \cite{Mos00}. However, the underlying semantics consists of
\emph{complete streams} that map each time to a state, and adjoining
intervals are assumed to be disjoint (adjoining intervals share a
boundary in Interval Temporal Logic) \cite{DDH12}.  To formalise the
behaviour over an interval, we restrict one's view of a stream to the
interval under consideration. However, because streams encode the
complete behaviour over all time, our framework allows properties
outside the given interval to be considered in a straightforward
manner \cite{DH11AVOCS,DH12,DDH12}.

\subsection{Syntax}

There are a number of well-established approaches to modelling program
behaviour, e.g., $Z$, $B$, I/O automata. However, the use of these
formalisms involve a non-trivial translation from the program to the
model. We use a framework in which commands closely resemble program
code, which simplifies the translation. The programs we consider often
have a pointer-based structure and hence, as in separation logic, we
distinguish between variables and addresses in the domains of the
program states \cite{Rey02}.

We assume variable names are taken from the set $Var$, values have
type $Val$, addresses have type $Addr \sdef \nat$, $Addr \cap Var =
\emptyset$ and $Addr \subseteq Val$. A \emph{state} over a set of
locations $VA \in Var \cup Addr$ is a member of $State_{VA} \sdef VA
\fun Val$ (i.e., a total function from $VA$ to $Val$). A \emph{state
  predicate} over $VA$ is a function of type $State_{VA} \fun \bool$.

A location may correspond to a data type with a field identifiers of
type $Field$.  We assume that every data type with $m$ number of
fields is assigned $m$ contiguous blocks of memory \cite{Vaf07} and
use $offset.f \in \nat$ to return the offset of the field $f$.  For
example, for any node of the Treiber Stack, we have $offset.key = 0$
and $offset.nxt = 1$. We assume that expressions have the following
syntax, where $k \in Val$ is a constant, $v \in Var$, $ae$ is an
address-valued expression, $f$ is a field, $e$, $e_1$ and $e_2$ are
expressions, $\uop$ is a unary operators and $\bop$ is a binary
operator, respectively. Both $\uop$ and $\bop$ are abstractions of the
possible unary and binary operators on expressions.
\begin{eqnarray*}
  e & \mathbin{:\!:\!=}  & k  ~\mid ~ v ~\mid ~ 
  \derefe ae
  ~\mid~ ae
  \cdota  f ~\mid ~   
  ae \mapsto f  ~\mid~ \uop e
  ~\mid~ e_1 \bop e_2 
\end{eqnarray*}
The semantics of expressions is given by function $eval$ which is
defined below for a state $\sigma$, where `.' denotes function
application. 
\begin{eqnarray*}
  \begin{array}[t]{rcl}
    eval.k.\sigma & \sdef & k 
    \\
    eval.v.\sigma & \sdef & \sigma.v
    \\
    eval.(\derefe ae).\sigma & \sdef & \sigma.(eval.ae.\sigma) 
    \label{eq:53}
  \end{array}
  & \qquad \qquad \qquad &   
  \begin{array}[t]{rcl}
    eval.(ae \cdota  f).\sigma & \sdef & eval.ae.\sigma +
    offset.f  \\
    \label{eq:57}
    eval.(\uop e).\sigma & \sdef & \uop eval.(e.\sigma )
    \\
    \label{eq:58}
    eval.(e_1 \bop e_2).\sigma & \sdef & eval.(e_1.\sigma )
    \bop eval.(e_2.\sigma)
  \end{array}
\end{eqnarray*}
Hence, the evaluation of a constant (including an address) in any
state is the constant itself and an evaluation of a variable is the
value of the variable in the given state. Evaluation of $\derefe ae$
returns the value of the state at the address that $ae$ evaluates to,
and $ae \cdota f$ returns the address that $ae$ evaluates to plus the
offset of the field $f$. The interpretations over unary and binary
operators are lifted in the normal manner.  We assume that expressions
are well-formed so that their evaluation in every state is always
possible. We define a shorthand
\begin{eqnarray*}
ae \mapsto f & \sdef & 
\deref (ae \cdota f)
\end{eqnarray*}
which returns the value of the state at address $ae \cdota f$.

\begin{definition}[Command syntax]
  \label{def:synt}
  For a state predicate $c$, variable $v$, expression $e$,
  address-valued expression $ae$, set of processes $P \subseteq Proc$,
  set of variables $Z$, and label $l$, the abstract syntax of a
  command is given by $Cmd$ below, where $C, C_1, C_2, C_p \in Cmd$.
  \begin{eqnarray*}
    Cmd & \abssynt & 
    \begin{array}[t]{@{}l@{}}
      \Chaos~~\mid~~ \Idle ~~\mid ~~  [c] ~~\mid~~ v
      \asgn e ~~ \mid~~ 
      ae \hasgn e ~~ \mid~~
      C_1 \ch C_2 ~~\mid~~ 
      C_1 \sqcap C_2
      ~~\mid~~  \\
      C^\omega 
      ~~\mid~~ \Par_{p:P}\ C_p 
      ~~\mid~~ \Context{Z}{C} ~~\mid~~ l : C ~~\mid~~ \Init c \st C 
    \end{array}
  \end{eqnarray*}
\end{definition}
Thus, a command may be $\Chaos$ (which is chaotic and allows any
behaviour), an $\Idle$ command, a guard $[c]$, assignments $v \asgn e$
and $ae \hasgn e$, a sequential composition $(C_1 \ch C_2)$, a
non-deterministic choice $C_1 \sqcap C_2$, an iteration $C^\omega$, a
parallel composition $\Par_{p:P}\ C_p$, a command $C$ executing in a
context $Z$, a command with a label $l$, or a command that executes
from an initial state that satisfies $c$.  

As an example, consider the simple sequential program in
\reffig{fig:abs-stack-1}, which demonstrates use of the syntax in
\refdef{def:synt}. For the program in \reffig{fig:abs-stack-1}, we
assume that $S$ is a sequence of values. 
\begin{example}
  \label{ex:abs-stack}
  Operation $SPush(x)$ updates the sequence $S$ by appending $x$ to
  the start of $S$, $SEmpty(arv)$ models a pop operation that returns
  $Empty$, and $SDoPop(arv)$ models a pop operation on a non-empty
  stack, which checks to see if $S$ is non-empty, sets the return
  value $arv$ to be the value of $S.0$, and removes the first element
  by setting the new value of $S$ to be the tail of $S$. An single
  stack operation is modelled by $SPP$, which consists of a
  non-deterministic choice between push operations that
  non-deterministically insert one of the possible key values onto the
  stack, or a pop operation. A possibly infinite number of stack
  operations starting from an empty stack is modelled by $SS$, which
  ensures that variable $S$ is in the context of the program, that the
  initial value of $S$ is $\aang{}$, and iterates over $SPP$. \hfill
  ${}_\clubsuit$
\end{example}

\begin{figure}[t]
  \centering
  $\begin{array}[t]{@{}r@{~~}c@{~~}l@{}}
    
    SPush(x)
    & \sdef &
    S \asgn
    \aang{x} \cat S
    \\
    \\
    SEmpty(arv)
    & \sdef &
    \begin{array}[c]{@{}l}
      \begin{array}[c]{@{}l@{~}l@{}}
        & \Guard{S = \aang{}} \ch {}
        arv
        \asgn Empty
      \end{array}
    \end{array}
    \smallskip \\
    SDoPop(arv)
    & \sdef &
    \begin{array}[c]{@{}l@{}}
      \begin{array}[c]{@{}l@{~}l@{}}
        & \Guard{S \neq \aang{}} \ch 
        arv
        \asgn S.0 \ch 
        S
        \asgn tail.S
      \end{array}
    \end{array}
    \smallskip \\
    SPop(arv)
    & \sdef & SEmpty(arv)
    \sqcap SDoPop(arv)
    \\
    \\
    SPP
    & \sdef & 
    \left(\textstyle\bigsqcap_{x : Val}\ SPush(x)\right)
    \sqcap
    SPop(arv)
    \\
    SS
    & \sdef &
    \Context{S}{\begin{array}[c]{@{}l@{}}
        \Init\ S = \aang{} \st  
        SPP
        ^\omega
      \end{array}}
  \end{array}
  $
  \figrule

  \caption{An abstract (sequential) stack specification}
\label{fig:abs-stack-1}
\end{figure}

As a more complicated example, we now consider the program in
\reffig{fig:treibermodel}, which is a formal model for the Treiber
Stack from \reffig{fig:con-stack}.  We model a ${\tt ptr\_ctr}$ a
structure by type $Ptr\_Ctr \sdef Addr \times \nat$. For $(pp, cc) \in
Ptr\_Ctr$, we define functions $ptr.(pp, cc) \sdef pp$ and $ctr.(pp,
cc) \sdef cc$. The modification counter $ctr.(\derefe Top)$ is used to avoid the
ABA problem (see \refex{ex:ABA}). Unlike a ${\tt ptr\_ctr}$ structure,
which is assumed to be accessed atomically, list nodes are objects
whose fields may be accessed independently, and hence the $key$ and
$nxt$ fields are modelled as having different addresses.
\begin{example}
  The (interval-based) semantics of our language is given in
  \refsec{sec:behaviour}. The behaviours of both $CASOK_p$ and
  $CASFail_p$ are given in \refex{ex:CAS-beh} and the behaviour of
  $newNode(p, n_p)$ is formalised in \refex{ex:newnode}. 
  Both commands require the use of permissions to control the
  atomicity (see \refsec{sec:readwr-perm-interf}). 
  
  The Treiber Stack consists global address $Top \in Addr$ and a set
  of free addresses $FAddr \subseteq Addr$. The initial value of $Top$
  is $(null, 0)$ and all addresses different from $Top$ are free. We
  assume the existence of a garbage collector that gathers free
  pointers and returns them to $FAddr$. 

  Execution of ${\tt push(x)}$ by process $p$ is modelled by $Push(p,
  x)$. Within the push operation, we split the label $h_5$ into $hf_5$
  and $ht_5$ to distinguish between execution of the failed and
  successful branches of the CAS, respectively.  Commands $TryPush(p)$
  and $DoPush(p)$ model executions of the loop body that fail and
  succeed in performing the CAS at $h_5$, respectively.  Within
  $Push(p, x)$, because we use an ${}^\omega$ iteration, command
  $TryPush(p)$ may be executed a finite (including zero) number of
  times, after which $DoPush(p)$ is executed. However, it is also
  possible for $TryPush(p)$ to be executed an infinite number of times
  in which case $DoPush(p)$ never executes. Such behaviour is
  allowable for the Treiber Stack, which only guarantees lock-freedom
  of its concurrent processes \cite{CD09,CD07,Don06ICFEM}.

  For the pop operation, we use $ToCAS(p)$ to model the statements
  executed by process $p$ from the beginning of the loop up to the CAS
  at $l_7$ (via a failed test at $l_2$).  As in $Push(p,x)$, label
  $l_2$ is split into $lf_2$ and $lt_2$, and $l_7$ is split into
  $lf_2$ and $lt_5$ in the pop operation. The $TryPop(p)$ and
  $DoPop(p)$ commands model executions of the loop body that fail and
  succeed in executing the compare and swap at $l_7$,
  respectively. Command $Empty(p)$ models an execution of the pop
  operation that returns empty. The $Pop(p)$ operation consists of a
  finite or infinite iteration of $TryPop(p)$ followed by an execution
  of either $Empty(p)$ or $DoPop(p)$.

  The program is modelled as a
  parallel composition of processes, where each process repeatedly
  chooses either a pop or push operation non-deterministically, then
  executes the operation. \hfill ${}_\clubsuit$
\end{example}

  \begin{figure}[t]
    \centering

    $\begin{array}[t]{@{}r@{~~}c@{~~}l@{}} Setup(p,x) &
      \sdef &
      \begin{array}[c]{@{}l@{}}
        h_1 : newNode(p, n_p) 
        \ch h_2 : (n_p \cdota key) \hasgn x 
      \end{array}
      \smallskip \\ 
      TryPush(p) & \sdef &
      \begin{array}[c]{@{}l@{}}
        h_3: t_p \asgn \derefe Top \ch h_4: (n_p \cdota nxt) \hasgn ptr.t_p
        \ch 
        hf_5: CASFail_p(Top, t_p)
      \end{array}
      \smallskip \\ 
      DoPush(p) & \sdef &
      \begin{array}[c]{@{}l@{}}
        h_3: t_p \asgn \derefe Top \ch 
        h_4: (n_p \cdota nxt) \hasgn ptr.t_p \ch 
        ht_5: CASOK_p(Top, t_p, (n_p, ctr.t_p + 1)) 
      \end{array}
      \smallskip \\
      Push(p,x) & \sdef & Setup(p,x) \ch TryPush(p)^\omega \ch DoPush(p)
      \\
      \\
      ToCAS(p) & \sdef &
      \begin{array}[c]{@{}l@{}}
        l_1: t_p \asgn \derefe Top \ch lf_2: \Guard{ptr.t_p \neq null} \ch 
        l_5: tn_p \asgn ptr.t_p\mapsto nxt \ch 
        l_6: rv_p \asgn ptr.t_p\mapsto key
      \end{array}
      \smallskip \\
      TryPop(p) & \sdef &  ToCAS(p) \ch lf_7 : CASFail_p(Top, t_p)  
      \smallskip \\ 
      Empty(p, rv_p) & \sdef & l_1: t_p \asgn  Top \ch lt_2: \Guard{ptr.t_p = null} \ch
      l_5: rv_p \asgn Empty
      \smallskip \\
      DoPop(p, rv_p) & \sdef &  ToCAS(p) \ch lt_7 : CASOK_p(Top, t_p, (tn_p,
      ctr.t_p + 1)) 
      \smallskip \\
      Pop(p, rv_p)  & \sdef & TryPop(p)^\omega \ch (Empty(p, rv_p) \sqcap
      DoPop(p, rv_p))
      \\
      \\
      TPP(p) & \sdef & pidle: \Idle \ch \left( \bigsqcap_{x : Val}\
        Push(p,x)\right) \sqcap 
      Pop(p, rv_p) \smallskip\\
      TInit & \sdef & \derefe Top = (null, 0) \land FAddr \subseteq Addr  \bs
      \{Top\} \smallskip \\
      TS(P) & \sdef &  \Context{Top, FAddr}{
        \Init  TInit \st 
        \Par_{p : P}\ 
        \Context{t_p,n_p, tn_p, rv_p}{TPP(p)^\omega}}\label{eq:TS}
    \end{array}$

    \figrule
    \caption{Formal model of the Treiber stack}
    \label{fig:treibermodel}
  \end{figure}

\subsection{Intervals}

A (discrete) {\em interval} (of type $Intv$) is a contiguous set of
integers (of type $Time \sdef \integer$), i.e., we define
\begin{eqnarray*}
Intv & \sdef & \{\Delta \subseteq Time \mid \all
t, t' : \Delta \st \all u : Time @ t \leq u \leq t' \imp u \in
\Delta\}
\end{eqnarray*}
We let $\lub.\Delta$ and $\glb.\Delta$ denote the \emph{least upper}
and \emph{greatest lower} bounds of an interval $\Delta$,
respectively, and define $\lub.\emptyset \sdef -\infty$ and
$\glb.\emptyset \sdef \infty$. If the size of $\Delta$ is infinite and
$\glb.\Delta \in \integer$, then $\lub.\Delta = \infty$ (i.e,. is not
a member of $\integer$) and if $\Delta$ is infinite and $\lub.\Delta
\in \integer$ then $\glb.\Delta = - \infty$. The \emph{length} of a
non-empty interval $\Delta$ is given by $\ell.\Delta \sdef \lub.\Delta
- \glb.\Delta$, and we define the length of an empty interval
$\emptyset$ to be $\ell.\emptyset \sdef 0$. We define the following
predicates on intervals. 
\begin{eqnarray*}
  \Inf.\Delta & \sdef & \mathsf{lub}.\Delta = \infty \\
  \Fin.\Delta & \sdef & \neg \Inf.\Delta \\
  \Empty.\Delta & \sdef & \Delta = \emptyset
\end{eqnarray*}
Hence, $\Inf.\Delta$, $\Fin.\Delta$ hold iff $\Delta$ has an infinite
and finite least upper bound, respectively, and $\Empty.\Delta$ holds
iff $\Delta$ is empty.

We must often reason about two \emph{adjoining} intervals, i.e.,
intervals that immediately precede or follow a given interval.  For
$\Delta, \Delta' \in Intv$, we define
\begin{eqnarray*}
  \Delta \adjoins \Delta' & \sdef &
  \begin{array}[t]{@{}l}
  \Delta \neq \emptyset \land \Delta' \neq \emptyset \imp
    (\lub.\Delta < \glb.\Delta') 
    \land (\Delta \cup \Delta' \in Intv)
  \end{array}
\end{eqnarray*}
Thus, $\Delta \adjoins \Delta'$ holds if and only if $\Delta'$
immediately follows $\Delta$
and 
$\Delta$ and $\Delta'$ are disjoint. Furthermore, by conjunct $\Delta
\cup \Delta' \in Intv$, the union of $\Delta$ and $\Delta'$ must be
contiguous. Note that both $\Delta \adjoins \emptyset$ and $\emptyset
\adjoins \Delta$ hold trivially. 

\subsection{Interval predicates}

We aim to reason about the behaviours of a program over its interval
of execution and hence define an interval-based semantics for the
language in \refdef{def:synt}. In particular, we define interval
predicates \cite{DH12MPC,DH12,DH12iFM,DDH12}, which map an interval
and a stream of states to a boolean. The stream describes the
behaviour of a program over all time and an interval predicate
describes the behaviour over the given interval.

A \emph{stream} of behaviours over $VA \subseteq Var \cup Addr$ is
given by the total function $Stream_{VA} \sdef Time \fun State_{VA}$,
which maps each possible time to a state over $V$ and $A$. To reason
about specific portions of a stream, we use \emph{interval
  predicates}, which have type $IntvPred_{VA} \sdef Intv \fun \mcP
Stream_{VA}$. 
As with state expressions, we assume pointwise lifting of operators on
stream and interval predicates. 
We assume pointwise lifting of operators on stream and interval
predicates in the normal manner, e.g., if $g_1$ and $g_2$ are interval
predicates, $\Delta$ is an interval and $s$ is a stream, we have $(g_1
\land g_2).\Delta.s = (g_1.\Delta.s \land g_2.\Delta.s)$.  When
reasoning about properties of programs, we would like to state that
whenever a property $g_1$ holds over any interval $\Delta$ and stream
$s$, a property $g_2$ also holds over $\Delta$ and $s$. Hence, we
define universal implication for $g_1, g_2 \in IntvPred$ as
$$
g_1 \entails g_2 \sdef \all \Delta : Intv, s : Stream \st g_1.\Delta.s
\imp g_2.\Delta.s
$$ 
We say $g_1\equiv g_2$ holds iff both $g_1 \entails g_2$ and $g_2
\entails g_1$ hold.

We define two trivial interval predicates 
\begin{eqnarray*}
\True.\Delta.s & \sdef &  true \\ 
\False.\Delta.s &  \sdef & false
\end{eqnarray*}
Like Interval Temporal Logic \cite{Mos00}, for an interval predicate
$g$, we say $(\Box g).\Delta.s$ holds iff $g$ holds in each
subinterval of $\Delta$ in stream $s$, say $(\Diamond g).\Delta.s$
holds iff $g$ holds in some subinterval of $\Delta$, and say
$\prev.g.\Delta.s$ holds iff there is some immediately preceding
interval of $\Delta$ within which $g$ holds in $s$. More formally, we
define:
\begin{eqnarray*}
  (\Box
  g).\Delta.s &\sdef& \all \Delta' : Intv \st \Delta' \subseteq \Delta \imp
  g.\Delta'.s \\
  (\Diamond g).\Delta.s & \sdef & \exists \Delta' : Intv \st \Delta'
  \subseteq \Delta \land g.\Delta'.s\\
  \prev.g.\Delta.s &\sdef& \exists \Delta' \st \Delta'
  \adjoins \Delta \land g.\Delta'.s 
\end{eqnarray*}

The \emph{chop} operator `;' is a basic operator, where $(g_1 \ch
g_2).\Delta$ holds iff either interval $\Delta$ may be split into two parts
so that $g_1$ holds in the first and $g_2$ holds in the second, or the
least upper bound of $\Delta$ is $\infty$ and $g_1$ holds in
$\Delta$. Thus, we define
\begin{eqnarray*}
  (g_1 \ch g_2).\Delta.s & \sdef &
  \begin{array}[t]{@{}l@{}}
    \left(\begin{array}[c]{@{}l@{}}
        \exists \Delta_1, \Delta_2 : Intv  \st
        (\Delta = \Delta_1
        \cup \Delta_2) \land (\Delta_1 \adjoins \Delta_2) 
        \land g_1.\Delta_1.s \land g_2.\Delta_2.s
      \end{array}\right) \lor \\
    (\lub.\Delta = \infty \land g_1.\Delta.s)
  \end{array}
\end{eqnarray*}
Note that $\Delta_1$ may be empty in which case $\Delta_2 = \Delta$,
and similarly if $\Delta_2$ is empty then $\Delta_1 = \Delta$. In the
definition of chop, we allow the second disjunct $\lub.\Delta = \infty
\land g_1.\Delta$ to allow for $g_1$ to model an infinite (divergent
or non-terminating) program. 

We define the possibly infinite iteration of an interval predicate $p$
as follows.  We assume that interval predicates are ordered using
universal implication `$\entails$' and that the stream and interval
are implicit on both sides of the definition.
\begin{eqnarray*}
  g^\omega & \sdef & \nu z \st (g \ch z) \lor
  \Empty 
\end{eqnarray*}
Thus, $g^\omega$ is a greatest fixed point that defines either finite
or infinite iteration of $g$ \cite{DHMS12}. 

Properties that hold over a larger interval may be decomposed into
properties of the subintervals if the interval predicate that
formalises the property splits and/or joins \cite{Hay08,DH12}. We also
find it useful to reason about properties that widen, where a property
holds over a larger interval if it holds over any subinterval.
\begin{definition}[Splits, Joins, Widens]
  Suppose $g$ is an interval predicate. We say
  \begin{itemize}
  \item $g$ \emph{splits} iff $g \entails \Box g$, i.e., if $g$ holds
    over an interval $\Delta$, then $g$ must hold over all
    subintervals of $\Delta$,
  \item $g$ \emph{joins} iff $(g \ch g^\omega) \entails g$, i.e., for any
    interval $\Delta$ if $g$ iterates over $\Delta$, then $g$ must
    hold in $\Delta$, and
  \item $g$ \emph{widens} iff $\Diamond g \entails g$, i.e., if $g$
    holds over some subinterval of $\Delta$, then $g$ holds over the
    interval $\Delta$.
  \end{itemize}
\end{definition}

\subsection{Evaluating state predicates over intervals}
\label{sec:expr-eval-over}

The values of an expression $e$ at the left and right ends of an
interval $\Delta$ with respect to a stream $s$ are given by
$\ola{e}.\Delta.s$ and $\ora{e}.\Delta.s$, respectively, which are
defined as
\begin{eqnarray*}
  \ola{e}.\Delta.s & \sdef & eval.e.(s.(\glb.\Delta))\\
  \ora{e}.\Delta.s & \sdef & eval.e.(s.(\lub.\Delta))
\end{eqnarray*}
Note that $\ola{e}.\Delta.s$ is undefined if $\glb.\Delta = - \infty$
and $\ora{e}.\Delta.s$ is undefined if $\inf.\Delta$. Similarly, we
use the following notation to denote that $c$ holds at the beginning
and end of the given interval $\Delta$ with respect to a stream $s$,
respectively.
\begin{eqnarray*}
  \ola{c}.\Delta.s & \sdef & \glb.\Delta \notin \{-\infty, \infty\} \land
  c.(s.(\glb.\Delta))\\
  \ora{c}.\Delta.s & \sdef & \lub.\Delta \notin \{-\infty, \infty\} \land
  c.(s.(\lub.\Delta))
\end{eqnarray*}
It is often useful to specify that a state predicate holds on a point
interval. Hence we define
\begin{eqnarray*}
  \ceil{c}.\Delta.s & \sdef & \exists t : Time \st \Delta =
  \{t\} \land c.(s.t)
\end{eqnarray*}

Two useful operators for evaluating state predicates over an interval
are $\Always c$ and $\Eventually c$, which ensure that $c$ holds in
\emph{all} and \emph{some} state of the given stream within the given
interval, respectively. Thus, for an interval $\Delta$ and stream $s$,
we define:
$$
\begin{array}{rcl}
  (\Always c).\Delta.s & \sdef &  \all t : \Delta \st c.(s.t)
  \\
  (\Eventually c).\Delta.s & \sdef & \exists t : \Delta 
  \st c.(s.t)
\end{array}
$$

As demonstrated using \refex{ex:ABA} below, operator $\Eventually$ may
be used to give a straightforward formalisation of the ABA problem.
\begin{example}[The ABA problem]
  \label{ex:ABA}
  To effectively implement CAS-based operations, where ${\tt CAS(ae,
    \alpha, \beta)}$ is executed by a process $p$, operations are
  often structured so that the value at address $ae$ is stored as
  value of variable $\alpha$, some processing is performed on $\alpha$
  and the result stored in a variable $\beta$. Then a ${\tt CAS(ae,
    \alpha, \beta)}$ is executed to attempt updating the value at $ae$
  to $\beta$. If the CAS fails, then the environment must have
  modified the value at $ae$ since it was last read as $\alpha$.
  However, it is possible for a CAS to succeed (when it should have
  failed) if the environment exhibits so-called ABA-like behaviour
  \cite{CDG05}, where the value at $ae$ changes from say $\alpha$ to
  $\beta$ and then back to $\alpha$. ABA-like behaviour of an
  expression $e$ is formalised using the following interval predicate:
  \begin{eqnarray*}
    ABA.e &  \sdef  & 
    \exists k_1, k_2 \in Val \st 
    \begin{array}[c]{@{}l}
       k_1 \neq k_2 \land 
       (\Eventually( e = k_1) \ch
      \Eventually ( e = k_2) \ch \Eventually
      ( e = k_1))
    \end{array}
  \end{eqnarray*}
  Hence $(ABA.e).\Delta$ holds iff $\Delta$ can be partitioned into
  three adjoining intervals $\Delta_1$, $\Delta_2$ and $\Delta_3$ such
  that the value at $e$ is $k_1$ sometime within $\Delta_1$, then
  changes to $k_2$ sometime within $\Delta_2$ and back to $k_1$
  sometime within $\Delta_3$. Note that $ABA.e$ allows the value of
  $e$ to change several times when changing from $k_1$ to $k_2$ then
  to $k_1$. \hfill ${}_\clubsuit$
\end{example}

We say a location $va$ is \emph{stable} at time $t$ in stream $s$
(denoted $\mathsf{stable\_at}.va.t.s$) iff the value of $va$ in $s$ at
time $t$ does not change from its value at time $t-1$,
i.e., 
\begin{eqnarray*}
  \mathsf{stable\_at}.va.t.s & \sdef & s.t.va = s.(t-1).va
\end{eqnarray*}
Variable $va$ is stable over an interval $\Delta$ in a stream $s$
(denoted $\mathsf{stable}.va.\Delta.s$) iff the value of $va$ is stable
at each time within $\Delta$. A set of locations $VA$ is \emph{stable}
in $\Delta$ (denoted $\mathsf{stable}.VA.\Delta$) iff each variable in
$VA$ is stable in $\Delta$. Thus, we define:
\begin{eqnarray*}
  \mathsf{stable}.va.\Delta.s & \sdef & 
  \all t : \Delta \st \mathsf{stable\_at}.va.t.s
  \\
  \mathsf{stable}.VA.\Delta & \sdef &  \all va : VA \st \mathsf{stable}.va.\Delta
\end{eqnarray*}
Note that every location is stable in an empty interval and the empty
set of locations is stable in any interval, i.e., both
$(\stable.VA).\emptyset$ and $\stable.\emptyset.\Delta$ hold trivially.

\subsection{Read/write permissions and interference}
\label{sec:readwr-perm-interf}
As we shall see in \refsec{sec:behaviour}, the behaviour a process
executing a command is formalised by an interval predicate, and the
behaviour of a parallel execution over an interval is given by the
conjunction of these behaviours over the same interval. Because the
state-spaces of the two processes are potentially overlapping, there
is a possibility that a process writing to a variable conflicts with a
read or write to the same variable by another process. To ensure that
such conflicts do not take place, we follow Boyland's idea of mapping
variables to a {\em fractional permission} \cite{Boy03}, which is {\em
  rational} number between $0$ and $1$.  A process has write-only
access to a variable $v$ if its permission to access $v$ is $1$, has
read-only access to $v$ if its permission to access $v$ is above $0$
but below $1$, and has no access to $v$ if its permission to access
$v$ is $0$. Note that we restrict access so that a process may not
have both read and write permission to a variable. Because a
permission is a rational number, read access to a variable may be
split arbitrarily (including infinitely) among the processes of the
system. However, at most one process may have write permission to a
variable in any given state. Note that the precise value of the read
permission is not important, i.e., there is no notion of priority
among processes based on the values of their read
permissions. 

We assume that every state contains a \emph{permission} variable $\Pi$
whose value in state $\sigma \in State_V$ is a function of type
\[ 
V \fun Proc \fun \{n : \rat | 0 \leq n \leq 1\}
\]
Note that it is possible for permissions to be distributed differently
within states $\sigma$, $\sigma'$ even if the values of the standard
variables in $\sigma$ and $\sigma'$ are identical, i.e., it is
possible to get $\sigma.\Pi \neq \sigma.\Pi'$ even if $(\{\Pi\} \ndres
\sigma) = (\{\Pi\} \ndres \sigma')$ holds, where `$\ndres$' denotes
the domain
anti-restriction.  
\begin{definition}[Permission]
  \label{def:read-write-perm}
  A process $p \in Proc$ has {\em write-permission} to variable $va$ in
  state $\sigma$ iff $\sigma.\Pi.va.p = 1$, has {\em read-permission}
  to $va$ in $\sigma$ iff $0 < \sigma.\Pi.va.p < 1$, and has {\em
    no-permission} to access $va$ in $\sigma$ iff $\sigma.\Pi.va.p = 0$
  holds.
\end{definition}

We introduce the following shorthands, which define state predicates
for a process $p$ to have read-only and write-only permissions to a
variable $va$, and to be denied permission to access $va$.
$$
\begin{array}[t]{rclrclrcl}
  \mcR.va.p &\sdef& 0 < \Pi.va.p < 1   \qquad\quad 
  \mcW.va.p &\sdef& \Pi.va.p = 1  \qquad\quad \mcD.va.p &\sdef& \Pi.va.p = 0
\end{array}
$$
In the context of a stream $s$, for any time $t \in \integer$, process
$p$ may only write to and read from $va$ in the transition step from
$s.(t-1)$ to $s.t$ if $\mcW.va.p$ and $\mcR.va.p$,
respectively. 
Thus, $\mcW.va.p$ does not grant process $p$ permission to write to
$va$ in the transition from $s.t$ to $s.(t+1)$ (and similarly
$\mcR.va.p$).

It is straightforward to use fractional permissions to characterise
interference within a set of processes. For $P \subseteq Proc$ and $VA
\subseteq Var \cup Addr$, we define
\begin{eqnarray*}
  \mcI.VA.P& \sdef &  \exists va: VA, p : Proc \bs P \st
  \mcW.va.p
\end{eqnarray*}
which states that there may be interference on locations in $VA$ from
a process different from those in $P$. We use $\mcI.VA.p$ to denote
$\mcI.VA.\{p\}$ for a singleton set $\{p\}$ and $\mcI.va.P$ to denote
$\mcI.\{va\}.P$ for a singleton set $\{va\}$. This characterisation of
interference is particularly useful in this paper where we use rely
conditions (see \refsec{sec:rely-conditions}) to formalise the
behaviour of the environment. For example, to state that environment
of $p$ does not modify a variable $v$, we simply ensure that the rely
condition implies $\Always \neg \mcI.v.p$.

Such notions are particularly useful because we aim to develop
rely/guarantee-style reasoning, where we use rely conditions to
characterise the behaviour of the environment. To state an assumption
that there is no interference on $va$ during the execution of a
command that uses $va$, one may introduce $\Always \neg \mcI.va.p$ as
a rely condition to the command (see \refex{ex:using-nonint}).

We define some conditions on streams using fractional permissions that
formalise our underlying assumptions on access permissions. 
\begin{description}
\item[\textbf{HC1}] If no process has write access to $va \in Var \cup
  Addr$ within an interval, then the value of $va$ does not change
  within the interval, i.e.,
  \begin{eqnarray*}
    \Always (\all p : Proc \st \neg \mcW.va.p) & \imp & \mathsf{stable}.va
  \end{eqnarray*}
\item[\textbf{HC2}] The sum of the permissions of the processes on any
  location $va$ is at most $1$, i.e.,
  \begin{eqnarray*}
    \Always ((\displaystyle\Sigma_{p
      \in Proc} \Pi.va.p) & \leq &  1)
  \end{eqnarray*}

\end{description}
For the rest of this paper, we implicitly assume that these conditions
hold. Alternatively, one could make the conditions explicit by adding
them to the rely conditions of the programs under consideration.

Using these healthiness conditions, we obtain a number of
relationships between the values of a variable and the permissions
that a process has to access the variable. For example, we may prove
that if a process has read permission to a variable, then no process
has write permission to the variable. Furthermore, if over an interval
no process has write permission to a variable, then the variable must
be stable over the interval.
\begin{lemma} Both of the following hold for any location $va \in Var
  \cup Addr$
  \begin{eqnarray}
    \label{eq:67}
    \Always (
    (\exists p : Proc \st \mcR.va.p)
    & 
    \imp 
    &  (\all p : Proc \st \neg \mcW.va.p)) 
    \\
    \label{eq:60}
    \Always (\all x: Proc \st \neg \mcW.va.p)  &
    \entails & \mathsf{stable}.va
  \end{eqnarray}
\end{lemma}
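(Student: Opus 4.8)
**

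The plan is to prove each of the two statements \refeq{eq:67} and \refeq{eq:60} separately, since both are direct consequences of the healthiness conditions \textbf{HC1} and \textbf{HC2} together with the definitions of the permission predicates $\mcR$, $\mcW$ and $\mcD$. Both statements are universally quantified over an interval $\Delta$ and stream $s$ (via the definition of $\entails$ and the pointwise lifting of $\Always$), so I would fix an arbitrary $\Delta$ and $s$ throughout and reason pointwise.

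For \refeq{eq:67}, the claim is that at any time where some process has read permission to $va$, no process has write permission to $va$. Unfolding $\Always$, it suffices to fix an arbitrary $t \in \Delta$, assume $\mcR.va.p$ holds at $s.t$ for some process $p$ --- i.e. $0 < \Pi.va.p < 1$ --- and show $\neg \mcW.va.q$ for every process $q$. I would argue by contradiction: suppose some $q$ satisfies $\mcW.va.q$, i.e. $\Pi.va.q = 1$. If $q = p$ this immediately contradicts $\Pi.va.p < 1$. If $q \neq p$, then the two permissions $\Pi.va.p > 0$ and $\Pi.va.q = 1$ sum to a value strictly greater than $1$, contradicting \textbf{HC2}, which asserts $\Sigma_{r \in Proc} \Pi.va.r \leq 1$ at every point (all permissions are non-negative by \refdef{def:read-write-perm}, so adding the remaining terms only increases the sum). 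Either way we reach a contradiction, so no process can have write permission, establishing \refeq{eq:67}.

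For \refeq{eq:60}, the claim is essentially a restatement of \textbf{HC1}: if over the interval no process has write permission to $va$, then $va$ is stable. The antecedent $\Always(\all p : Proc \st \neg \mcW.va.p)$ is exactly the hypothesis of \textbf{HC1} (up to the harmless renaming of the bound variable, which is written $x$ in the statement but ranges over $Proc$), and its conclusion $\mathsf{stable}.va$ is exactly the consequent. Thus \refeq{eq:60} follows directly by appeal to \textbf{HC1}, which we have assumed holds implicitly on all streams under consideration.

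I do not anticipate a genuine obstacle here, as both parts are short deductions from the assumed healthiness conditions; the only point requiring a little care is the contradiction step in \refeq{eq:67}, where one must invoke non-negativity of permissions (from \refdef{def:read-write-perm}) to justify that the partial sum $\Pi.va.p + \Pi.va.q$ is a lower bound for the full sum $\Sigma_{r \in Proc}\Pi.va.r$ that \textbf{HC2} bounds by $1$. The remaining bookkeeping --- unfolding $\Always$ to a pointwise statement and discharging the $q = p$ versus $q \neq p$ cases --- is routine.
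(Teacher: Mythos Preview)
Your proposal is correct and matches the paper's intent: the paper does not give an explicit proof of this lemma, but states it immediately after introducing \textbf{HC1} and \textbf{HC2} with the remark ``Using these healthiness conditions, we obtain a number of relationships\ldots'', so the intended argument is precisely the one you outline --- \refeq{eq:67} from \textbf{HC2} together with non-negativity of permissions, and \refeq{eq:60} directly from \textbf{HC1}.
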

One may also define additional properties. For instance, a set of
locations $VA$ may be declared to be local within a set of set of
processes $P$ using the following interval predicate.
\begin{eqnarray*}
  Local.VA.P & \sdef & 
  \all va : VA \st \Always \left(\begin{array}[c]{@{}l@{}}
    (\all q : Proc \bs P \st \mcD.va.q) \land \\
    ((\all p: P \st \neg \mcW.v.p)  \imp (\all p: P \st \mcR.v.p))
  \end{array}\right)
\end{eqnarray*}
Hence, if $Local.VA.P$, then no process outside of $P$ has permission
to accesss the locations in $VA$ and furthermore, if no process in $P$
has write permission to a location $va$ in $VA$, then all processes in $P$
automatically obtain read permission to $va$.

It is important to be able to determine the set of all variables and
addresses that must be accessed in order to evaluate an expression in
a given state. The set of variables that are accessed is state
dependent because an expression may involve pointers and the address
that the pointer points to may vary with the state. Hence, we define a
function $accessed$, which returns the set of locations (i.e.,
variables and addresses) accessed. Below, we assume that $k \in Val$,
$v \in Var$, $a \in Addr$, $f \in Field$, $t \in Time$, $e_1$, and
$e_2$ are expressions and $s$ is a stream.
$$
\begin{array}[t]{l@{\qquad\quad}l}
  \begin{array}[t]{rcl}
  accessed.k.\sigma & \sdef &  \{\} 
  \\ 
  accessed.v.\sigma & \sdef &  \{v\} 
  \\
  accessed.(\derefe ae).\sigma & \sdef &  \{eval.ae.\sigma\} \cup accessed.ae.\sigma
\end{array}
& 
\begin{array}[t]{rcl}
  accessed.(ae \cdota f).\sigma & \sdef & accessed.ae.\sigma
  \\
  accessed.(\bop e).\sigma & \sdef & accessed.e.\sigma
  \\
  accessed.(e_1 \bop e_2).\sigma & \sdef &
  \begin{array}[t]{@{}l@{}}
    accessed.e_1.\sigma
    \cup {}\\
    accessed.e_2.\sigma
  \end{array}
\end{array}
\end{array}
$$

We define the following predicate, which states that process $p$ has
permission to read each of the locations required to evaluate
expression $e$ in state $\sigma$.
\begin{eqnarray*}
  ReadAllLocs.e.p.\sigma & \sdef & \all va : accessed.e.\sigma \st \mcR.va.p
\end{eqnarray*}
The following interval predicates state that process $p$ writes to a
location of $e$ within interval $\Delta$, that there is no
interference on $e$ within $\Delta$, and that all other processes are
denied access to the locations in $e$.
\begin{eqnarray*}
  (WriteSomeLoc.e.p).\Delta.s & \sdef & \Eventually (\exists va \st va
  \in accessed.e \land \mcW.va.p)
  \\
  (IntFree.e.p).\Delta.s & \sdef & \Always (\all va  \st va \in
  accessed.e \imp \neg \mcI.va.p)
  \\
  (OnlyAccessedBy.e.p).\Delta.s & \sdef & \Always (\all va \st \all q : Proc \bs \{p\}\st 
  va \in accessed.e \imp \mcD.va.q)
\end{eqnarray*}

\section{Interval-based semantics of parallel programs}
\label{sec:interv-based-semant}

In \refsec{sec:behaviour}, we present our interval-based semantics of
the programming model, and in \refsec{sec:enforced-conditions} we
present the concept of enforced properties, which enable a behaviour
of a command to be constrained. We present a theory for refining
program behaviour in \refsec{sec:behaviour-refinement}.

\subsection{Semantics of commands}
\label{sec:behaviour}

We use the following interval predicates to formalise the semantics of
the commands in \refdef{def:synt}, where $p$ is a process, $va \in Var
\cup Addr$ is a location and $Z \subseteq Var \cup Addr$ is a set of
locations, $e$ is an expression, $k$ a constant and $c$ is a state
predicate.
\begin{eqnarray*}
  \idle_{p,Z} & \sdef & \all va : Z \st \Always \neg \mcW.va.p
  \\ 
  \Eval_{p,Z}(e,k) & \sdef & \Eventually (e = k \land
  ReadAllLocs.e.p) \land  \idle_{p,Z} \\
  \Update_{p,Z}(va,k) & \sdef &
  \left\{\begin{array}[c]{@{}l@{\qquad}l@{}}
    \idle_{p, Z \bs\{va\}}
    \land \neg \Empty \land \Always(va = k \land \mcW_p.va) &
    \textrm{if $va \in Var$} \\
    \idle_{p, Z \bs\{va\}}
    \land \neg \Empty \land \Always((\deref va) = k \land \mcW_p.va) &
    \textrm{if $va \in Addr$} 
  \end{array} \right.
\end{eqnarray*}
Hence, $\idle_{p,Z}$ states that process $p$ does not write to any of
the locations in $Z$ within the given interval. Interval predicate
$\Eval_{p,Z}(e,k)$ models the evaluation of expression $e$ to a value
$k$ by process $p$ in context $Z$, where $\Eval_{p,Z}(e,k).\Delta.s$
holds iff there is a state $s.t$ (for $t \in \Delta$) such that the
value of $e$ in state $s.t$ is $k$ and $p$ can read each of the
locations needed to evaluate $e$ in $s.t$. Furthermore, $p$ does not
write to any of the variables of the context $Z$ within $\Delta$.
Interval predicate $\Update_{p,Z}(va,k)$ models the modification of
location $va$ to value $k$ by process $p$ executing in a context $Z$,
where $\mathsf{update}_{p,Z}(va,k).\Delta.s$ holds iff within $s$,
throughout $\Delta$, the value of $va$ is $k$, $p$ has write
permission to $va$ and does not have write permission to any other
variable in $Z$. Additionally, $\neg \Empty$ holds to ensure that $va$
is actually updated --- this is necessary because $\Always c$
trivially holds for an empty
interval. 

We obtain the following lemma, which relates write permissions
to expression evaluation and stability of a variable.
\begin{lemma}
  Suppose $V, Z \subseteq Var$, $p \in Proc$, $e$ is an expression and
  $k \in Val$ is a constant. Then each of the following hold:
  \begin{eqnarray}
    \label{eq:widen-1}
    \begin{array}[c]{@{}l@{}}
      \idle_{p,Z}
      \ch \Eval_{p,Z}(e,k) 
    \end{array}
    & \entails &  \Eval_{p,Z}(e,k)
    \\
    \label{eq:widen-2}
    \begin{array}[c]{@{}l@{}}
      \Eval_{p,Z}(e,k) \ch \idle_{p,Z}
      \end{array}
    & \entails &  \Eval_{p,Z}(e,k)
  \end{eqnarray}
\end{lemma}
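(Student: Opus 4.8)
The plan is to unfold both entailments directly against the definition of universal implication $\entails$, fixing an arbitrary interval $\Delta$ and stream $s$ and then reasoning by cases on the two disjuncts of the chop operator. The core observation is that $\Eval_{p,Z}(e,k)$ is a conjunction $\Eventually(e = k \land ReadAllLocs.e.p) \land \idle_{p,Z}$ of a \emph{widening} part and an $\Always$-style part, so I would establish the two conjuncts of the right-hand side separately: the $\Eventually$ conjunct survives enlarging the interval because $\Eventually$ widens (a witness $t$ in a subinterval is a witness in the whole interval), and the $\idle_{p,Z}$ conjunct survives because $\idle_{p,Z}$ is built from $\Always$ and hence joins, i.e.\ is closed under the union of two adjoining subintervals.

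For \eqref{eq:widen-1}, assume $(\idle_{p,Z} \ch \Eval_{p,Z}(e,k)).\Delta.s$. In the finite-split disjunct there are $\Delta_1 \adjoins \Delta_2$ with $\Delta = \Delta_1 \cup \Delta_2$, $\idle_{p,Z}.\Delta_1.s$ and $\Eval_{p,Z}(e,k).\Delta_2.s$. Unfolding $\Eval$ on $\Delta_2$ gives a witness $t \in \Delta_2 \subseteq \Delta$ for $\Eventually(e = k \land ReadAllLocs.e.p)$, which therefore holds on $\Delta$ as well; and since $\idle_{p,Z}$ holds on both $\Delta_1$ and $\Delta_2$ and every time of $\Delta$ lies in one of them, $\idle_{p,Z}.\Delta.s$ follows. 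Conjoining the two yields $\Eval_{p,Z}(e,k).\Delta.s$. The argument for \eqref{eq:widen-2} is symmetric: in the finite-split disjunct $\Eval_{p,Z}(e,k)$ sits on the left subinterval $\Delta_1$, its $\Eventually$ witness again widens to $\Delta$, and $\idle_{p,Z}$ joins across $\Delta_1$ and $\Delta_2$.

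The step I expect to be the main obstacle is the second, \emph{divergent} disjunct of chop, where $\lub.\Delta = \infty$ and only the left operand is asserted on all of $\Delta$. For \eqref{eq:widen-2} this case is immediate, since the left operand is $\Eval_{p,Z}(e,k)$ itself. For \eqref{eq:widen-1}, however, the left operand is merely $\idle_{p,Z}$, which supplies the $\idle_{p,Z}$ conjunct of $\Eval$ but gives no witness for $\Eventually(e = k \land ReadAllLocs.e.p)$ over an infinite idle interval on which $e$ never attains $k$. I would therefore check carefully whether this disjunct can actually be discharged: either the intended reading of these widening lemmas restricts attention to finite intervals (so the divergent disjunct never fires and the finite-split argument suffices), or one needs an additional hypothesis --- such as $\Fin.\Delta$ or a termination commitment carried by the surrounding context --- to exclude it. Pinning down this convention is the one genuinely delicate point; everything else is routine unfolding of $\Always$, $\Eventually$, and the chop definition.
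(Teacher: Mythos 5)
The paper states this lemma bare, with no proof at all, so there is nothing of the authors' to compare your argument against; judged on its own, your proof is the natural one and is correct wherever a correct proof is possible. In the split disjunct of chop your reasoning is exactly right for both entailments: the $\Eventually (e = k \land ReadAllLocs.e.p)$ conjunct of $\Eval_{p,Z}(e,k)$ widens from the subinterval to all of $\Delta$, while $\idle_{p,Z}$, being a conjunction of $\Always$-properties, joins across the two adjoining subintervals; and for \refeq{eq:widen-2} the divergent disjunct is indeed immediate, since there the left operand of the chop is $\Eval_{p,Z}(e,k)$ itself.

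The obstacle you flag for \refeq{eq:widen-1} is not a gap in your proof but a genuine defect of the lemma as literally stated. Take any $\Delta$ with $\lub.\Delta = \infty$, a stream $s$ in which $p$ never holds write permission on any location of $Z$ (so $\idle_{p,Z}.\Delta.s$ holds), and let $e$ be the constant $0$ and $k = 1$. Then $(\idle_{p,Z} \ch \Eval_{p,Z}(e,k)).\Delta.s$ holds by the second disjunct of the chop definition, yet $\Eval_{p,Z}(e,k).\Delta.s$ fails: $ReadAllLocs.e.p$ is vacuously true for a constant, but $e = k$ holds in no state, so the $\Eventually$ conjunct has no witness. Hence the lemma needs either a finiteness side condition on the idling prefix --- replacing $\idle_{p,Z}$ on the left by $\Fin \land \idle_{p,Z}$ blocks the divergent disjunct outright, since $\Fin$ contradicts $\lub.\Delta = \infty$ --- or an explicit convention that only the split disjunct of chop is in play. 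Note that the paper uses exactly such a guard elsewhere ($\enf \Fin \st \Idle$ as the leading idle in \reflem{lem:grd-to-chop}) but omits it here, and its own later appeals to \refeq{eq:widen-1} in the proofs of \refthm{thm:grd-intro} and \refthm{thm:refine-guard} inherit the same caveat.
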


\begin{figure}[t]
  \centering
      $
  \begin{array}{@{}rcl@{}}
    \begin{array}[t]{rcl}
      beh_{p,Z}.\Chaos & \sdef & \True
      \smallskip\\
      beh_{p,Z}.\Idle & \sdef & \idle_{p,Z}
      \smallskip\\
      beh_{p,Z}.{[}c{]} & \sdef & \Eval_{p,Z}.(c, true)
      \smallskip\\
      beh_{p,Z}.(\Init{c}\st {C}) & \sdef & \prev.\ora{c} \land beh_{p,Z}.C
    \end{array}
    &\qquad & 
  \begin{array}[t]{rcl}
    beh_{P,Z}.(C_1 \ch C_2) & \sdef &  beh_{P,Z}.C_1 \ch
    beh_{P,Z}.C_2
    \smallskip \\
    beh_{P,Z}.(C_1 \sqcap C_2) & \sdef &  beh_{P,Z}.C_1 \lor
    beh_{P,Z}.C_2 \smallskip \\
      beh_{P,Z}.C^\omega & \sdef & (beh_{P,Z}.C)^\omega 
      \smallskip \\
    beh_{p, Z}.(l : C) & \sdef & \Always(pc_p = l) \land beh_{p,Z}.C
  \end{array}
  \medskip\\
  \multicolumn{3}{l}{
    \begin{array}[t]{rcl}
      beh_{p,Z}.(v \asgn e) & \sdef &
      \exists k \st \Eval_{p,Z} (e,k) 
      \ch \Update_{p,Z}(v,k)
      \smallskip \\
      beh_{p,Z}.(ae \hasgn e) & \sdef &
      \exists k, a\st \Eval_{p,Z} (ae, a) \land \Eval_{p,Z} (e,k)
      \ch  \Update_{p,Z}(a,k)
      \smallskip \\
      beh_{P,Z}.(\Par_{p:P}\ C_p) & \sdef & 
      \left\{\begin{array}[c]{@{}l@{\quad }l}
          \True & \textrm{if $P = \emptyset$} \smallskip \\
          beh_{p,Z}.C_p & \textrm{if $P = \{p\}$} \smallskip  \\
          \exists P_1, 
          P_2 \st 
          \begin{array}[t]{@{}l@{}}
            (P_1 \cup P_2 = P) \land  
            (P_1 \cap P_2 = \emptyset) \land \\
            beh_{P_1,Z}.((\Par_{p:P_1} C_p) \ch \Idle)
            \land 
            beh_{P_2,Z}. ((\Par_{p:P_2} C_p) \ch \Idle) 
          \end{array}
          & \textrm{otherwise} 
        \end{array}\right. 
      \smallskip\\
      beh_{P, Z}.\Context{Y}{C} & \sdef &
      \begin{array}[t]{@{}l@{}}
        Local.P.Y
        \land 
        (Z \cap Y = \emptyset)   \land 
        beh_{P,Z \cup Y}.C
      \end{array}
    \end{array}
  }
  \end{array}$

    \figrule
  \caption{Formalisation of behaviour function}
  \label{fig:beh-def}
\end{figure}

\begin{definition}[Behaviour]
  \label{def:beh}  
  The \emph{behaviour} of a command $C$ given by the abstract syntax
  in \refdef{def:synt} executed by a non-empty set of processes $P$ in
  a context $Z \subseteq Var \cup Addr$ is given by interval predicate
  $beh_{P,Z}.C$, which is defined inductively in
  \reffig{fig:beh-def}. 
\end{definition}
Within $beh_{P, Z}$ the context $Z$ defines the set of locations that
the processes in $P$ may or may not modify. For example, command
$\Idle$ executed by process $p$ in a context $Z$ should not write to
the locations in $Z$. Similarly, an assignment $v \asgn e$ should not
write to locations in $Z\bs \{v\}$.

In the description below, we assume that the executing process is $p$
and the command under consideration occurs within the scope of a
context $Z$. The behaviour of command $\Idle$ states that $p$ does not
write to any of the locations in $Z$. The behaviour of $[c]$ states
that $c$ evaluates to true in some state within the interval, that
process $p$ has permission to read the locations of $c$ in this state
and, and that $p$ does not write to any of the locations in $Z$. For
example, within $ToCAS(p)$ (\reffig{fig:treibermodel}), $[ptr.t_p \neq
null]$ holds iff there is a state in which $p$ can read $t_p$ and the
value of $ptr.t_p$ is non-null.

Execution of a variable assignment $v \asgn e$ consists of two parts,
where the value of $e$ is evaluated to $k$, and then the value of $v$
is updated to $k$. The executing process must have permission to read
the locations of $e$ within during the evaluation, and must have
permission to write to $v$ during the update. Note that because we
assume true concurrency, the evaluation is non-deterministic (in the
sense of \cite{DDH12,HBDJ11}). For example, consider the assignment
$t_p \asgn \derefe Top$ within $ToCAS(p)$
(\reffig{fig:treibermodel}). It is possible for $\derefe Top$ to
change multiple times within the interval in which $\derefe Top$ is
evaluated due to the execution of other processes. The value returned
by the evaluation of $\derefe Top$ depends on the time at which the
value at $Top$ is read. Because $t_p$ is local to $p$, the update to
$t_p$ may always be executed.  Assignment $ae \hasgn e$ is similar to
a variable assignment, but the command must additionally must evaluate
the address-valued expression $ae$ to determine the address to be
updated.

The behaviours of sequential composition, non-deterministic assignment
and iteration are modelled in a straightforward manner using chop,
disjunction and iteration of interval predicates. The parallel
composition is chaotic if the given set of processes is empty and
behaves as a single process if the set of processes is
singleton. Otherwise the given set processes is partitioned into two
disjoint subsets and the behaviours of both subsets occur in the given
interval. Note that the two or more parallel processes may access the
same shared variables --- the manner in which these variables are
accessed is controlled using fractional permissions. We allow $\Idle$
to be executed within the parallel composition to allow asynchronous
termination \cite{DDH12}. 

The behaviour of $\Context{Y}{C}$ is the behaviour of $C$ in an
extended context $Y$ no other process is given permission to access
locations in $Y$ during the execution of $C$.  The behaviour of a
labelled command assumes the existence of an auxiliary program counter
variable $pc_p$ local to each process $p$. Execution of $l: C$ by
process $p$ guarantees that $pc_p$ has value $l$ throughout the
interval of execution. Unlike \cite{deRoever01,DSW11,DGLM04,Don09}
where labels strictly correspond to the atomic portions of the
programs under consideration, we only use labels to determine
auxiliary information and the same label may correspond to a number of
atomic steps.

\subsection{Enforced conditions}
\label{sec:enforced-conditions}

We introduce a construct for defining an enforced condition, which
restricts the behaviour of a command so that the property being
enforced is guaranteed to hold \cite{Don09}.
\begin{definition}[Enforced condition]
  For interval predicate $d$ and command $C$, we let $\enf{d}\st{C}$
  denote a command with an \emph{enforced condition} $d$, where
  \begin{eqnarray*}
    beh_{P,Z}.(\enf{d}\st{C})  & \sdef &    d \land beh_{P,Z}.C
  \end{eqnarray*}
\end{definition}
Hence, $\enf{d}\st{C}$ executes as $C$ and in addition guarantees that
$d$ holds \cite{Don09,DH12}. Note that if $\neg d$ holds, then
$\enf{d}\st{C}$ has no behaviours.  Enforced conditions may be used to
state properties of an implementation that may not be easily
expressible as a command. We present three examples relevant to the
Treiber stack to illustrate the use of enforced properties. Although
each of the examples only uses enforced properties on fractional
permissions, the theory of enforced properties is more general as it
allows any interval predicate to be enforced
\cite{DH12,DH12MPC,DDH12}.
\begin{example}[Compare-and-swap]
  \label{ex:CAS-beh}
  A ${\tt CAS(ae, \alpha, \beta)}$ is atomic on ${\tt ae}$, i.e., the value at
  address ${\tt ae}$ is guaranteed not to be modified by the
  environment over the interval in which ${\tt CAS(ae, \alpha, \beta)}$
  is executed. Enforced properties, state predicate evaluation and
  fractional permissions are used together to formalise the behaviour
  of a ${\tt CAS(ae, \alpha, \beta)}$ executed by a process $p$, as
  follows:
\begin{eqnarray*}
  CASOK_p(ae, \alpha, \beta) &  \sdef & \enf{OnlyAccessedBy.(\derefe ae).p
  }\st{\Guard{\derefe ae = \alpha} 
    \ch ae \hasgn \beta}  
  \\
  CASFail_p(ae, \alpha) &  \sdef & 
  \Guard{\derefe ae \neq \alpha}
  \\
  CAS_p(ae, \alpha, \beta) & \sdef &  CASOK_p(ae, \alpha, \beta) \sqcap CASFail_p(ae,\alpha)
\end{eqnarray*}
Note that due to the enforced properties $OnlyAccessedBy.ae.p$ within
$CASOK_p$, the environment is denied access to the locations needed to
evaluate $ae$. Although the CAS may take a number of atomic steps to
execute, access to the test and set of $ae$ occurs without any process
accessing the locations in $ae$. A CAS that performs the test and set
in a single transition may be considered to be an implementation of
this specification.  

Because the CAS may take multiple steps, it is possible for the values
of $\alpha$ and $\beta$ to change. To achieve more deterministic
behaviour over the interval, (in the sense of \cite{HBDJ11}), $\alpha$
and $\beta$ are typically local variables of the executing process,
and hence, by {\bf HC4}, $\alpha$ and $\beta$ cannot be modified by
the environment of $p$. A CAS that performs a test and set in a single
transition typically places further restrictions on the structure of
$ae$ to ensure implementability.  \hfill ${}_\clubsuit$
\end{example}

\begin{example}[New nodes]
  \label{ex:newnode}
  We assume that the set of free addresses is given by $FAddr$. To
  ensure that two different processes are not assigned the same free
  address, we must ensure that there is no interference to $FAddr$
  while $p$ accesses $FAddr$. This may be achieved via an enforced
  condition on $FAddr$.
  \begin{eqnarray*} 
    newNode(p, v) & \sdef &
    \begin{array}[t]{@{}l@{}}
      \enf
      OnlyAccessedBy.FAddr.p \st \\
      \quad \bigsqcap_{fn \in FAddr} [fn+ offset.nxt \in
      FAddr] \ch v \asgn fn \ch FAddr \asgn
      FAddr \bs \{fn, fn+offset.nxt\} 
    \end{array}
  \end{eqnarray*}
  Hence, $newNode(p, v)$ ensures that all processes different from $q$
  are denied access to the current set of free nodes
  $FAddr$. Furthermore, it non-deterministically chooses a free node
  from $fn$ such that $fn+1$ is a free node, then assigns $fn$ to $v$,
  and removes both $fn$ and $fn+ offset.nxt$ (i.e., $fn$ and $fn+1$)
  from the set of available free locations.  \hfill ${}_\clubsuit$
\end{example}

\begin{example}
  To further illustrate the use of permissions and enforced
  properties, consider the program in \reffig{fig:abs}, which extends
  the abstract program in \reffig{fig:abs-stack-1} by allowing the
  operations to be executed concurrently by the processes in $P
  \subseteq Proc$. However, due to the enforced permissions, each
  process is guaranteed to update the stack without
  interference. Lynch \cite{Lyn96} refers to such a program as a
  \emph{canonical specification} of the concurrent stack.  Note that
  some idling must be allowed both before and after the main operation
  to enable interleaving to take place. 
  Without this idling, if $p \neq q$, we have: 
  \begin{derivation}
    \step{beh_{p, \{S\}}.APush(p,x) \land beh_{q, \{S\}}.APush(q,y)}

    \trans{\entails}{definitions}
    
    \step{\Eventually \mcR.S.p \land  OnlyAccessedBy.S.q}

    \trans{\equiv}{definitions}
    
    \step{false }
  \end{derivation}

  \begin{figure}[t]
  \centering
  $\begin{array}[t]{@{}r@{~~}c@{~~}l@{}}
    APush(p,x) & \sdef &   
    \enf OnlyAccessedBy.S.p \st SPush(x) \smallskip \\
    AEmpty(p,arv_p) & \sdef &   
    \enf OnlyAccessedBy.S.p \st 
      SEmpty(arv_p) \\
    ADoPop(p,arv_p) & \sdef &   
    \enf OnlyAccessedBy.S.p \st 
      SPop(arv_p) \\
    APop(p,arv_p) & \sdef &   
    AEmpty(p, arv_p) \sqcap APop(p, arv_p) \smallskip \\
    APP(p) & \sdef & \Idle \ch \left( \left(\textstyle\bigsqcap_{x : Val}\
      APush(p,x)\right) \sqcap  APop(p, arv_p)\right) \ch \Idle \smallskip \\ 
    AS(P) & \sdef &
    \Context{S}{\begin{array}[c]{@{}l@{}}
        \Init\ S = \aang{} \st  
        \Par_{p: P}\ 
        \Context{arv_p}{APP(p)^\omega}
      \end{array}}
  \end{array}
  $
  \figrule

  \caption{A canonical stack specification}
\label{fig:abs}
\end{figure}
 \hfill ${}_\clubsuit$
\end{example}

\subsection{Behaviour refinement}
\label{sec:behaviour-refinement}
We prove correctness of the concurrent data structure by proving
\emph{refinement} between the concurrent program and an abstract
specification and a concrete representation (e.g., in
\refsec{sec:line-via-forw} we show $LS(P)$ is a data refinement of
$AS(P)$). The sets of locations of the abstract program may differ
from those of the concrete program. Thus, we define a refinement
relation between commands parametrised by the sets of abstract and
concrete locations and the processes executing the command.

We first consider behaviour refinement --- a simple form a refinement
in which the concrete and abstract state spaced do not need to be
linked. 
\begin{definition}[Behaviour refinement]
  Suppose $A$ and $C$ are commands in contexts (sets of locations) $Y$
  and $Z$, respectively. We say $A$ is \emph{behaviour refined} by $C$
  with respect to a set of processes $P$ (denoted $A \sref_P^{Y,Z} C$)
  iff $beh_{P,Z}.C \entails beh_{P,Y}.A$ holds.
\end{definition}
We write $A \sref_p^{Y,Z} C$ for $A \sref_{\{p\}}^{Y,Z} C$ (i.e., the
set of processes is the singleton set $p$), write for $A \sref_P^Z C$
for $A \sref_P^{Z,Z} C$ (i.e., the concrete and abstract contexts are
identical) and write $A \sref_P C$ for $A \sref_P^{\emptyset} C$
(i.e., the abstract and concrete contexts are empty).  Because
refinement is defined by universal implication between behaviours, the
following monotonicity results may be proved in a straightforward
manner using monotonicity of the corresponding interval predicate
operators.

\begin{figure}[t]
  \centering
  \begin{minipage}[t]{0.42\columnwidth}
    \scalebox{0.7}{\input{intfree.pspdftex}}
    \caption{Conflicting execution -- interference freedom with two writes}
    \label{fig:intfree}
  \end{minipage}
  \qquad \qquad 
  \begin{minipage}[t]{0.46\columnwidth}
    \scalebox{0.7}{\input{intfree2.pspdftex}}
    \caption{Interference freedom with a read and write}
    \label{fig:intfree-nw}
  \end{minipage}

  \figrule
\end{figure}
\begin{example}
  \label{ex:APtoBP}
  The set of variables of the abstract program is given by: 
  \begin{eqnarray*}
    M & \sdef & S \cup \{arv_p \mid p: P\}
  \end{eqnarray*}
  We perform a behaviour refinement, where we show that the
  $OnlyAccessedBy$ permission in \reffig{fig:abs} may be weakened to
  $IntFree$ as given in program $BS(P)$ in \reffig{fig:iabs}, i.e., we
  prove $AS(P) \srefeq_P^{M} BS(P)$. Condition $BS(P) \sref_P^{M}
  AS(P)$ is trivial because $OnlyAccessedBy.S.p \imp
  Intfree.S.p$. Condition $BS(P) \sref_P^{M} AS(P)$. holds because for
  any processes $p, q \in P$ such that $p \neq q$ and intervals
  $\Delta_p$, $\Delta_q$ such that $\Delta_p \cap \Delta_q \neq
  \emptyset$ (i.e., $\Delta_p$ and $\Delta_q$ overlap), we have
  \begin{eqnarray*}
    beh_{p,S}.(BDoPush(p,arv_p) \sqcap BDoPop(p,arv_p)).\Delta_p &
    \entails & 
    \neg beh_{q,S}.(BDoPush(q,arv_q) \sqcap BDoPop(q,arv_q)).\Delta_q
  \end{eqnarray*}
  A visualisation of this is given in \reffig{fig:intfree}, where the
  write of process $p$ conflicts with the $IntFree.S.q$ condition in
  process $q$. Note that $AEmpty(p, arv_p) = BEmpty(p, arv_p)$, i.e.,
  one cannot replace the enforced property $OnlyAccessedBy.S.p$ by
  $IntFree.S.p$ because for example
  $$beh_{p,S}.(\enf IntFree.S.p \st SEmpty(p,arv_p)).\Delta_p \land beh_{q,
    S}.BDoPop(q,arv_q).\Delta_q
  $$
  is an allowable behaviour. This may be visualised as shown in
  \reffig{fig:intfree-nw}, where condition $IntFree.S.q$ in process
  $q$ does not conflict with the execution of process $q$. Replacement of $AS(P)$ by $BS(P)$
  simplifies the rest of the proof because the conditions that the
  implementation needs to ensure are weaker.  
  \begin{figure}[t]
  \centering
  $\begin{array}[t]{@{}r@{~~}c@{~~}l@{}}
    BPush(p,x) & \sdef &   
    \enf IntFree.S.p \st SPush(x) \smallskip \\
    BEmpty(p,arv_p) & \sdef &   
    \enf OnlyAccessedBy.S.p \st 
      SEmpty(arv_p) \\
    BDoPop(p,arv_p) & \sdef &   
    \enf IntFree.S.p \st 
      SPop(arv_p) \\
    BPop(p,arv_p) & \sdef &   
    BEmpty(p, arv_p) \sqcap BPop(p, arv_p) \smallskip \\
    BPP(p) & \sdef & \Idle \ch \left( \left(\textstyle\bigsqcap_{x : Val}\
      BPush(p,x)\right) \sqcap  BPop(p, arv_p)\right) \ch \Idle \smallskip \\ 
    BS(P) & \sdef &
    \Context{S}{\begin{array}[c]{@{}l@{}}
        \Init\ S = \aang{} \st  
        \Par_{p: P}\ 
        \Context{arv_p}{BPP(p)^\omega}
      \end{array}}
  \end{array}
  $
  \figrule

  \caption{An equivalent specification to \reffig{fig:abs}}
\label{fig:iabs}
\end{figure}
 \hfill ${}_\clubsuit$
\end{example}

\begin{lemma}
  \label{lem:command-ref}
  Suppose $P$ is a non-empty set of processes, $Y, Z \subseteq Var
  \cup Addr$, and $A$, $A_1$, $A_2$, $C$, $C_1$ and $C_2$ are commands
  such that $A \sref_P^{Y,Z} C$, $A_1 \sref_P^{Y,Z} C_1$ and $A_2
  \sref_P^{Y,Z} C_2$ hold, and $h$ and $g$ are interval
  predicates. Each of the following holds provided that $b' \imp b$,
  $\all k : Val \st \Eventually (e' = k) \entails \Eventually (e = k)$
  and $h \entails g$.
  \begin{eqnarray}
    \Guard{b} & \sref_P^{Y,Z} & \Guard{b'} \\
    \label{eq:33}
    v \asgn e & \sref_P^{Y,Z} & v \asgn e' \\
    \label{eq:41}
    A_1 \ch A_2 & \sref_P^{Y,Z} & C_1 \ch C_2 \\
    A_1 \sqcap A_2 & \sref_P^{Y,Z} & C_1 \sqcap C_2 \\
    A^\omega & \sref_P^{Y,Z} & C^\omega \\
    \label{eq:59}
    \Init g \st A & \sref_P^{Y,Z} & \Init h \st C
  \end{eqnarray}
\end{lemma}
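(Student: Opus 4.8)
The plan is to unfold the definition of behaviour refinement so that each claim becomes a universal implication ($\entails$) between the behaviours of the two commands, and then discharge each case by monotonicity of the interval-predicate operator through which the relevant behaviour is defined in \reffig{fig:beh-def}. By definition $A \sref_P^{Y,Z} C$ means $beh_{P,Z}.C \entails beh_{P,Y}.A$, and each hypothesis $A_i \sref_P^{Y,Z} C_i$ unfolds to $beh_{P,Z}.C_i \entails beh_{P,Y}.A_i$. Since $\entails$ is a preorder, the whole lemma rests on a handful of monotonicity facts: chop $\ch$ is monotonic in both arguments, $\lor$ is monotonic in both arguments, $(\cdot)^\omega$ is monotonic (its defining functional $z \mapsto (g \ch z) \lor \Empty$ is monotonic in $g$, so the greatest fixed point is monotonic in $g$), and $\land$ and $\prev$ are monotonic. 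I would state these first and then apply them uniformly.

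With these in hand the four compositional cases are immediate. For sequential composition, $beh_{P,Z}.(C_1 \ch C_2) = beh_{P,Z}.C_1 \ch beh_{P,Z}.C_2 \entails beh_{P,Y}.A_1 \ch beh_{P,Y}.A_2 = beh_{P,Y}.(A_1 \ch A_2)$, by monotonicity of chop applied to the two hypotheses, giving \refeq{eq:41}; the choice and iteration cases are identical using monotonicity of $\lor$ and of $(\cdot)^\omega$. For $\Init$, from the proviso $h \entails g$ I would obtain $\prev.\ora{h} \entails \prev.\ora{g}$ by monotonicity of $\ora{\cdot}$ and $\prev$, and combine this with $beh_{P,Z}.C \entails beh_{P,Y}.A$ under monotonicity of $\land$ to get $\prev.\ora{h} \land beh_{P,Z}.C \entails \prev.\ora{g} \land beh_{P,Y}.A$, which is exactly \refeq{eq:59}.

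The base cases carry the only real content. For the guard, unfolding gives $beh.\Guard{b'} = \Eval_{P,Z}(b',true)$ and $beh.\Guard{b} = \Eval_{P,Y}(b,true)$; since $b' \imp b$, any state witnessing $b' = true$ also witnesses $b = true$, so the $\Eventually$-component transfers, and it remains to propagate the $ReadAllLocs$ and $\idle$ components. For the assignment \refeq{eq:33} I would first use monotonicity of chop to reduce $\Eval_{P,Z}(e',k) \ch \Update_{P,Z}(v,k) \entails \Eval_{P,Y}(e,k) \ch \Update_{P,Y}(v,k)$, for each witness $k$, to the two conjuncts separately: the evaluation part follows from the proviso $\all k : Val \st \Eventually(e' = k) \entails \Eventually(e = k)$, while the update part agrees on the write location $v$ and value $k$.

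I expect the main obstacle to lie precisely in these base cases, in reconciling the read-permission predicate $ReadAllLocs$ and the context-idle predicate $\idle_{p,\cdot}$ across the change of expression and of context. The value-level provisos only constrain $\Eventually(e'=k) \entails \Eventually(e=k)$, whereas $\Eval$ additionally records, inside the $\Eventually$, that $p$ can read $accessed.e'.\sigma$, and both $\Eval$ and $\Update$ carry an $\idle$ component tied to the context; transferring these to $accessed.e.\sigma$ and to $\idle_{p,Y}$ requires the accessed-location sets and the contexts to be compatible, i.e.\ that the concrete command reads and idles over at least as much as the abstract one. Establishing this permission and context bookkeeping, rather than the monotonicity skeleton, is where care is needed; the compositional cases are, as the surrounding text indicates, entirely routine.
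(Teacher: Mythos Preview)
Your proposal is correct and matches the paper's approach exactly: the paper gives no detailed proof, stating only that ``the following monotonicity results may be proved in a straightforward manner using monotonicity of the corresponding interval predicate operators,'' which is precisely the skeleton you have laid out. Your observation that the base cases require additional bookkeeping on $ReadAllLocs$ and $\idle_{p,\cdot}$ (implicitly needing compatibility such as $Y \subseteq Z$ and $accessed.b \subseteq accessed.b'$) is a fair point that the paper glosses over entirely.
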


Using an interval-based semantics to formalise a program's behaviour
allows one to obtain the following useful results which allows one to
split a command executed over a large interval into subintervals, and
combine a command executed over a number of adjoining intervals into
the same command over the larger interval.
\begin{lemma} 
  \label{lem:cmd-joins}
  For any non-empty set of processes $P$ and a command $C$, if
  $beh_P.C$ joins, then both of the following hold.
  \begin{eqnarray}
    C & \sref_P^{Y,Z} & C \ch C \\
    C & \sref_P^{Y,Z} & C \ch C^\omega 
  \end{eqnarray}
\end{lemma}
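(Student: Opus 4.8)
The plan is to reduce both refinements to properties of the single interval predicate $g \sdef beh_{P,Z}.C$, using the defining equations for chop and iteration from \reffig{fig:beh-def} together with the hypothesis that $g$ joins. I read the joins hypothesis in the context $Z$ that occurs in the conclusion, and I take the two contexts to coincide, writing $Z$ for both: the same command $C$ appears on each side, so under the sequential-composition clause of $beh$ this is the reading that makes the statement typecheck. Unfolding the definition of behaviour refinement, the two goals become $beh_{P,Z}.(C \ch C) \entails g$ and $beh_{P,Z}.(C \ch C^\omega) \entails g$. By the behaviour equations, $beh_{P,Z}.(C \ch C) \equiv g \ch g$ and $beh_{P,Z}.(C \ch C^\omega) \equiv g \ch g^\omega$, the latter using $beh_{P,Z}.C^\omega \equiv (beh_{P,Z}.C)^\omega = g^\omega$.

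The second refinement is then immediate: the hypothesis that $g$ joins is by definition $g \ch g^\omega \entails g$, which is exactly the required $beh_{P,Z}.(C \ch C^\omega) \entails g$. So the only genuine work is the first refinement, for which I would establish $g \ch g \entails g$.

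To derive $g \ch g \entails g$ from joins, the key sublemma is $g \entails g^\omega$; once it is available, monotonicity of chop (as used implicitly throughout \reflem{lem:command-ref}) gives $g \ch g \entails g \ch g^\omega$, and a final appeal to joins yields $g \ch g \entails g$. I would prove $g \entails g^\omega$ in three steps. First, $g \equiv g \ch \Empty$: in the definition of chop one may take $\Delta_2 = \emptyset$ (so $\Delta_1 = \Delta$ and $\Delta_1 \adjoins \emptyset$ holds trivially), and conversely any splitting witnessing $g \ch \Empty$ forces $\Delta_2 = \emptyset$. Second, $\Empty \entails g^\omega$: unfolding the greatest fixpoint $g^\omega \sdef \nu z \st (g \ch z) \lor \Empty$ gives $g^\omega \equiv (g \ch g^\omega) \lor \Empty$, so $\Empty$ is one of its disjuncts. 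Third, the same unfolding gives $g \ch g^\omega \entails g^\omega$. Chaining these with monotonicity of chop, $g \equiv g \ch \Empty \entails g \ch g^\omega \entails g^\omega$, as required.

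The main obstacle I anticipate is the fixpoint reasoning: because $g^\omega$ is a greatest fixpoint, the facts $\Empty \entails g^\omega$ and $g \ch g^\omega \entails g^\omega$ must be drawn carefully from its unfolding rather than assumed, and one must confirm that chop is monotone in each argument with respect to $\entails$ (which holds, since chop is built from conjunction and existential quantification over splittings). A secondary point worth flagging explicitly is the context convention: if arbitrary $Y \neq Z$ were genuinely intended, an extra hypothesis relating $beh_{P,Y}.C$ and $beh_{P,Z}.C$ would be needed, so I would state and apply the lemma with matching contexts.
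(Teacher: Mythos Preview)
Your argument is correct. The paper itself does not supply a proof for this lemma; it is stated and then immediately followed by the next lemma, so there is no ``paper's own proof'' to compare against beyond the implicit claim that it is routine. Your unfolding via $g \sdef beh_{P,Z}.C$, the identification of the second refinement with the definition of joins, and the derivation of $g \ch g \entails g$ through $g \equiv g \ch \Empty \entails g \ch g^\omega \entails g^\omega$ followed by joins and chop-monotonicity is exactly the standard route and would serve as a complete proof.

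Your observation about the contexts is also well taken: the lemma as stated writes $beh_P.C$ in the hypothesis but $\sref_P^{Y,Z}$ in the conclusion, and every subsequent use in the paper (e.g., in the proofs of \refeq{eq:30} and \refeq{eq:20}) applies it with matching contexts. Reading $Y = Z$ is the intended interpretation; you are right that an arbitrary $Y \neq Z$ would require an additional assumption the lemma does not supply.
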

\begin{lemma} 
  \label{lem:cmd-splits}
  For any non-empty set of processes $P$ and a command $C$, if
  $beh_P.C$ splits, then.
  \begin{eqnarray}
    C \ch C & \sref_P^{Y,Z} & C
  \end{eqnarray}
\end{lemma}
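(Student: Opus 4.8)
The plan is to unfold the definition of behaviour refinement and reduce the claim to an elementary statement about the chop operator. By the definition of behaviour refinement, $C \ch C \sref_P^{Y,Z} C$ holds iff $beh_{P,Z}.C \entails beh_{P,Y}.(C \ch C)$, and by the semantic clause for sequential composition in \reffig{fig:beh-def} we have $beh_{P,Y}.(C \ch C) \equiv beh_{P,Y}.C \ch beh_{P,Y}.C$. Writing $g = beh_P.C$ for the behaviour of $C$ (the contexts $Y$ and $Z$ coinciding in the intended application), the goal becomes the purely interval-logical statement $g \entails g \ch g$, to be discharged under the splitting hypothesis $g \entails \Box g$.

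To establish $g \entails g \ch g$ I would fix an arbitrary interval $\Delta$ and stream $s$ with $g.\Delta.s$, and exhibit a witnessing decomposition for the first disjunct of the definition of chop. The key observation is that splitting delivers $g$ on \emph{every} subinterval of $\Delta$, including the empty one: from $g.\Delta.s$ and $g \entails \Box g$ we obtain $(\Box g).\Delta.s$, and since $\emptyset \subseteq \Delta$ and $\emptyset \in Intv$, this yields $g.\emptyset.s$. I would then choose the degenerate split $\Delta_1 = \emptyset$ and $\Delta_2 = \Delta$. This satisfies $\Delta = \Delta_1 \cup \Delta_2$, and $\Delta_1 \adjoins \Delta_2$ holds trivially because $\emptyset \adjoins \Delta$ is always true. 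Together with $g.\emptyset.s$ (just derived) and $g.\Delta.s$ (the hypothesis), this discharges the first disjunct of the chop definition, so $(g \ch g).\Delta.s$ holds, as required.

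This argument is uniform in whether $\Delta$ is finite or infinite: the empty-prefix decomposition uses only the first disjunct of chop and never appeals to the $\lub.\Delta = \infty$ case, and the adjointness side-condition $\emptyset \adjoins \Delta$ is vacuous regardless of $\ell.\Delta$. Consequently there is little genuine obstruction; the substantive content is simply the interplay between the splitting hypothesis and the fact that chop admits an empty first component. The only point requiring care is the bookkeeping around contexts: strictly one needs $beh_{P,Z}.C \entails beh_{P,Y}.C$ for the stated $Y,Z$, which is immediate when $Y = Z$ (the case used in the verification); I would either restrict attention to $Y = Z$ or bridge differing contexts via the monotonicity results of Lemma~\ref{lem:command-ref}. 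The whole proof is therefore a short reduction to the definition of chop, with splitting supplying $g$ on the empty interval.
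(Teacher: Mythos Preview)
Your argument is correct. The paper states this lemma without proof, so there is no reference argument to compare against; your empty-prefix decomposition is the natural route, and your use of the splitting hypothesis to obtain $g.\emptyset.s$ (via $\emptyset \subseteq \Delta$ in the definition of $\Box$) is exactly what is needed to make that decomposition go through. Your final caveat about the contexts $Y$ and $Z$ is apt: as written the lemma only goes through cleanly when $Y = Z$ (or more generally when $beh_{P,Z}.C \entails beh_{P,Y}.C$), and this matches how the paper actually invokes it.
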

Note that $C^\omega \sref_P^{Y,Z} C$ holds trivially by the definition
of ${}^\omega$. The following lemma states that a guard evaluation is
equivalent to a program that performs some finite length idling, some
non-empty idling in which $c$ is guaranteed to hold, followed by some
more idling.
\begin{lemma}
  \label{lem:grd-to-chop}
  $[c] \srefeq_p^{Z} (\enf \Fin \st \Idle) \ch (\enf (\Always c \land
  \neg \Empty \land ReadAllLocs.c.p) \st \Idle) \ch \Idle$
\end{lemma}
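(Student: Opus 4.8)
Since $\srefeq$ is mutual refinement and refinement is universal implication of behaviours, it suffices to prove the behaviour equivalence $beh_{p,Z}.[c] \equiv beh_{p,Z}.R$, where $R$ denotes the right-hand command. Unfolding \reffig{fig:beh-def} and the enforced-condition definition, the plan starts from
\[
beh_{p,Z}.[c] \equiv \Eventually(c \land ReadAllLocs.c.p) \land \idle_{p,Z}
\]
on the left, and on the right
\[
beh_{p,Z}.R \equiv (\Fin \land \idle_{p,Z}) \ch (\Always c \land \neg \Empty \land ReadAllLocs.c.p \land \idle_{p,Z}) \ch \idle_{p,Z}.
\]
I will prove the two implications separately, using three elementary facts: (i) $\idle_{p,Z}$ both splits and joins, since $\idle_{p,Z}.\Delta.s$ unfolds to the pointwise condition $\all va : Z \st \all t : \Delta \st \neg \mcW.va.p.(s.t)$, so for any partition $\Delta = \Delta_1 \cup \Delta_2 \cup \Delta_3$ we have $\idle_{p,Z}.\Delta \equiv \idle_{p,Z}.\Delta_1 \land \idle_{p,Z}.\Delta_2 \land \idle_{p,Z}.\Delta_3$; (ii) over a point interval $\{t\}$ both $\Always$ and $\Eventually$ collapse to pointwise evaluation at $t$; and (iii) $\Eventually$ widens, so a witness in any subinterval of $\Delta$ is a witness in $\Delta$.

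For the $\entails$ direction ($R$ refines $[c]$), assume the right-hand predicate holds of $\Delta,s$. Because the first component forces $\Fin$, the leading chop cannot use its infinite-$\lub$ disjunct (that disjunct would require both $\lub.\Delta = \infty$ and $\Fin.\Delta$), so $\Delta$ splits as $\Delta_1 \adjoins \Delta'$ with the first component on $\Delta_1$; unfolding the inner chop on $\Delta'$ yields $\Delta' = \Delta_2 \cup \Delta_3$ with $\Delta_2 \adjoins \Delta_3$ (the infinite disjunct here simply forces $\Delta_3 = \emptyset$, on which $\idle_{p,Z}$ holds trivially). Fact (i) delivers $\idle_{p,Z}.\Delta$ from the three components. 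Since $\neg \Empty$ holds on $\Delta_2$, pick any $t \in \Delta_2$; then $\Always c$ gives $c.(s.t)$ and the readability conjunct gives $ReadAllLocs.c.p.(s.t)$, so $\Eventually(c \land ReadAllLocs.c.p)$ holds on $\Delta_2$ and hence, by (iii), on $\Delta$. This establishes the left-hand predicate.

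For the reverse direction, assume the left-hand predicate of $\Delta,s$ and let $t \in \Delta$ be the $\Eventually$-witness, so $c.(s.t)$ and $ReadAllLocs.c.p.(s.t)$ hold. Set $\Delta_1 \sdef \{u \in \Delta \mid u < t\}$, $\Delta_2 \sdef \{t\}$ and $\Delta_3 \sdef \{u \in \Delta \mid u > t\}$; these adjoin and partition $\Delta$. Then $\lub.\Delta_1 \leq t-1$, giving $\Fin.\Delta_1$; fact (i) transfers $\idle_{p,Z}$ to each component; and by fact (ii) the point interval $\Delta_2$ satisfies $\Always c$, $\neg \Empty$ and $ReadAllLocs.c.p$ precisely because $c.(s.t)$ and $ReadAllLocs.c.p.(s.t)$ hold. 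Matching this decomposition against the chop structure of $R$ yields the right-hand predicate.

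The step I expect to require the most care is the treatment of $ReadAllLocs.c.p$: it is a state predicate whose value depends on $accessed.c.\sigma$, which is itself state-dependent through pointers, so I must be precise about how it is lifted into the interval predicate inside the enforced condition. The resolving device is to take the middle interval to be the singleton $\{t\}$, where—by fact (ii)—every candidate lifting ($\Always$, $\Eventually$, or bare pointwise) collapses to the single state condition $ReadAllLocs.c.p.(s.t)$ that the $\Eventually$ on the left already supplies; this is what makes the two readings agree, and it works in both directions because $\Always c$ guarantees $c$ holds at whichever point readability is witnessed. A secondary point needing attention is the infinite-$\lub$ disjunct of chop, which must be ruled out for the leading chop (via $\Fin$) and shown harmless for the inner chop (it forces an empty, hence idling, suffix).
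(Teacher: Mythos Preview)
The paper states this lemma without proof, so there is no argument in the paper to compare against. Your proof is correct and is the natural one: unfold both sides via \reffig{fig:beh-def} and the definition of enforced conditions, use that $\idle_{p,Z}$ is a pointwise (hence splitting and joining) interval predicate, and for the harder direction take the middle interval to be the singleton $\{t\}$ at the $\Eventually$-witness so that $\Always c$, $\neg\Empty$, and any reasonable lifting of $ReadAllLocs.c.p$ all collapse to the single-state conditions already supplied. Your explicit treatment of the $ReadAllLocs.c.p$ lifting is apt, since the paper leaves that implicit; the singleton choice is exactly what makes the argument robust to the ambiguity.

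One small wording issue: your handling of the infinite-$\lub$ disjunct of chop is right in spirit but slightly loose in phrasing. Chop is associative, so whether one parses the ternary chop left- or right-associatively, the outer infinite disjunct is not always literally contradictory; rather, in every case the $\Fin$ on the first component forces a genuine split after it, and any remaining infinite tail is absorbed by taking $\Delta_3 = \emptyset$ (on which $\idle_{p,Z}$ holds vacuously), exactly as you note for the inner chop. This does not affect correctness.
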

Note that the behaviours of both $\enf \Fin \st \Idle$ and $\Idle$
hold in any empty interval, and hence each of the properties below
follow from \reflem{lem:grd-to-chop}:
\begin{eqnarray}
  [c] & \sref_p^{Z} & (\enf (\Always
  c \land \neg \Empty \land ReadAllLocs.c.p) \st \Idle) \ch \Idle
  \\
  {[}c] & \sref_p^{Z} &  (\enf \Fin \st \Idle) \ch (\enf\ (\Always c \land
  \neg \Empty \land ReadAllLocs.c.p) \st \Idle)
  \label{eq:72}
  \\
  {[}c] & \sref_p^{Z} &  \enf (\Always c \land
  \neg \Empty \land ReadAllLocs.c.p) \st \Idle
\end{eqnarray}
For example, \reflem{lem:command-ref} and \reflem{lem:grd-to-chop} may
be used to prove the following:
\begin{eqnarray}
  \label{eq:69}
  [b] & \sref_p^{Y,Z} &  [b]\ch [c]
\end{eqnarray}

The lemma below allows refinement within a wider context. 
\begin{lemma}
  \label{lem:fr-ref}
  If $A \sref^{W\cup Y, X \cup Z }_P C$, $Y \subseteq Z$, $W \cap Y
  = \emptyset = X \cap Z$ and $W \subseteq X$ then $\Context{W}{A}
  \sref_P^{Y,Z} \Context{X}{C}$.
\end{lemma}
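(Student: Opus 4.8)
The plan is to unfold the definition of behaviour refinement and reduce the goal to an entailment between the two context behaviours, which I would then split conjunct-by-conjunct. By definition, $\Context{W}{A} \sref_P^{Y,Z} \Context{X}{C}$ is exactly $beh_{P,Z}.\Context{X}{C} \entails beh_{P,Y}.\Context{W}{A}$. Expanding both sides using the clause for $\Context{\cdot}{\cdot}$ in \reffig{fig:beh-def}, this becomes
\[ Local.P.X \land (Z \cap X = \emptyset) \land beh_{P, Z \cup X}.C \ \entails\ Local.P.W \land (Y \cap W = \emptyset) \land beh_{P, Y \cup W}.A. \]
Since $\entails$ is universal implication, it suffices to assume the three conjuncts on the left and establish each of the three conjuncts on the right in turn.

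For the locality conjunct I would use monotonicity of $Local$ in its set-of-locations argument: by its definition, $Local$ is a conjunction (over the locations $va$ in the set) of a $\Always$-property depending only on $va$ and $P$, so $W \subseteq X$ yields $Local.P.X \entails Local.P.W$. The disjointness conjunct $(Y \cap W = \emptyset)$ is discharged directly by the hypothesis $W \cap Y = \emptyset$ (it is a state-independent fact, hence the trivially true interval predicate). For the behaviour conjunct I would invoke the refinement hypothesis $A \sref^{W \cup Y, X \cup Z}_P C$, which unfolds to $beh_{P, X \cup Z}.C \entails beh_{P, W \cup Y}.A$; since union is commutative this is literally $beh_{P, Z \cup X}.C \entails beh_{P, Y \cup W}.A$, so the left-hand behaviour conjunct delivers the right-hand one.

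I do not anticipate a serious obstacle: the argument is essentially a definitional unfolding combined with monotonicity of $Local$ and of the universal quantifier, the only point requiring care being that $Local$ genuinely is monotone in its location set (which relies on its body not referring to the set as a whole). It is worth remarking that, of the four side conditions, only $W \subseteq X$ (for the locality conjunct) and $W \cap Y = \emptyset$ (for the disjointness conjunct) are actually consumed by the entailment; the remaining conditions $Y \subseteq Z$ and $X \cap Z = \emptyset$ appear to act as natural well-formedness constraints that keep the two contexts consistent and the frame rule composable, rather than being strictly needed to drive this particular implication.
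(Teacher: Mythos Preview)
Your proposal is correct and takes essentially the same approach as the paper: unfold the behaviour of $\Context{\cdot}{\cdot}$ on both sides and discharge the resulting conjuncts using the hypotheses, with $W \subseteq X$ giving monotonicity of $Local$, $W \cap Y = \emptyset$ giving the disjointness conjunct, and the refinement premise handling the behaviour conjunct. Your write-up is in fact more explicit than the paper's proof, which collapses the middle step to a single ``assumptions'' justification, and your closing remark that $Y \subseteq Z$ and $X \cap Z = \emptyset$ are not strictly consumed by the entailment is a valid observation.
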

\begin{proof}
  \begin{derivation}
    \step{beh_{P,Z}.\Context{X}{C}}
    \trans{\equiv}{behaviour definition}
    \step{Local.Z.p \land Z \cap X = \emptyset \land
      beh_{P,Z \cup X}.C}
    
    \trans{\entails}{assumptions}
    
    \step{Local.Y.p \land W \cap Y = \emptyset 
      \land beh_{P,Y \cup W}.A}
    
    \trans{\equiv}{behaviour definition}
    
    \step{beh_{P,Y}.\Context{W}{A} \hfill \qedhere}
    
  \end{derivation}
\end{proof}

We also obtain properties for refinement of enforced conditions.
The lemma below states that a refinement may be performed by
introducing a new enforced condition or by strengthening an existing
enforced condition \cite{Don09,DH09,DH12}.
\begin{lemma}
  \label{lem:enf-prop-intro-ref}
  If $C$ is a command, $g$ and $h$ are interval predicates, $P$ is a
  set of processes and $Y$, $Z$ are sets of locations, then both of the
  following hold.
  \begin{eqnarray}
    \label{eq:34}
    C & \sref_P^{Y,Z} & \enf{g}\st{C}
    \\
    \label{eq:10}
    h \entails g & \imp &  \enf{g}\st{C} \sref_P^{Y,Z} \enf{h}\st{C} 
  \end{eqnarray}
\end{lemma}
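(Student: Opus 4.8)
The plan is to unfold both sides of each refinement into the universal-implication ordering $\entails$ on interval predicates and then reduce to elementary properties of conjunction. By the definition of behaviour refinement, $A \sref_P^{Y,Z} C$ is exactly $beh_{P,Z}.C \entails beh_{P,Y}.A$, and by the definition of the enforced-condition construct, $beh_{P,Z}.(\enf{d}\st{C}) = d \land beh_{P,Z}.C$ for any interval predicate $d$. So the entire argument is a matter of rewriting with these two equations and then invoking the pointwise definition of $\entails$.

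For \refeq{eq:34}, I would unfold the concrete (right-hand) command to obtain behaviour $g \land beh_{P,Z}.C$, while the abstract command $C$ has behaviour $beh_{P,Z}.C$ (the contexts coincide, since the enforced condition alters neither the underlying command nor its context). The required implication $g \land beh_{P,Z}.C \entails beh_{P,Z}.C$ is then immediate by conjunction elimination: for every interval $\Delta$ and stream $s$, $(g \land beh_{P,Z}.C).\Delta.s$ implies $beh_{P,Z}.C.\Delta.s$.

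For \refeq{eq:10}, I would unfold both enforced commands, reducing the goal to $h \land beh_{P,Z}.C \entails g \land beh_{P,Z}.C$ under the hypothesis $h \entails g$. Working pointwise, fix $\Delta$ and $s$ and assume $(h \land beh_{P,Z}.C).\Delta.s$; then $beh_{P,Z}.C.\Delta.s$ holds directly, and $g.\Delta.s$ follows from $h.\Delta.s$ by applying the hypothesis $h \entails g$ at $\Delta, s$. Hence $(g \land beh_{P,Z}.C).\Delta.s$, which establishes the entailment. Equivalently, this is monotonicity of $\land$ in its first argument with respect to $\entails$, composed with reflexivity of $\entails$ on $beh_{P,Z}.C$.

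The only genuinely delicate point — and the one I would flag — is the bookkeeping of the two contexts $Y$ and $Z$: because $\enf{\cdot}\st{\cdot}$ modifies only the enforced predicate and leaves the base command $C$ (and hence its behaviour function) untouched, the abstract and concrete behaviours of $C$ are the same interval predicate, so the reductions above are sound precisely because the context is unchanged across the refinement. Apart from this observation, no part of the proof requires the internal structure of $beh$, the semantics of $C$, or any property of $g$ and $h$ beyond the supplied $h \entails g$; the result is therefore a direct consequence of the definitions together with the lattice-theoretic facts that $g \land x \entails x$ and that $\land$ is monotone under $\entails$.
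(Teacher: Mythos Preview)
The paper does not supply an explicit proof of this lemma; it is one of a family of basic refinement laws stated and then used without further argument. Your approach --- unfold the definitions of $\sref_P^{Y,Z}$ and of $\enf{\cdot}\st{\cdot}$, then appeal to conjunction elimination for \refeq{eq:34} and to monotonicity of $\land$ under $\entails$ for \refeq{eq:10} --- is exactly the intended argument, and it goes through cleanly when the abstract and concrete contexts agree.

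There is, however, a genuine gap in your treatment of the contexts. You assert that ``the abstract and concrete behaviours of $C$ are the same interval predicate'' because the enforced-condition construct leaves $C$ untouched, but the behaviour function is parameterised by the context as well as by the command: the abstract side of \refeq{eq:34} has behaviour $beh_{P,Y}.C$, not $beh_{P,Z}.C$, and nothing in the lemma's hypotheses forces $Y = Z$. Concretely, \refeq{eq:34} unfolds to $g \land beh_{P,Z}.C \entails beh_{P,Y}.C$, which in general requires $beh_{P,Z}.C \entails beh_{P,Y}.C$; this fails already for $C = \Idle$, $Z = \emptyset$, $Y = \{x\}$, since $\idle_{p,\emptyset}$ is $\True$ while $\idle_{p,\{x\}}$ is $\Always\neg\mcW.x.p$. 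The same issue affects \refeq{eq:10}. So the lemma as literally stated is only valid under an implicit side condition such as $Y = Z$ (which is indeed how every application in the paper instantiates it), or more generally $C \sref_P^{Y,Z} C$. Your final paragraph flags precisely the right concern but resolves it incorrectly: the context is a parameter of the refinement relation, not something determined by the command, so ``$\enf{\cdot}$ leaves $C$ untouched'' does not make $beh_{P,Y}.C$ and $beh_{P,Z}.C$ coincide.
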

The lemma below allows decomposition within sequential choice and
iteration, provided that the interval predicate under consideration
joins.
\begin{lemma}
  \label{lem:enf-sc-ref}
  If $C$, $C_1$ and $C_2$ are commands, $g$ is an interval predicate
  that joins, $Y$, $Z$ are sets of locations and $P$ is a set of
  processes, then
  \begin{eqnarray*}
    (\enf{g}\st{C_1 \ch C_2}) & \sref_P^{Y,Z} &  (\enf{g}\st{C_1}) \ch
    (\enf{g}\st{C_2}) \\
    (\enf{g}\st{C^\omega}) & \sref_P^{Y,Z} & (\enf{g}\st{C})^\omega
  \end{eqnarray*}
\end{lemma}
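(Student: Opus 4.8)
The plan is to unfold both refinements into plain universal-implication ($\entails$) statements between interval predicates, using $A \sref_P^{Y,Z} C \equiv (beh_{P,Z}.C \entails beh_{P,Y}.A)$ together with the clauses $beh.(\enf{g}\st{C}) = g \land beh.C$, $beh.(C_1 \ch C_2) = beh.C_1 \ch beh.C_2$ and $beh.(C^\omega) = (beh.C)^\omega$ from \reffig{fig:beh-def}. Because the sub-commands $C$, $C_1$, $C_2$ are identical on both sides (and the contexts $Y$, $Z$ agree for this structural lemma, so the behaviour of the identical sub-commands is the same on each side), the two obligations reduce, writing $b_i$ for $beh.C_i$ and $b$ for $beh.C$, to the purely algebraic entailments
\[
(g \land b_1) \ch (g \land b_2) \entails g \land (b_1 \ch b_2)
\qquad \mbox{and} \qquad
(g \land b)^\omega \entails g \land b^\omega .
\]

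For the sequential-composition case I would establish the two conjuncts of the right-hand side separately. The conjunct $b_1 \ch b_2$ is immediate by monotonicity of chop, simply dropping the two $g$-conjuncts. For the conjunct $g$, monotonicity of chop first gives $(g \land b_1) \ch (g \land b_2) \entails g \ch g$; I would then use $g \entails g^\omega$ (which holds since $g \equiv g \ch \Empty \entails g \ch g^\omega \entails g^\omega$, the last step being one unfolding of the fixed point) to obtain $g \ch g \entails g \ch g^\omega$, and close with the join hypothesis $g \ch g^\omega \entails g$. No empty-interval difficulty arises here: if the left-hand side holds on $\emptyset$, the only admissible split is $\emptyset = \emptyset \cup \emptyset$, which already forces $g$ to hold on $\emptyset$.

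For the iteration case I would again split the right-hand conjuncts. The conjunct $b^\omega$ follows from monotonicity of ${}^\omega$ applied to $g \land b \entails b$. For the conjunct $g$, monotonicity of ${}^\omega$ gives $(g \land b)^\omega \entails g^\omega$, after which I would reduce $g^\omega$ to $g$ by unfolding $g^\omega \equiv (g \ch g^\omega) \lor \Empty$ and applying join to the first disjunct.

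The main obstacle is exactly this last step on the empty interval. Join only yields $g^\omega \entails g \lor \Empty$, so on a non-empty interval we recover $g$ (as $\Empty$ fails there), but on $\emptyset$ the predicate $(g \land b)^\omega$ holds vacuously while the target still demands $g.\emptyset$. I therefore expect the iteration clause to need, beyond join, that $g$ holds on the empty interval (equivalently $\Empty \entails g$) --- a property enjoyed by every enforced condition used for the Treiber stack, since they are all of the $\Always$-form and hence trivially true on $\emptyset$. I would flag this as the sole genuinely semantic point of the argument, the remainder being routine fixed-point and monotonicity reasoning; alternatively, if one wants the lemma for wholly arbitrary joining $g$, one restricts the iteration clause to non-empty intervals.
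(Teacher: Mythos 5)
The paper states \reflem{lem:enf-sc-ref} without proof, so there is no argument of the authors' to compare against; judged on its own terms, your proof is sound and is almost certainly the intended one. Unfolding via $beh_{P,Z}.(\enf{g}\st{C}) = g \land beh_{P,Z}.C$, $beh.(C_1\ch C_2) = beh.C_1 \ch beh.C_2$ and $beh.(C^\omega) = (beh.C)^\omega$ reduces both clauses to exactly the entailments you give, and your chains
$(g\land b_1)\ch(g\land b_2) \entails g\ch g \entails g\ch g^\omega \entails g$ (using $g \entails g\ch\Empty \entails g\ch g^\omega$ and the join hypothesis) and $(g\land b)^\omega \entails g^\omega$, $(g\land b)^\omega \entails b^\omega$ by monotonicity are all valid under the paper's definitions of chop and $\omega$ as the greatest fixed point $\nu z \st (g\ch z)\lor\Empty$.

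More importantly, the ``sole genuinely semantic point'' you flag is a real defect in the lemma as stated, not an artefact of your approach. Since $\Empty \entails g^\omega$ for the greatest fixed point, $(g\land b)^\omega$ holds on the empty interval regardless of $g$, while the target $g \land b^\omega$ demands $g.\emptyset$; join only yields $g^\omega \entails g \lor \Empty$. A concrete counterexample is $g = \neg\Empty$, which joins (any witnessing split has a non-empty first component, and the infinite-chop disjunct carries $\neg\Empty$ directly) but fails on $\emptyset$, so the iteration clause is false for it. Your proposed side condition $\Empty \entails g$ is the right repair and is harmless for the paper: every condition the authors feed into this lemma (e.g.\ $r_1 \land \Always\neg\mcI.Top.p$ in the proof of the push operation) is built from $\Always$ and $\Box$ forms, all of which hold trivially on the empty interval. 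Your parenthetical remark about the contexts is a second, smaller instance of the same looseness: with $Y \neq Z$ the lemma implicitly needs $beh_{P,Z}.C_i \entails beh_{P,Y}.C_i$ (e.g.\ it fails for $C_i = \Idle$ when $Y \not\subseteq Z$), a hypothesis the paper omits here just as it does in its other structural refinement lemmas. Both observations strengthen rather than undermine the paper.
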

The following lemma allows decomposition of behaviours with enforced
properties over parallel composition, sequential composition and
non-deterministic choice. Note that decomposition over sequential
composition requires that the interval predicate under consideration
splits. 
\begin{lemma}
  \label{lem:decompenf}
  If $A \sdef \|_{p:P} A_p$, $C \sdef \|_{p:P} C_p$, $A_1$, $A_2$,
  $C_1$, and $C_2$ are commands, $Y, Z \subseteq Var \cup Addr$ and
  $g$ and $h$ are interval predicates such that $h$ splits, then both
  of the following hold.
  \begin{eqnarray}
    \label{eq:75}
    (\all p : P \st A_p \sref_p^{Y,Z} (\enf g \st C_p))\ \  & \imp &\
    \  A \sref_P^{Y,Z} (\enf g \st C)
    \\
    \label{eq:76}
    (A_1 \sref_P^{Y,Z} (\enf{h}\st{C_1})) \land 
    (A_2 \sref_P^{Y,Z} (\enf{h}\st{C_2}))\ \  &\imp &\ \   (A_1
    \ch A_2 \sref_P^{Y,Z} (\enf{h}\st{C_1 \ch C_2}))
    \\
    \label{eq:51}
    (A_1 \sref_P^{Y,Z} (\enf{g}\st{C_1})) \land 
    (A_2 \sref_P^{Y,Z} (\enf{g}\st{C_2}))\ \  &\imp &\ \   (A_1
    \sqcap A_2 \sref_P^{Y,Z} (\enf{g}\st{C_1 \sqcap C_2}))
  \end{eqnarray}

\end{lemma}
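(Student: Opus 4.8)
The plan is to unfold behaviour refinement $A \sref_P^{Y,Z} C$ to the universal implication $beh_{P,Z}.C \entails beh_{P,Y}.A$, to unfold each enforced command via $beh_{P,Z}.(\enf{g}\st{C}) \sdef g \land beh_{P,Z}.C$, and then to treat the three cases according to the top-level constructor of the concrete command. The non-deterministic choice (\ref{eq:51}) is the simplest. I would compute $g \land beh_{P,Z}.(C_1 \sqcap C_2) = g \land (beh_{P,Z}.C_1 \lor beh_{P,Z}.C_2)$ and distribute $\land$ over $\lor$ to obtain $(g \land beh_{P,Z}.C_1) \lor (g \land beh_{P,Z}.C_2)$, which is $beh_{P,Z}.(\enf{g}\st{C_1}) \lor beh_{P,Z}.(\enf{g}\st{C_2})$. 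Each disjunct is bounded by the corresponding hypothesis, yielding $beh_{P,Y}.A_1 \lor beh_{P,Y}.A_2 = beh_{P,Y}.(A_1 \sqcap A_2)$. No property of $g$ is required here.

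For the sequential case (\ref{eq:76}) I would fix $\Delta$ and $s$, assume $(h \land (beh_{P,Z}.C_1 \ch beh_{P,Z}.C_2)).\Delta.s$, and case-split on the two disjuncts in the definition of chop. In the splitting disjunct there exist $\Delta_1 \adjoins \Delta_2$ with $\Delta = \Delta_1 \cup \Delta_2$ on which $beh_{P,Z}.C_1$ and $beh_{P,Z}.C_2$ hold respectively. Here the hypothesis that $h$ splits is essential: from $h.\Delta.s$ together with $h \entails \Box h$ I obtain $h.\Delta_1.s$ and $h.\Delta_2.s$, since $\Delta_1, \Delta_2 \subseteq \Delta$. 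Thus $beh_{P,Z}.(\enf{h}\st{C_1}).\Delta_1.s$ and $beh_{P,Z}.(\enf{h}\st{C_2}).\Delta_2.s$, and the two hypotheses give $beh_{P,Y}.A_1.\Delta_1.s$ and $beh_{P,Y}.A_2.\Delta_2.s$, which recombine through the same split into $beh_{P,Y}.(A_1 \ch A_2).\Delta.s$. In the infinite disjunct ($\lub.\Delta = \infty$ with $beh_{P,Z}.C_1.\Delta.s$) I apply the first hypothesis directly to $h.\Delta.s \land beh_{P,Z}.C_1.\Delta.s$ and re-use the infinite disjunct of chop on the abstract side.

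The parallel case (\ref{eq:75}) is the main obstacle and I would prove it by induction on $|P|$, following the recursive definition of $beh_{P,Z}.(\Par_{p:P} C_p)$. The singleton base case $P = \{p\}$ is precisely the hypothesis $A_p \sref_p^{Y,Z} \enf{g}\st{C_p}$. For the inductive step I take the partition $P_1, P_2$ witnessing the ``otherwise'' clause on the concrete side; because parallel composition conjoins the two component behaviours over the \emph{same} interval, idempotence of $\land$ lets me attach a copy of $g$ to each component, reducing the goal to the two padded entailments $g \land beh_{P_i,Z}.((\Par_{p:P_i} C_p) \ch \Idle) \entails beh_{P_i,Y}.((\Par_{p:P_i} A_p) \ch \Idle)$. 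I would then reuse the \emph{same} partition on the abstract side, so it suffices to establish these two entailments and appeal to the induction hypothesis on $P_1$ and $P_2$.

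The delicate point — and the crux of the whole lemma — is the trailing $\Idle$: to feed the induction hypothesis (stated for $\Par_{p:P_i} C_p$ alone) I must split the chop that separates the parallel core from the idle tail and recover $g$ on the sub-interval carrying the core, while the idle tail refines $\Idle$ on the abstract side directly, with no enforced condition needed there. Propagating $g$ across this internal chop is exactly the manoeuvre used in (\ref{eq:76}) and is discharged by the splitting property, which holds for the enforced conditions of interest in the Treiber development, as these are permission predicates of the form $\Always(\ldots)$ and hence split. I expect the bookkeeping of the two chop disjuncts (finite split versus $\lub.\Delta = \infty$) and the empty-subset corner cases of the partition to be the only further, routine, obstacles.
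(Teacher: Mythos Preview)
The paper states this lemma without proof, so there is no reference argument to compare against. Your approach is essentially the right one and handles (\ref{eq:51}) and (\ref{eq:76}) cleanly: distributivity of $\land$ over $\lor$ for the choice case, and the splitting hypothesis on $h$ to push the enforced condition across the chop in the sequential case.

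For (\ref{eq:75}) you have correctly put your finger on the real difficulty. The recursive clause for $beh_{P,Z}.(\Par_{p:P} C_p)$ pads each component with a trailing $\Idle$ via a chop, so to apply the induction hypothesis on the sub-interval carrying $\Par_{p:P_i} C_p$ you must know $g$ there, not merely on the full interval. That step genuinely requires $g$ to split, which the lemma as stated does \emph{not} assume (only $h$ is assumed to split). Your workaround --- observing that every enforced condition actually used in the development is of the form $\Always(\ldots)$ and hence splits --- is exactly how the paper gets away with it in practice (e.g.\ the application to $\Always simTS$ in the proof of Theorem~\ref{thm:drefmain}), but as a proof of the lemma \emph{as written} this leaves a gap. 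The honest fix is to add ``$g$ splits'' to the hypotheses of (\ref{eq:75}); without it the claim is too strong. A second, smaller wrinkle in the same place is that the trailing $\Idle$ must itself refine: you need $\idle_{p,Z} \entails \idle_{p,Y}$, i.e.\ $Y \subseteq Z$, which is again tacit in the paper's uses but absent from the statement.
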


\section{A coarse-grained linearisable abstraction}
\label{sec:coarse-grain-line}

In this section, we develop the coarse-grained abstraction of the
Trieber stack (\refsec{sec:abstraction-mcl}) and develop
interval-based data refinement rules
(\refsec{sec:data-refinement}). In \refsec{sec:line-via-forw}, we
prove that the coarse-grained abstraction is linearisable with respect
to the canonical stack specification from \reffig{fig:abs} via data
refinement and in \refsec{sec:remark:-import-prov}, we discuss the
importance of proving linearisability of coarse-grained abstractions.

\subsection{The abstraction $LS(P)$}
\label{sec:abstraction-mcl}

\begin{figure}[t]

  \centering
  
  $\begin{array}[t]{@{}r@{~~}c@{~~}l@{}}
    EnvSt(p) & \sdef & \enf 
    \begin{array}[t]{@{}l@{}}
      \begin{array}[t]{@{}l@{}}
      \neg WriteSomeLoc.(SAddr \cup \{Top
      \}).p 
    \end{array}
    \end{array}
    \st 
    \True
    \\
    \\
    LSetup(p, x) & \sdef &
    newNode(n_p) \ch (n_p\cdota key) \hasgn x 
    \smallskip \\ 
    LDoPush(p) & \sdef &     
    \begin{array}[t]{@{}l@{}}
      \enf 
      IntFree.(SAddr \cup \{Top, n_p \cdota key, n_p \cdota
      nxt\}).p \st \qquad \qquad \\
      \hfill (n_p\cdota nxt) \hasgn ptr.(\derefe Top)
      \ch  Top \hasgn (n_p, ctr.(\derefe Top) + 1)
    \end{array}
    \smallskip \\ 
    LPush(p,x) & \sdef &
    LSetup(p, x) \ch EnvSt(p) \ch LDoPush(p) 
    \\ 
    \\
    LEmpty(p, rv_p) & \sdef & \Guard{ptr.(\derefe Top) = null} \ch rv_p \asgn
    Empty \smallskip \\ 
    LDoPop(p, rv_p) & \sdef &
    \enf IntFree.(SAddr \cup \{Top\}).p \st 
    \begin{array}[t]{@{}l@{}}
        (\Guard{ptr.(\derefe Top) \neq null} \ch rv_p \asgn ptr. (\derefe Top) \mapsto key \ch {}\\
         Top \hasgn
        (ptr. (\derefe Top) \mapsto nxt, ctr. (\derefe Top)+1))
    \end{array}
    \smallskip \\
    LPop(p, rv_p) & \sdef & EnvSt(p) \ch (LEmpty(p, rv_p) \sqcap
    LDoPop(p, rv_p)) 
    \\
    \\
    LPP(p) & \sdef & \Idle \ch \left(\bigsqcap_{x:Val}
      LPush(p,x)\right) \sqcap 
    LPop(p,arv_p) \ch \Idle\smallskip \\
    LS(P) & \sdef &
    \Context{Top, FAddr }{
      \begin{array}[c]{@{}l@{}}
        \Init TInit \st 
        \Par_{p: P} 
        \Context{n_p, rv_p}{LPP(p)^\omega}
    \end{array}}
  \end{array}$
  \figrule
  \caption{A coarse-grained abstraction $LS(P)$ of the Treiber Stack}
\label{fig:lin-abs}
\end{figure}

The coarse-grained abstraction (see \reffig{fig:lin-abs}) uses $Top$
to obtain a pointer to the top of the stack. In addition, each process
uses local variables $n_p$ and $rv_p$ for the new node to be inserted
(by the push) and the value to be returned (by the pop),
respectively. The stack addresses $SAddr$ are defined to be the set of
addresses that are reachable from $Top$. Hence, for a state $\sigma$,
we define the following:
\begin{eqnarray*}
  iter_0.\sigma & \sdef & ptr.(\sigma.Top)
  \\
  iter_{n+1}.\sigma & \sdef &\left\{
    \begin{array}[c]{@{}l@{\qquad}l}
      null & \textrm{if $eval.(iter_n.\sigma \mapsto nxt).\sigma = null$}
      \\
      eval.(iter_n.\sigma \mapsto nxt).\sigma & {\rm otherwise}
    \end{array}\right.
  \\
  SAddr.\sigma & \sdef & \{node \in Addr \mid \exists n : \nat \st
  node \in \{eval.(iter_n\cdota key).\sigma, eval.(iter_n\cdota nxt).\sigma\}  \} 
\end{eqnarray*}
Hence, $SAddr.\sigma$ denotes the addresses that are reachable from
$Top$. 

Within program $LS(p)$ in \reffig{fig:lin-abs}, each process $p$
iteratively chooses executes $LPP(p)$, which at each iteration
performs some idling, then for a non-deterministically chosen value
$x$ executes $LPush(p,x)$ or $LPop(p, rv_p)$, and then performs some
more idling. Within $LPush(p,x)$, command $LSetup(p, x)$ initialises
the push and $EnvSt(p)$ allows some execution that does not modify the
stack, and $LDoPush(p)$ performs the actual push. The coarse-grained
pop is modelled by $LPop(p)$, which allows some initial
(non-interfering) idling, then either behaves as $LEmpty(p, rv_p)$, which
models a coarse-grained pop on an empty stack, or $LDoPop(p, rv_p)$, which
models a coarse-grained pop that removes the top element of a
non-empty stack.

Like many CAS-based implementations, Treiber's stack may retry its
main operation a number of times before succeeding. An abstraction of
this behaviour is modelled by the command $EnvSt(p)$, which guarantees
that none of the addresses corresponding to the stack have been
modified.

For the rest of this paper, we assume that the set of variables of the
coarse-grained abstraction is given by $L$, which is defined as follows.
\begin{eqnarray*}
  L & \sdef & \{Top\} \cup FAddr \cup \bigcup_{p:P}\{n_p, rv_p\}
\end{eqnarray*}

\begin{example}
  One can prove that the addresses in $SAddr$ (i.e., those
  corresponding to the stack) are not modified by $LS(P)$. In
  particular one may prove
  \begin{eqnarray}
    \label{eq:56}
    beh_{P, L}.LS(P) & \entails & \all p : P \st  \neg WriteSomeLoc.SAddr.p
  \end{eqnarray}
\end{example}

\subsection{Data refinement}
\label{sec:data-refinement}

It is possible to prove behaviour refinement between the
coarse-grained program $LS(P)$ and the fine-grained implementation
$TS(P)$. However, it is not immediately obvious that $LS(P)$ is
linearisable. It turns out that a linearisability proof of $LS(P)$ is
non-trivial, however, the proof is simplified because large portion of
the code are executed atomically.

As shown by Doherty et al \cite{CDG05,DGLM04,Doherty03} and again by
Derrick et al \cite{SWD12,DSW11TOPLAS,DSW11}, a sound method for
proving linearisability is to verify data refinement from an abstract
representation of the data structure being implemented to the concrete
program. Using the framework of input/output automata \cite{Lynch89},
Doherty \cite{Doherty03} constructs a so-called \emph{canonical
  automata} in which each operation executes by invoking the
operation, then executes an (internal) atomic step (corresponding a
step of the data type being implemented) and then returns to an idle
state. The argument made is that every trace of the canonical automata
is a linearisable \cite{Doherty03,CDG05,DGLM04} and hence any
refinement of this automata must also be linearisable. However,
because this claim is not formally verified, Derrick et al present an
extension that allows links data refinement to the Herlihy and Wing's
original definition \cite{SWD12,DSW11,DSW11TOPLAS}. In this paper, we
apply this extended method to an interval-based setting that allows
true concurrency.

An abstract program $\mcA$ is simulated by concrete program $\mcC$
with respect to a simulation predicate $sim$ if the initialisation of
$\mcC$ together with $sim$ implies the initialisation of $\mcA$ and any
behaviour of $\mcC$ over an interval in which $sim$ holds implies the
behaviour of $\mcA$, and additionally $sim$ holds throughout that
interval. 
For streams $s_1$ and $s_2$, we  define
\begin{eqnarray*}
s_1 \Cup s_2 & \sdef & \lambda t : Time \st s_1.t \cup s_2.t
\end{eqnarray*}
If the state spaces corresponding to $s_1$ and $s_2$ are disjoint,
then for each $t \in Time$, $(s_1 \Cup s_2).t$ is a state and hence
$s_1 \Cup s_2$ is a stream.
\begin{definition}[Data refinement]
  \label{def:forward-sim}
  Suppose $P \subseteq Proc$, $Y, Z \subseteq Var \cup Addr$ are
  locations such that $Y \cap Z = \emptyset$, $AInit \in StatePred_Y$, and
  $CInit\in StatePred_Z$. We say $\mcA \sdef \Context{Y}{\Init AInit
    \st A}$ is \emph{data refined} by $\mcC \sdef \Context{Z}{\Init
    CInit \st C}$ with respect to a \emph{simulation predicate} $sim
  \in StatePred_{Y \cup Z}$ iff both:
  \begin{eqnarray}
    & & \all \sigma_a : State_Y, \sigma_c : State_Z \st CInit.\sigma_a
    \land sim.(\sigma_a \cup \sigma_c)
    \imp AInit.\sigma_a
    \label{eq:refinit}
    \\
    & & \begin{array}[c]{@{}l@{}}
      \all s_c : Stream_Z, \Delta :
      Interval, \sigma : State_Y \st \\
      \qquad sim.(\sigma \cup s_c.(\glb.\Delta - 1))
      \land beh_{P, Z}.C.\Delta.s_c \imp {} \\
      \qquad \qquad \exists s_a :  Stream_Y
      \st (\sigma = s_a.(\glb.\Delta - 1)) \land \Always sim.\Delta.(s_a
      \Cup s_c) \land beh_{P,Y}.A.\Delta.s_a
    \end{array}
    \label{eq:ref}
  \end{eqnarray}
\end{definition}
This establishes a data refinement \cite{deRoever-Englhardt98} between
the abstract program $A$ and concrete program $C$ in
interval-based setting, where the programs execute in parallel in a
truly concurrent manner. As in traditional data refinement, our
definition relies on a refinement relation $sim$ which links the
concrete and abstract states. Condition \refeq{eq:refinit} ensures that
every initialisation of the concrete program that is related to an
abstract state via $sim$ must imply an initialisation of the abstract
program. By \refeq{eq:ref}, for every concrete stream $s_c$, interval
$\Delta$ and abstract state $\sigma$, provided that both 
\begin{enumerate}
\item $sim$ holds between $\sigma$ and the state of $s_c$ just before
  $\Delta$ and
\item the concrete program executes in $s_c$ over $\Delta$
\end{enumerate}
then there exists an abstract stream $s_a$ such that
\begin{enumerate}
\item the state of $s_a$ that immediately precedes $\Delta$ is $\sigma$,
\item $sim$ holds in the combined stream $s_a \Cup s_a$ throughout $\Delta$, and
\item the abstract program executes in $s_a$ over $\Delta$. 
\end{enumerate}

Proving condition \refeq{eq:ref} directly is difficult because it does
not decompose. However, a predicate of the form $p \imp (\exists x \st
q \land r)$ holds if both $p \imp \exists x \st q$ and $\all x \st p
\land q \imp r$ hold. Hence, \refeq{eq:ref} can be proved by showing
that both of the following hold.
\begin{eqnarray}
  \begin{array}[t]{@{}l@{}}
    \all s_c : Stream_Z, \Delta :
    Interval, \sigma : State_Y \st \\
    \qquad sim.(\sigma \cup s_c.(\glb.\Delta - 1))
    \land beh_{P, Z}.C.\Delta.s_c \imp {} \\
    \qquad \qquad \exists s_a :  Stream_Y
    \st (\sigma = s_a.(\glb.\Delta - 1)) \land \Always sim.\Delta.(s_a 
    \Cup s_c) 
  \end{array}
  \label{eq:ref1}\\
  \begin{array}[t]{@{}l@{}}
    \all s_c : Stream_Z, \Delta :
    Interval, s_a : Stream_Y \st \\
    \qquad \Always sim.\Delta.(s_a \Cup s_c)
    \land beh_{P, Z}.C.\Delta.s_c \imp beh_{P,Y}.A.\Delta.s_a
  \end{array}
  \label{eq:ref2}
\end{eqnarray}
By \refeq{eq:ref1}, the simulation condition $sim$ must be such that
for any execution of the concrete program that executes from a state
that satisfies $sim$, there must exist an abstract stream such that
$sim$ holds throughout the interval.  To simplify representation of
intervals of the form in \refeq{eq:ref1}, we introduce the following
notation.
\begin{eqnarray*}
  (g \linkto_Y c).\Delta.s & \sdef & \begin{array}[t]{@{}l@{}}
    \all \sigma : State_Y \st 
    \begin{array}[t]{@{}l@{}}
      c.(\sigma \cup s.(\glb.\Delta - 1)) \land g.\Delta.s \imp
      \qquad \\ \exists s_y : Stream_Y \st \sigma =
      s_y.(\glb.\Delta - 1) \land \Always c.\Delta.(s_y \Cup s)
    \end{array}
  \end{array}
  \\
  g \dref_{Y, Z} c & \sdef & \begin{array}[t]{@{}l@{}}
      \all s_z : Stream_Z, \Delta : Interval
      \st 
      (g \linkto_Y c).\Delta.s_z
    \end{array}
\end{eqnarray*}

By \refeq{eq:ref2} for any concrete and abstract streams and interval,
if the concrete program executes in the interval and forward
simulation holds throughout the interval, then it must be possible to
execute the abstract program in the interval and stream. Because $Y
\cap Z = \emptyset$, we may further simplify \refeq{eq:ref2} to
$\Always sim \land beh_{P, Z}.C \entails beh_{P,Y}.A$, which may be
written using behaviour refinement and enforced properties as
\begin{eqnarray}
  \label{eq:18}
   A & \sref_P^{Y,Z} & \enf\Always sim \st C
\end{eqnarray}
We prove this as follows: 
\begin{derivation}
  \step{\refeq{eq:ref2}}

  \trans{=}{$Y \cap Z = \emptyset$}

  \step{
    \begin{array}[t]{@{}l@{}}
      \all s_c : Stream_Z, \Delta :
      Interval, s_a : Stream_Y \st \\
      \qquad \Always sim.\Delta.(s_a \Cup s_c)
      \land beh_{P, Z}.C.\Delta. (s_a \Cup s_c) \imp beh_{P,Y}.A.\Delta. (s_a \Cup s_c)
    \end{array}}

  \trans{=}{logic}

  \step{
    \begin{array}[t]{@{}l@{}}
      \all s : Stream_{Y \cup Z}, \Delta :
      Interval \st \Always sim.\Delta.s
      \land beh_{P, Z}.C.\Delta.s \imp beh_{P,Y}.A.\Delta.s
    \end{array}}

  \trans{=}{definition of `$\entails$' and `$\sref$'}

  \step{\refeq{eq:18}}



\end{derivation}

Condition \refeq{eq:ref1} can also be decomposed. First, we prove the
following lemma that allows one to decompose a proof of a simulation
predicate over chop, which in turn enables decomposition of sequential
composition. Similar proof techniques for relational frameworks are
well studied \cite{deRoever-Englhardt98}.
\begin{lemma}
  \label{lem:simulation}
  Suppose $p \in Proc$, $Y, Z \subseteq Var \cup Addr$ such that $Y
  \cap Z = \emptyset$, $g_1, g_2 \in IntvPred_Z$ and $ c \in
  StatePred_{Y \cup Z}$. Then $((g_1 \ch g_2) \dref_{Y, Z}  c)$
  holds if both $(g_1 \mathbin{\dref_{Y, Z}}  c)$ and $(g_2
  \mathbin{\dref_{Y, Z}} c)$ hold. 
\end{lemma}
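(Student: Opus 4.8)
The plan is to unfold the definitions of $\dref_{Y,Z}$ and $\linkto_Y$ and reduce the statement to the following concrete obligation: for an arbitrary concrete stream $s_z \in Stream_Z$, interval $\Delta$, and abstract state $\sigma \in State_Y$ satisfying $c.(\sigma \cup s_z.(\glb.\Delta - 1))$ and $(g_1 \ch g_2).\Delta.s_z$, I must construct a stream $s_y \in Stream_Y$ with $\sigma = s_y.(\glb.\Delta - 1)$ and $\Always c.\Delta.(s_y \Cup s_z)$. I would then case-split on the two disjuncts in the definition of chop.

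For the second (infinite) disjunct, where $\lub.\Delta = \infty$ and $g_1.\Delta.s_z$, the hypothesis $g_1 \dref_{Y,Z} c$ applied at $s_z$, $\Delta$ and $\sigma$ delivers the required $s_y$ directly, since the premise $c.(\sigma \cup s_z.(\glb.\Delta - 1)) \land g_1.\Delta.s_z$ already holds. The degenerate instances of the first disjunct in which one of $\Delta_1$, $\Delta_2$ is empty collapse the same way: if $\Delta_1 = \emptyset$ then $\Delta_2 = \Delta$ and the single hypothesis $g_2 \dref_{Y,Z} c$ suffices, and symmetrically when $\Delta_2 = \emptyset$.

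The substantive case is the first disjunct with both $\Delta_1$ and $\Delta_2$ non-empty, so $\Delta = \Delta_1 \cup \Delta_2$, $\Delta_1 \adjoins \Delta_2$, $g_1.\Delta_1.s_z$ and $g_2.\Delta_2.s_z$. Here I would proceed in three moves. First, since $\Delta_1$ is the earlier block, $\glb.\Delta_1 = \glb.\Delta$, so applying $g_1 \dref_{Y,Z} c$ on $\Delta_1$ yields $s_y^1$ with $\sigma = s_y^1.(\glb.\Delta - 1)$ and $\Always c.\Delta_1.(s_y^1 \Cup s_z)$. Second, I set $\sigma' \sdef s_y^1.(\lub.\Delta_1)$ and read off the boundary fact: because $\lub.\Delta_1 \in \Delta_1$, the $\Always$ clause gives $c.(\sigma' \cup s_z.(\lub.\Delta_1))$, and since $\Delta_1 \adjoins \Delta_2$ together with $\Delta_1 \cup \Delta_2 \in Intv$ forces $\lub.\Delta_1 = \glb.\Delta_2 - 1$, this is exactly $c.(\sigma' \cup s_z.(\glb.\Delta_2 - 1))$. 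Third, applying $g_2 \dref_{Y,Z} c$ on $\Delta_2$ at the state $\sigma'$ yields $s_y^2$ with $\sigma' = s_y^2.(\glb.\Delta_2 - 1)$ and $\Always c.\Delta_2.(s_y^2 \Cup s_z)$. I then define the witness by gluing, $s_y.t \sdef s_y^1.t$ for $t \leq \lub.\Delta_1$ and $s_y.t \sdef s_y^2.t$ otherwise, which is a well-defined stream over $Y$.

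Finally I would verify the two conjuncts for this $s_y$. The equality $\sigma = s_y.(\glb.\Delta - 1)$ holds because $\glb.\Delta - 1 < \glb.\Delta_1 \leq \lub.\Delta_1$ selects the $s_y^1$ branch. For $\Always c.\Delta.(s_y \Cup s_z)$ I would split $\Delta$ into $\Delta_1 \cup \Delta_2$: every $t \in \Delta_1$ has $t \leq \lub.\Delta_1$, so $s_y.t = s_y^1.t$ and $c$ holds by the first $\Always$ clause, while every $t \in \Delta_2$ has $t \geq \glb.\Delta_2 > \lub.\Delta_1$, so $s_y.t = s_y^2.t$ and $c$ holds by the second. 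The main obstacle is precisely this boundary handoff in the middle move: one must check that $\Delta_1$ is finite (guaranteed, since $\Delta_2 \neq \emptyset$ and $\Delta_1 \adjoins \Delta_2$ preclude $\lub.\Delta_1 = \infty$), that the single time point $\glb.\Delta_2 - 1$ at which the two sub-refinements must agree is exactly $\lub.\Delta_1$, and that the state $\sigma' = s_y^1.(\lub.\Delta_1)$ produced by the first application is precisely the pre-state fed into the second, so that the glued stream carries $c$ continuously across the seam.
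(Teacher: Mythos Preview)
Your proposal is correct and follows essentially the same route as the paper: apply $g_1 \dref_{Y,Z} c$ on $\Delta_1$, extract the boundary state $\sigma' = s_y^1.(\lub.\Delta_1)$ using $\lub.\Delta_1 = \glb.\Delta_2 - 1$, apply $g_2 \dref_{Y,Z} c$ on $\Delta_2$, and glue the two abstract streams. You are in fact more careful than the paper's own proof, which expands only the first disjunct of chop and does not separately discuss the infinite disjunct or the degenerate empty-interval cases; your explicit treatment of those is an improvement in rigour, not a departure in strategy.
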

\begin{proof}
  For an arbitrarily chosen $\sigma \in State_Y$, $\Delta \in
  Interval$ and $s_z \in Stream_Z$, we have the following
  calculation. 
  \begin{derivation}
    \step{
      \begin{array}[t]{@{}l@{}}
        c.(\sigma \cup s_z.(\glb.\Delta - 1)) \land (g_1 \ch
        g_2).\Delta.s_z 
      \end{array}
    }

    \trans{=}{definition of `;'}

    \step{
      \begin{array}[t]{@{}l@{}}
        c.(\sigma \cup s_z.(\glb.\Delta - 1)) \land  (\exists \Delta_1, \Delta_2: Interval \st (\Delta_1 \cup \Delta_2 =
        \Delta)
        \land (\Delta_1 \adjoins \Delta_2) \land g_1.\Delta_1 \land 
        g_2.\Delta_2.s_z)
    \end{array}
  }

    \trans{\imp}{logic and $\glb.\Delta_1 = \glb.\Delta$}

    \step{
      \exists \Delta_1, \Delta_2: Interval \st 
      \begin{array}[t]{@{}l@{}}
      (\Delta_1 \cup \Delta_2 =
      \Delta)
      \land (\Delta_1 \adjoins \Delta_2)  \land 
      c.(\sigma \cup s_z.(\glb.\Delta - 1)) \land g_1.\Delta_1.s_z \land 
      g_2.\Delta_2.s_z) 
    \end{array}
  }



    \trans{\imp}{use $(g_1 \dref_{Y,Z} c)$ 
      and logic}

    \step{
      \exists \Delta_1, \Delta_2: Interval, s_y : Stream_Y \st  
      \begin{array}[t]{@{}l@{}}
        (\Delta_1 \cup \Delta_2 = \Delta )
        \land (\Delta_1 \adjoins \Delta_2) \land \\
        \sigma =
        s_y.(\glb.\Delta_1 - 1) 
        \land \Always c.\Delta_1.(s_y \Cup s_z) \land 
        g_2.\Delta_2.s_z
      \end{array}
    }

    \trans{=}{use $\Always c.\Delta_1.(s_y \Cup s_z)$, 
      $\Delta_1 \adjoins \Delta_2$ }

    \step{
      \exists \Delta_1, \Delta_2: Interval, s_y : Stream_Y \st  
      \begin{array}[t]{@{}l@{}}
        (\Delta_1 \cup \Delta_2 = \Delta )
            \land (\Delta_1 \adjoins \Delta_2) \land \\
            \sigma =
            s_y.(\glb.\Delta_1 - 1) 
            \land \Always c.\Delta_1.(s_y \Cup s_z) \land \\
            (\exists \sigma' : State_Y \st c.(\sigma' \cup
            s_z.(\glb.\Delta_2 - 1))) \land 
            g_2.\Delta_2.s_z
      \end{array}
    }

    \trans{\imp}{logic and $(g_2 \dref_{Y,Z} c)$}

    \step{
      \exists \Delta_1, \Delta_2: Interval, s_y, s_y' : Stream_Y,\sigma' : State_Y \st 
      \begin{array}[t]{@{}l@{}}
            (\Delta_1 \cup \Delta_2 = \Delta )
            \land (\Delta_1 \adjoins \Delta_2) \land \\
            \sigma =
            s_y.(\glb.\Delta_1 - 1) 
            \land \Always c.\Delta_1.(s_y \Cup s_z) \land \\
            \sigma' = s_y'.(\glb.\Delta_2 - 1) 
            \land \Always c.\Delta_2.(s_y' \Cup s_z) 
      \end{array}
    }

    \trans{\imp}{can pick $s_y'' \in Stream_Y$ such that $\all t : \Delta_1 \st s_y''.t =
      s_y.t$ and $\all t : \Delta_2 \st s_y''.t =
      s_y'.t$,}
    \trans{}{then rename $s_y''$ to $s_y$}
    
    \step{
      \exists \Delta_1, \Delta_2: Interval, s_y : Stream_Y,\sigma' : State_Y \st 
      \begin{array}[t]{@{}l@{}}
        (\Delta_1 \cup \Delta_2 = \Delta)
        \land (\Delta_1 \adjoins \Delta_2) \land \\
        \sigma =
        s_y.(\glb.\Delta_1 - 1) 
        \land \Always c.\Delta_1.(s_y \Cup s_z) \land \\
        \sigma' = s_y.(\glb.\Delta_2 - 1) 
        \land \Always c.\Delta_2.(s_y \Cup s_z) 
      \end{array}
    }

    \trans{\imp}{logic, $\Always c$ joins, and $\glb.\Delta_1 = \glb.\Delta$}
    
    \step{
      \exists s_y : Stream_Y \st 
      \begin{array}[t]{@{}l@{}}
            \sigma =
            s_y.(\glb.\Delta - 1) 
            \land \Always c.\Delta.(s_y \Cup s_z)
      \end{array}
    \hfill \qedhere}

  \end{derivation}
\end{proof}

We use the following lemma to further decompose proof obligation
\refeq{eq:ref1}. 
\begin{lemma}
  \label{lem:link-split}
  If $Y, Z \subseteq Var$ such that $Y \cap Z = \emptyset$, $g \in
  IntvPred_Z$, $s_z \in Stream_Z$ and $\Delta \in Interval$.  Then $
  (g \linkto_{Y} c).\Delta.s_z $ holds
  holds if there exists a $w \in StatePred_Z$ such that:
  \begin{eqnarray}
    \label{eq:46}
    g.\Delta.s_z \imp \Box(\Always  w \linkto_{Y}
    c).\Delta.s_z \land   \Box(\Always \neg w \linkto_{Y}
    c).\Delta.s_z
  \end{eqnarray}
\end{lemma}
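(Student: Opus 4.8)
The plan is to unfold the definition of $\linkto_Y$ and prove the implication pointwise. Fix $\sigma \in State_Y$ and assume both $c.(\sigma \cup s_z.(\glb.\Delta - 1))$ and $g.\Delta.s_z$; I must produce an abstract stream $s_y$ with $\sigma = s_y.(\glb.\Delta-1)$ and $\Always c.\Delta.(s_y \Cup s_z)$. Since $g.\Delta.s_z$ is assumed, hypothesis \refeq{eq:46} hands me both $\Box(\Always w \linkto_Y c).\Delta.s_z$ and $\Box(\Always \neg w \linkto_Y c).\Delta.s_z$; by the definition of $\Box$ these give the corresponding link property on every subinterval of $\Delta$. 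The key observation is that, because $w \in StatePred_Z$, the truth value of $w$ along $\Delta$ is determined by $s_z$ alone, independently of the abstract stream I am building.

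I would then partition $\Delta$ into its maximal runs $\Delta_1, \Delta_2, \ldots$ (ordered left to right) on which $t \mapsto w.(s_z.t)$ is constant. Each run is a subinterval of $\Delta$, consecutive runs satisfy $\Delta_i \adjoins \Delta_{i+1}$ with $\glb.\Delta_{i+1} = \lub.\Delta_i + 1$, and their union is $\Delta$; since the $\Delta_i$ are disjoint integer intervals they are indexed by an initial segment of $\nat$. I construct abstract streams $s_y^{(i)}$ and boundary states $\sigma_i$ recursively, setting $\sigma_1 \sdef \sigma$. Given $\sigma_i$ together with $c.(\sigma_i \cup s_z.(\glb.\Delta_i - 1))$, I apply the boxed link property of the appropriate polarity to the subinterval $\Delta_i$: on a run where $w$ holds throughout, $(\Always w).\Delta_i.s_z$ is true, so $((\Always w)\linkto_Y c).\Delta_i.s_z$ yields some $s_y^{(i)}$ with $\sigma_i = s_y^{(i)}.(\glb.\Delta_i-1)$ and $\Always c.\Delta_i.(s_y^{(i)}\Cup s_z)$, and symmetrically for a $\neg w$-run. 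I then set $\sigma_{i+1} \sdef s_y^{(i)}.(\lub.\Delta_i)$. The carry-forward is sound because $\Always c.\Delta_i.(s_y^{(i)}\Cup s_z)$ gives $c.(\sigma_{i+1} \cup s_z.(\lub.\Delta_i))$, and $\lub.\Delta_i = \glb.\Delta_{i+1} - 1$, which is precisely the precondition needed to apply the link on $\Delta_{i+1}$; for $i=1$ the precondition is the assumption $c.(\sigma \cup s_z.(\glb.\Delta-1))$.

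Finally I glue the pieces into a single stream: define $s_y.t \sdef s_y^{(i)}.t$ for $t \in \Delta_i$, set $s_y.(\glb.\Delta - 1) \sdef \sigma$, and fix $s_y.t$ arbitrarily elsewhere. Each $t \in \Delta$ lies in a unique run, so $s_y$ is well defined and inherits $\Always c.\Delta.(s_y \Cup s_z)$ from the per-run conditions, while $s_y.(\glb.\Delta-1)=\sigma$ holds by construction (it agrees with $s_y^{(1)}.(\glb.\Delta_1-1)=\sigma_1$). This exhibits the required $s_y$ and establishes $(g \linkto_Y c).\Delta.s_z$. The main obstacle is this gluing step together with its boundary bookkeeping: I must check that the abstract state carried out of each run coincides with the reference state demanded just before the next run, which is exactly where $c \in StatePred_{Y\cup Z}$ and the adjacency $\glb.\Delta_{i+1}=\lub.\Delta_i+1$ are essential; and, since there may be infinitely many runs, the recursive selection of the $s_y^{(i)}$ requires an appeal to dependent choice. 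A minor side condition is that $\glb.\Delta$ be finite, as presupposed by the occurrence of $s_z.(\glb.\Delta-1)$ in $\linkto_Y$; the empty interval is handled trivially.
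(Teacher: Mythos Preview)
Your proposal is correct and follows essentially the same approach the paper intends: partition $\Delta$ into maximal runs on which $w$ is constant, apply the appropriate boxed link hypothesis on each run, and glue the resulting abstract streams together by carrying the final abstract state of one run forward as the initial state of the next, exactly as in the proof of \reflem{lem:simulation}. The paper's own proof is only a one-line sketch that defers to \reflem{lem:simulation} (and its remark that ``either $\Always w.\Delta'.s_z$ or $\Always \neg w.\Delta'.s_z$ holds'' for an arbitrary prefix $\Delta'$ is imprecise as written), so your version actually supplies the missing detail --- in particular the boundary bookkeeping $\lub.\Delta_i = \glb.\Delta_{i+1}-1$ and the need for dependent choice when there are infinitely many runs.
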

\begin{proof}
  For an arbitrarily chosen $\sigma \in State_Y$, $\Delta \in
  Interval$ and $s_z \in Stream_Z$ and prefix $\Delta'$ of $\Delta$,
  we have that either $\Always w.\Delta'.s_z$ or $\Always \neg
  w.\Delta'.s_z$ holds. By \refeq{eq:46}, we have that and a similar calculation to
  the proof of \reflem{lem:simulation}. \hfill \qedhere
    
\end{proof}
\begin{figure}[t]
  \centering 

  $\begin{array}[t]{@{}r@{~~}c@{~~}l@{}} 
    \mathsf{start\_record}_p (v, e) & \sdef &
    \exists k \st \prev.\ora{e = k} \land \ola{v = k \land \mcW.v.p}
    \\
    \mathsf{end\_record}_p (v, e) & \sdef &
    \exists k \st \ora{e = k} \ch \ceil{(v = k) \land \mcW.v.p}
    \\
    ARecord(Inv, Res) & \sdef & \mathsf{end\_record}_p(HA, HA \cat \aang{Inv,Res})
    \\
    LRecord(Inv, Res) & \sdef &
    \begin{array}[t]{@{}l@{}}
      \mathsf{start\_record}_{p}(HL, HL
      \cat \aang{Inv}) \land 
      \mathsf{end\_record}_{p}(HL, HL \cat \aang{Res})
    \end{array}
    \medskip \\
    HAPush(p,x) & \sdef &
    \begin{array}[c]{@{}ll@{}}
      \enf ARecord(push_p^I(x),push_p^R)
      \st BPush(p, x) 
    \end{array}
    \smallskip 
    \\
    HAEmpty(p, arv_p) & \sdef &
    \begin{array}[c]{@{}ll@{}}
      \enf ARecord(pop_p^I,pop_p^R(arv_p))
      \st BEmpty(p, arv_p) 
    \end{array}
    \\
    HADoPop(p, arv_p) & \sdef &
    \begin{array}[c]{@{}ll@{}}
      \enf ARecord(pop_p^I,pop_p^R(arv_p))
      \st BPop(p, arv_p) 
    \end{array}
    \\
    HAPop(p, arv_p) & \sdef &
    HAEmpty(p, arv_p) \sqcap HADoPop(p, arv_p)
    \smallskip \\
    HAPP(p, arv_p) & \sdef & \Idle \ch \left( \left(\textstyle\bigsqcap_{x : Val}\
      HAPush(p,x)\right) \sqcap  HAPop(p, arv_p)\right) \ch \Idle
    \medskip \\
    HAS(P) & \sdef & \Context{S}{\begin{array}[c]{@{}l@{}} \Init S = \aang{} \st
        \Par_{p: P}\ 
        HAPP(p)^\omega
      \end{array}}
    \\
    \\
    HLPush(p,x) & \sdef &
    \begin{array}[t]{@{}l@{}}
      \enf
      \begin{array}[t]{@{}l@{}}
        LRecord(push_p^I(x),push_p^R)\st LPush(p,x)
      \end{array}
    \end{array}
    \\ 
    HLEmpty(p,rv_p) & \sdef &
    \begin{array}[t]{@{}l@{}}
      \enf
      LRecord(pop_p^I, pop_p^R(rv_p)) \st LEmpty(p, rv_p)
    \end{array}
    \\ 
    HLDoPop(p,rv_p) & \sdef &
    \begin{array}[t]{@{}l@{}}
      \enf
      LRecord(pop_p^I, pop_p^R(rv_p)) \st LDoPop(p, rv_p)
    \end{array}
    \\
    \\
    HLPP(p) & \sdef & \Idle \ch \left(\left(\bigsqcap_{x:Val}
      HLPush(p,x)\right) \sqcap 
    HLPop(p, rv_p)\right) \ch \Idle \smallskip \\
    HLS(P) & \sdef &
    \Context{Top, FAddr}{\begin{array}[c]{@{}l@{}}
        \Init 
            TInit
          \st  \Par_{p: P} 
      \Context{n_p, rv_p}{HLPP(p)^\omega}
    \end{array}}
  \end{array}$

  \figrule
  \caption{Programs $BS(P)$ and $LS(P)$ extended with history, status
    and labels}
  \label{fig:abs-hsl}
\end{figure}

\subsection{Linearisability of $LS(P)$ via simulation}
\label{sec:line-via-forw}

Derrick et al show that a proof of linearisability can be reduced to a
proof of data refinement by encoding the definition of linearisability
from \refsec{sec:an-altern-defin} within the simulation relation and
extending the abstract and concrete programs with histories of
invocations and responses \cite{DSW11TOPLAS,DSW11,SWD12}. Programs
$AS(P)$ and $LS(P)$ extended with histories are respectively given by
programs $HAS(P)$ and $HLS(P)$ in \reffig{fig:abs-hsl}. In particular,
the canonical program produces a sequential history by recording an
invocation immediately followed by the matching response in history
$HA$ at the end of the executions of both $HAPush(p, x)$ and $HAPop(p,
arv_p)$. On the other hand, the coarse-grained atomic program $LS(P)$
is extended to $HLS(P)$ so that invocations and responses of
$HLPush(p, x)$ and $HLPop(p, arv_p)$ in history $HL$. Note that
invocations and responses of operations of $HLS(P)$ may not be sequential, i.e.,
other processes may be arbitrarily interleaved between any matching
pair of events.

In addition, one must also establish a relationship between concrete
stack (represented as a linked list) and the abstract stack
(represented as a sequence of values). Hence, for an concrete state
$\sigma \in State_L$, we define:
\begin{eqnarray*}
  Stack.\sigma    & \sdef & \klet len = \frac{size.SAddr.\sigma}{2} \kin
  \lambda n : 0..len-1 \st (iter_n.Top \mapsto key).\sigma
\end{eqnarray*}
\begin{example}
  \begin{figure}[t]
    \centering
    \scalebox{0.7}{\input{stack.pspdftex}}
    \caption{A state corresponding to abstract stack $\aang{aa, bb,
        cc}$, where $ptr.(\derefe Top) = X$ and $ctr.(\derefe Top) = 14$}
    \label{fig:ex-abs-stack}
  \end{figure}
    Consider the abstract stack corresponding to the state $\sigma$
    depicted in \reffig{fig:ex-abs-stack}. We have
  $$\begin{array}[t]{l@{\qquad\qquad}l}
    iter_0.\sigma
    \begin{array}[t]{@{}l@{}}
      = ptr.(\sigma.Top) \\
      = X
    \end{array}
    & 
    iter_1.\sigma
    \begin{array}[t]{@{}l@{}}
      = eval.(iter_0.\sigma \mapsto nxt).\sigma \\
      = eval.(\deref (X + offset.nxt)).\sigma \\
      = eval.(\deref (X + 1)).\sigma \\
      = \sigma.(X + 1) \\
      = Y
    \end{array}
  \end{array}
  $$
  Similarly, $iter_2.\sigma = Z$ and $iter_3.\sigma
  = null$. Hence, $Stack.\sigma = \aang{aa, bb, cc}$
  \hfill ${}_\clubsuit$
\end{example}

\begin{theorem}
  \label{thm:drefmain}
  $HAS(P)$ is data refined by $HLS(P)$ with respect to 
  \begin{eqnarray}
    simTS & \sdef &  (S
    = Stack) \land (\all p : P \st \mcW.S.p = \mcW.Top.p) \land
    linearisable(HL, HA) 
  \end{eqnarray}
\end{theorem}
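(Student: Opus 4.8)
The plan is to apply \refdef{def:forward-sim} directly, taking the abstract and concrete location sets $Y$ and $Z$ to be the contexts of $HAS(P)$ and $HLS(P)$ extended with the history variables $HA$ and $HL$ respectively, and discharging its two obligations \refeq{eq:refinit} and \refeq{eq:ref}. The initialisation obligation \refeq{eq:refinit} is immediate: $TInit$ forces $ptr.(\derefe Top) = null$, so $SAddr$ is empty and $Stack = \aang{}$; the conjunct $S = Stack$ of $simTS$ then yields $S = \aang{}$, the abstract initial condition, while $HL = \aang{}$ together with $linearisable(HL, HA)$ forces $HA = \aang{}$, since a surjection from the empty domain onto $\dom.HA$ exists only when $\dom.HA = \emptyset$. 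For the refinement obligation \refeq{eq:ref} I follow the decomposition already established in the paper, splitting it into \refeq{eq:ref1} and \refeq{eq:ref2}.

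For \refeq{eq:ref2} I use its equivalent behaviour-refinement form \refeq{eq:18}, showing that the body of $HAS(P)$ is refined by $\enf \Always simTS \st C$, where $C$ is the body of $HLS(P)$. Because $simTS$ is a state predicate, $\Always simTS$ both splits and joins, so the enforced condition can be driven through the program structure: \reflem{lem:fr-ref} handles the context blocks, \reflem{lem:decompenf} (\refeq{eq:75}) distributes it over the parallel composition, and \reflem{lem:command-ref} together with \reflem{lem:enf-sc-ref} and \reflem{lem:decompenf} (\refeq{eq:76}, \refeq{eq:51}) pushes it through the iteration, non-deterministic choice and sequential composition of each $HLPP(p)$. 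This reduces \refeq{eq:18} to a few per-operation obligations such as $HAPush(p,x) \sref_p^{Y,Z} \enf \Always simTS \st HLPush(p,x)$ and the analogues for the empty and non-empty pops. In each obligation the conjunct $S = Stack$ transfers the concrete effect on $Top$ and $SAddr$ into the matching abstract effect on $S$ (a successful $LDoPush$ prepending a node matches $SPush$, an $LDoPop$ unlinking the top node matches $SDoPop$, and $LEmpty$ reading $null$ matches $SEmpty$), while the conjunct $\mcW.S.p = \mcW.Top.p$ transfers the write-permission and interference-freedom facts that the abstract $beh$ requires on $S$ from the facts the concrete operation already guarantees on $Top$; the history conjuncts are matched by construction of $ARecord$ against $LRecord$.

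For \refeq{eq:ref1}, the simulation-maintenance obligation $beh_{P,Z}.C \dref_{Y,Z} simTS$, I decompose the chop structure of each operation by repeated use of \reflem{lem:simulation} and resolve the branch points (empty versus non-empty pop, and success of the compare-and-swap) with \reflem{lem:link-split}, using witness predicates such as $ptr.(\derefe Top) = null$. At each atomic fragment I build the abstract stream $s_a$ by updating $S$ so as to preserve $S = Stack$ and, exactly at the linearisation point --- the atomic modification of $Top$ in $LDoPush$/$LDoPop$, or the read of $null$ in $LEmpty$ --- by appending the matching invocation/response pair to $HA$. The conjunct $\mcW.S.p = \mcW.Top.p$, together with the interference freedom enforced within each coarse operation, serialises the writes to $Top$ and hence to $S$, so this point-wise mirroring produces a well-defined abstract stream.

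The main obstacle is maintaining the third conjunct $linearisable(HL, HA)$ as an invariant of this construction. The difficulty is that $HLS$ records an invocation at the start of an operation (via $\mathsf{start\_record}$) and its response at the end (via $\mathsf{end\_record}$), with the atomic effect occurring strictly in between, whereas $HAS$ records a matched invocation/response pair atomically at its linearisation point; thus $HL$ may contain pending invocations and arbitrary interleavings while $HA$ is always sequential. To discharge $linearisable(HL, HA)$ I must exhibit the extension $HE$ of $HL$ (adding responses for operations that have linearised but not yet returned) together with a surjective matching function $f$ satisfying conditions \refeq{eq:1}--\refeq{eq:74} of \refdef{def-lin-rel}. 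The point that makes this work is that each abstract pair is appended to $HA$ precisely at the corresponding concrete linearisation point, so the order in which pairs appear in $HA$ coincides with the order of concrete linearisation points and hence respects the real-time order of operations in $HL$, which is exactly conditions \refeq{eq:73} and \refeq{eq:74}. Carrying this book-keeping through the decomposition --- in particular guaranteeing that the pending invocation whose linearisation point has already passed is exactly the one extended in $HE$ --- is the technically delicate crux of the proof.
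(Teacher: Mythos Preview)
Your decomposition of \refeq{eq:ref2} into whole-operation refinements such as
\[
  HAPush(p,x)\ \sref_p^{Y,Z}\ \enf \Always simTS \st HLPush(p,x)
\]
does not hold. The abstract $BPush(p,x)$ carries the enforced condition $IntFree.S.p$ over its \emph{entire} interval, and the conjunct $\all p:P \st \mcW.S.p = \mcW.Top.p$ of $simTS$ transports this to $IntFree.Top.p$ over the entire interval of $HLPush(p,x)$. But $HLPush(p,x)$ begins with $LSetup(p,x) \ch EnvSt(p)$, during which other processes are free to acquire $\mcW.Top$ and modify the stack; the interference-freedom on $Top$ is enforced only within $LDoPush(p)$. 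Hence the implication fails. The paper's remedy is not to sharpen the argument but to change the decomposition: it first expands the abstract side via $\Idle \srefeq \Idle \ch \Idle$ and then matches the non-linearising segments of each concrete operation ($LSetup$, $EnvSt$, $HLEmptyPre$, $HLEmptyPost$) against abstract $\Idle$, reserving the abstract operation for the linearising segment alone, e.g.\ $HAPush(p,x) \sref_p^{M,L} \enf \Always simTS \st LDoPush(p)$. Straightforward monotonicity through `;' and `$\sqcap$' cannot produce this alignment; the $\Idle$-expansion is the missing step.

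Your plan for \refeq{eq:ref1} also diverges from the paper and has a structural gap: you propose decomposing ``the chop structure of each operation'' via \reflem{lem:simulation}, but the predicate $beh_{P,L}.HLS(P)$ is a parallel composition at the top level, not a chop, and you give no rule for pushing $\dref$ through $\Par$. The paper sidesteps this entirely by applying \reflem{lem:link-split} once, globally, with witness $w \sdef \exists p:P \st \mcW.Top.p$. This reduces the obligation to the two cases \refeq{eq:16} and \refeq{eq:54}: on subintervals where some process holds write permission on $Top$ the linearisation step occurs and the matching pair is appended to $HA$; on subintervals where no process does, $Stack$ is stable and the only changes to $HL$ are invocations (treated as pending) or $HLEmptyPost$ (whose linearisation has already been recorded). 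Your per-operation witness $ptr.(\derefe Top) = null$ plays no role in the paper's argument; the permission-based split is what makes the concurrent simulation-maintenance go through without ever decomposing the parallel composition.
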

Condition \refeq{eq:18} establishes a relationship between concrete
stack (which is a linked list) and its abstract representation (which
is a sequence), 
ensures that for any process $p$, the process has permission to write
to $S$ iff it has permission to write to location $Top$ and that that
the concrete history $HC$ is linearisable with respect to the abstract
history $HA$.

The linearisation of push and a non-empty pop correspond to the
intervals in which $Top$ is modified, and the linearisation point of
the empty pop corresponds to an interval in which $ptr.(\derefe Top) =
null$ holds. Hence, we use \refeq{eq:72} to split the $HLEmpty(p,
rv_p)$ operation into the before, during and after cases of the
linearisation point as follows.
\begin{eqnarray*}
  HLEmptyPre(p, rv_p) & \sdef & \enf{\mathsf{start\_record}_{p}(HL, HL
    \cat \aang{pop_p^I}) \land \Fin}\st{\Idle} \\
  HLEmptyLin(p, rv_p) & \sdef & \enf{\Always (ptr.(\derefe Top) = null)
    \land \neg \Empty \land \mcR.Top.p}\st{\Idle}
  \\
  HLEmptyPost(p, rv_p) & \sdef & \enf{\mathsf{end\_record}_{p}(HL, HL
    \cat \aang{pop_p^I(rv_p)})}\st{rv_p \asgn Empty}
\end{eqnarray*}

To prove this, we first show the more straightforward property that
for any process $p$ and interval $\Delta$ in which $p$ has write
permission to $Top$ throughout $\Delta$, $p$ maintains $simTS$
throughout $\Delta$. Furthermore, $simTS$ is also maintained if no
process writes to $Top$.
\begin{lemma}
  Both of the following hold: 
  \begin{eqnarray}
    \label{eq:16}
    beh_{P,L}.HLS(P)
    & \entails & \Box (\Always (\exists p: P \st \mcW.Top.p)
    \linkto_{M} simTS)
    \\
    \label{eq:54}
    beh_{P,L}.HLS(P) & \entails &  \Box(\Always (\all p : P \st \neg \mcW.Top.p) 
    \linkto_{M} simTS)
  \end{eqnarray}
\end{lemma}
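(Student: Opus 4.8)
The plan is to prove both statements by establishing, for an arbitrarily chosen subinterval, the single link condition $(\Always u \linkto_M simTS)$, where $u$ is $(\exists p : P \st \mcW.Top.p)$ for \refeq{eq:16} and $(\all p : P \st \neg \mcW.Top.p)$ for \refeq{eq:54}; the outer $\Box$ then follows because the hypothesis $beh_{P,L}.HLS(P)$ and its consequences (in particular the stack-invariance property \refeq{eq:56}) hold on every subinterval. Unfolding $\linkto_M$, I fix a concrete stream $s$, a subinterval $\Delta'$ and an abstract state $\sigma \in State_M$ with $simTS.(\sigma \cup s.(\glb.\Delta'-1))$, and assume $beh_{P,L}.HLS(P).\Delta'.s$ together with the relevant guard holding on $\Delta'$. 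The goal is then to exhibit an abstract stream $s_y \in Stream_M$ with $s_y.(\glb.\Delta'-1) = \sigma$ and $\Always simTS.\Delta'.(s_y \Cup s)$.

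For the first two conjuncts of $simTS$ I would use a single uniform construction that needs no case analysis: set $s_y.t.S \sdef Stack.(s.t)$ and $s_y.t.\Pi.S \sdef s.t.\Pi.Top$ for every $t \in \Delta'$, leaving $s_y.(\glb.\Delta'-1) \sdef \sigma$. Since $Stack$ is a function of the concrete state only (its locations lie in $Z = L$, which is disjoint from $M$), the equality $(S = Stack)$ holds throughout $\Delta'$ by construction and agrees with $\sigma$ at the boundary because $simTS$ gives $\sigma.S = Stack.(s.(\glb.\Delta'-1))$. Copying the permission of $Top$ onto $S$ immediately yields the conjunct $\mcW.S.p = \mcW.Top.p$ and, since $\sum_p \Pi.S.p = \sum_p \Pi.Top.p$, preserves healthiness condition \textbf{HC2}. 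This is consistent with \refeq{eq:56} and \refeq{eq:60}: in the regime of \refeq{eq:54}, where no process writes $Top$, the stack nodes and $Top$ are stable, so $Stack$ and hence $S$ are constant over $\Delta'$.

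All the remaining work is therefore concentrated in the third conjunct, $linearisable(HL, HA)$, which I maintain by choosing the evolution of the abstract history $HA$ (a free variable of $M$, unconstrained by the other two conjuncts). Starting from $HA = \sigma.HA$, for which $linearisable(s.(\glb.\Delta'-1).HL, \sigma.HA)$ holds by the boundary hypothesis, I extend $HA$ by appending the matching invoke–response pair of an operation exactly at its linearisation step. In the regime of \refeq{eq:16} the steps at which $Top$ changes value are precisely the successful pushes and non-empty pops, so there I append $\aang{push_p^I(x), push_p^R}$ or $\aang{pop_p^I, pop_p^R(rv_p)}$, consistent with the jump of $S = Stack$ from its old value to $\aang{x} \cat Stack$ (push) or to $tail.Stack$ (non-empty pop) forced by the atomic update in $LDoPush$/$LDoPop$. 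In the regime of \refeq{eq:54}, $Top$ is never written, so no push or non-empty pop linearises and $HA$ stays constant except for recording an empty pop at a $ptr.(\derefe Top)=null$ observation; the intervening recordings of invocations into $HL$ merely add pending invocations, and recordings of responses occur only after the corresponding operation has already linearised, so each step preserves linearisability.

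The main obstacle is the verification of $linearisable(HL, HA)$ for the extended history, i.e. extending the witness pair $(HE, f)$ of the boundary hypothesis to the new events and re-checking the clauses of $linrel$, \refeq{eq:1}–\refeq{eq:74}. The matching and surjectivity clauses \refeq{eq:4}–\refeq{eq:55} and the adjacency clause \refeq{eq:73} are routine, since $HA$ is built by appending contiguous matched pairs that copy the $HL$ events verbatim; the genuinely delicate clause is the order-preservation condition \refeq{eq:74}, which demands that operations whose execution intervals do not overlap be appended to $HA$ in their real-time order. This is exactly where the coarse granularity of $LS(P)$ pays off: because $LDoPush$ and $LDoPop$ perform their whole effect under an enforced $IntFree$, the $Top$-changing step sits strictly inside the operation's execution interval, so the order in which $HA$ grows coincides with the temporal order of the linearisation steps and hence respects $<_{HL}$. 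Coordinating the recording of a response into $HL$ (which a process performs only after it has completed, and thus after its $Top$-write) with the earlier append into $HA$ is the remaining bookkeeping; once this temporal coincidence is pinned down, \refeq{eq:74} follows and the proofs of both \refeq{eq:16} and \refeq{eq:54} are complete.
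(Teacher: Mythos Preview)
Your proposal is essentially correct and follows the same line as the paper, though you supply considerably more detail than the paper's very terse proofs (each only a few sentences). In particular, your construction of $s_y$ via $S \sdef Stack$ and copying the $Top$-permission onto $S$, together with extending $HA$ at linearisation steps, is exactly what the paper sketches; your identification of \refeq{eq:74} as the delicate clause is a useful elaboration the paper leaves implicit.

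One small difference in framing for \refeq{eq:54}: the paper organises the argument around which commands write to $HL$ without writing to $Top$ (namely invocation recordings and $HLEmptyPost$), handling invocations as pending and deferring the empty-pop's contribution to $HA$ until the response is recorded. You instead append to $HA$ at the $ptr.(\derefe Top)=null$ observation (i.e., during $HLEmptyLin$). Both choices work, since in this regime $S$ is constant and $linearisable(HL,HA)$ constrains only the history structure, not the stack value; but note that your choice requires the $HE$-extension in $linearisable$ to supply the still-missing response, whereas the paper's choice avoids that. Either way the argument goes through.
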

\begin{proof}[\ref{eq:16}]
  This holds because there is always a corresponding abstract stack
  after an update to $Top$ and furthermore intervals in which $Top$ is
  modified can be treated as linearisation intervals by appending
  corresponding invocation and response events to the abstract
  history.
\end{proof}
\begin{proof}[\ref{eq:54}]
  The only way to modify $simTS$ is to write to $HL$ and the only
  operations that modify $HL$ without writing to $Top$ are the
  invocations of each operation and the return operation
  $HLEmptyPost(p, rv_p)$. Every invocation is trivial because we may
  treat them as non-pending operations, and $HLEmptyPost(p, rv_p)$ is
  satisfied by treating operations that have executed $HLEmptyLin(p,
  rv_p)$ as a pending invocation.
  \hfill \qedhere
\end{proof}

We now return to the proof of the theorem that establishes data
refinement between $HLS(P)$ and $HAS(P)$.
\begin{proof}[\refthm{thm:drefmain}]
  The proof of the initialisation condition \refeq{eq:refinit} is
  trivial. The main condition \refeq{eq:ref} is proved by first
  splitting the proof into conditions \refeq{eq:ref1} and
  \refeq{eq:ref2}. To prove \refeq{eq:ref1}, i.e., 
  \[
  beh_{P,L}.HLS(P) \dref_{M, L} simTS
  \]
  we apply
  \reflem{lem:link-split} with $w$ instantiated to $\exists p : P \st
  \mcW.Top.p$ and use \refeq{eq:16} and \refeq{eq:54}. We must now
  prove \refeq{eq:ref2}, which we have shown above holds by
  \refeq{eq:18}, i.e.,
  \begin{eqnarray*}
    HAS(P) & \sref_P^{M,L} & \enf{\Always simTS} \st HLS(P)
  \end{eqnarray*}
  Using \reflem{lem:decompenf} (i.e., decomposition of parallel
  composition) and \reflem{lem:command-ref} (i.e., monotonicity of
  ${}^\omega$, this may be proved by showing that for any $p \in P$.
  \begin{eqnarray*}
    HAPP(p) & \sref_p^{M,L} & \enf{\Always simTS} \st HLPP(p)
  \end{eqnarray*}
  Because $\Idle \srefeq_P^{Y,Z} \Idle \ch \Idle$, we may further
  decompose this using \reflem{lem:decompenf}, where the non-idle
  cases are given below, and the rest of the $HAPP(p)$ are refinements
  of $\Idle$
  \begin{eqnarray*}
    HAPush(p, x) & \sref_p^{M,L} & \enf
    \Always simTS \st LDoPush(p) \\
    HAEmpty(p, arv_p) & \sref_p^{M,L} & \enf
    \Always simTS \st HLEmptyLin(p, rv_p) \\
    HADoPop(p, arv_p) & \sref_p^{M,L} & \enf \Always simTS \st HLDoPop(p, rv_p) 
  \end{eqnarray*}
  Each of these proofs is straightforward due to assumption $\Always
  simTS$. \hfill\qedhere
\end{proof}

\subsection{Remark: The importance of proving linearisability}
\label{sec:remark:-import-prov}

It is also worth noting that some abstractions of the Treiber Stack
cannot be linearised, which reinforces the importance of proving
linearisability of a coarse-grained abstraction, as opposed to methods
such as \cite{TW11}, which only prove refinement between a concurrent
data structure and its coarse-grained abstraction without showing that
the abstraction itself is linearisable.
\begin{example}
  A pop abstraction
  $$ 
  \begin{array}[c]{@{}l@{}}
    \Guard{ptr.(\derefe Top) \neq null} \ch {} \\
    \enf{ \Always \neg \mcI.Top.p} \st {
      \begin{array}[t]{@{}l@{}}
        rv_p
        \asgn ptr.(\derefe Top) \mapsto key  \ch {} \\
        Top \hasgn (ptr.(\derefe Top) \mapsto
      nxt, ctr.(\derefe Top) + 1 )
    \end{array}
}
  \end{array}
  $$ 
  cannot be proved linearisable because it is possible for $Top$ to
  change after the $ptr.(\derefe Top) \neq null$ holds within the guard
  evaluation $\Guard{ptr.(\derefe Top) \neq null}$.  Hence, for example, when
  $Top$ is updated, the stack may already be empty. \hfill
  ${}_\clubsuit$
\end{example}
\begin{example}
  It is not necessary to strengthen $LEmpty(p)$ to
  $$
  \enf{ \Always \neg \mcI.Top.p} \st {\Guard{ptr.(\derefe Top) = null} \ch rv_p
    \asgn Empty}
  $$ 
  because the value of $Top$ is never used in the latter parts of the
  code. The guard $\Guard{ptr.(\derefe Top) = null}$ is merely used to decide
  whether or not the code should return. Although this strengthening
  does provide one with a coarse-grained program that is linearisable,
  the proof that the coarse-grained program is implemented by the
  Treiber Stack $TS(P)$ will be more difficult to achieve. 
  \hfill
  ${}_\clubsuit$
\end{example}

\section{Compositional proofs} 
\label{sec:compositional-proofs}
In this section, we describe how the proof of a command may be
decomposed into proofs of the subcomponents. In particular, we present
rely conditions in \refsec{sec:rely-conditions}, and decomposition
\refsec{sec:decomp-using-rely} over parallel composition using rely
conditions. In \refsec{sec:transformation-rules}, we present a number
of high-level transformation rules specific to CAS-based
implementations.

\subsection{Rely conditions}
\label{sec:rely-conditions}

We introduce constructs for defining a rely condition, which specifies
assumptions about the behaviour of the environment \cite{Jon83}. We
note that unlike Jones \cite{Jon83}, who assumes rely conditions are
relations, we rely conditions are interval predicates, allowing
specification of properties over an interval.  
The behaviour of a command with a rely condition is given by the
behaviour of the command in an interval in which the rely condition is
assumed to hold. That is, the behaviours of the environment
\emph{overlap} \cite{DDH12} with those of the program as opposed to
\emph{interleave} \cite{Jon83,STER11} with the program.

\begin{definition}
  For an interval predicate $r$ and command $C$, we let
  $(\Rely{r}{C})$ denote a command with a \emph{rely condition} $r$,
  whose behaviour for any set of processes $P$ and set of locations $Z
  \subseteq Var \cup Addr$ is given by
  \begin{eqnarray*}
    beh_{P,Z}.(\Rely{r}{C})  &\ \  \sdef\ \  &  r \imp beh_{P,Z}.C 
  \end{eqnarray*}
\end{definition}
Hence, $(\rely r \st C)$ consists of an execution of $C$ under the
assumption that $r$ holds. Note that if $\neg r$ holds, then the
behaviour of $(\rely r \st C)$ is chaotic, i.e., any behaviour is
allowed. 

This interpretation of rely/guarantee has been shown to be effective
for reasoning in a real-time setting, where conflicting updates by a
program and its environment are avoided by ensuring the environment
variables are disjoint from the program variables
\cite{DH12MPC,DH12}. In this paper, a program and its environment may
share a common set of locations, hence we use fractional permission to
ensure conflicting accesses do not occur. 

\begin{example}
  \label{ex:using-nonint}
  Suppose we want show that for an assignment $x \asgn x+1$, the final
  value of $x$ is one greater than its initial value provided that the
  environment does not modify $x$. We have
  \begin{derivation}
    \step{beh_{p,Z}.(\Rely{\Always \neg \mcI.x.p}{x \asgn x + 1})}
    
    \trans{\equiv}{definition of $beh$}
    
    \step{\Always \neg \mcI.x.p \imp \exists k \st beh_{p,Z}.[x + 1 =
      k] \ch \Update_{p,Z} (x = k)}

    \trans{\entails}{$\Always c$ splits}

    \step{\Always\neg \mcI.x.p \imp \exists k \st (\Always \neg \mcI.x.p \imp beh_{p,Z}.[x + 1 =
      k]) \ch \Update_{p,Z} (x = k)}

    \trans{\entails}{using $\Always \neg \mcI.x.p$ and by definition
      $beh_{p,Z}.[x + 1 = k] \entails \Always \neg \mcW.x.p$}
    
    \step{\Always\neg \mcI.x.p \imp \exists k \st \Always (x + 1 = k) \land \neg \Empty \ch
      \Always (x = k)\land \neg \Empty}

    \trans{\entails}{$\Always c \entails \ola{c}$}
    
    \step{\Always\neg \mcI.x.p \imp \exists k \st \ola {x + 1 = k} \land \neg \Empty \land
      \ora{x = k}}  

    \trans{\entails}{logic}
    
    \step{\Always\neg \mcI.x.p \imp \exists k \st \ola {x = k} \land \neg \Empty\land \ora{x = k
      + 1}\hfill 
      {}_\clubsuit}
  
  \end{derivation}
\end{example}

\subsection{Decomposition using rely conditions}
\label{sec:decomp-using-rely}

One may develop a number of rules for refining commands with rely
conditions. 
\begin{lemma} Each of the following holds.  \label{lem:refine-rely}
  \begin{eqnarray}
    \label{eq:17}
    \Rely{r}{C} & \!\sref_P^{Y,Z}\! & C
    \\
    \label{eq:38}
    r \entails r' 
    &
    \!\imp\! &  \Rely{r}{C} 
    \sref_P^{Y,Z} \Rely{r'}{C} \\
    \label{eq:3}
    \!\!\!\!\!\!\!\!\begin{array}[c]{@{}l@{}}
      r \land beh_P.C \entails 
      \hfill beh_P.A
    \end{array}
    & \!\imp\! &
    \Rely{r}{A} 
    \sref_P^{Y,Z} \Rely{r}{C} 
    \\
    \label{eq:5}
    r \land beh_P.C \entails d
    & \!\imp\! & 
    \begin{array}[c]{@{}l@{}}
      \Rely{r}{\enf{d}\st{C}}
      \sref_P^{Y,Z} \Rely{r}{C}
    \end{array}
  \end{eqnarray}
\end{lemma}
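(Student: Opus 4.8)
The plan is to peel off the two wrapping constructs definitionally and reduce every claim to a propositional fact about interval predicates under $\land$ and $\imp$. Recall that $A \sref_P^{Y,Z} C$ means $beh_{P,Z}.C \entails beh_{P,Y}.A$, that $beh_P.(\Rely{r}{C}) = r \imp beh_P.C$, and that $beh_P.(\enf{d}\st{C}) = d \land beh_P.C$. I take the abstract and concrete contexts to coincide and write $beh_P.\cdot$ for the shared behaviour function, as is already implicit in the hypotheses of \refeq{eq:3} and \refeq{eq:5}; then each of the four statements becomes an entailment between interval predicates assembled from $r$, $r'$, $d$, $beh_P.C$ and $beh_P.A$, which I discharge pointwise (the operators being lifted pointwise to intervals and streams). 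No interval-specific machinery---splitting, joining, chop, or $\Box$---is required.

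For \refeq{eq:17}, after unfolding $\sref$ the goal is $beh_P.C \entails (r \imp beh_P.C)$, which is the tautology $g \entails (r \imp g)$; the rely condition is simply discarded and no hypothesis is used. For \refeq{eq:38}, the goal unfolds to $(r' \imp beh_P.C) \entails (r \imp beh_P.C)$. This is the antitonicity of $\imp$ in its antecedent: assuming $r$ and $r' \imp beh_P.C$, the hypothesis $r \entails r'$ gives $r'$ and hence $beh_P.C$, so the whole implication holds.

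For \refeq{eq:3}, the goal is $(r \imp beh_P.C) \entails (r \imp beh_P.A)$; assuming $r$ together with $r \imp beh_P.C$ yields $beh_P.C$, and then the hypothesis $r \land beh_P.C \entails beh_P.A$ supplies $beh_P.A$. The only claim that engages both wrappers is \refeq{eq:5}: unfolding $\enf{d}\st{C}$ inside the rely turns the goal into $(r \imp beh_P.C) \entails (r \imp (d \land beh_P.C))$. Here I would assume $r$ and $r \imp beh_P.C$ to recover $beh_P.C$, then invoke the hypothesis $r \land beh_P.C \entails d$ to obtain $d$, and finally conjoin $d$ with the already-available $beh_P.C$ to produce $d \land beh_P.C$. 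I expect this to be the only step needing a moment's care, since $beh_P.C$ must be used twice---once to fire the hypothesis that yields $d$, and once as the surviving second conjunct of the enforced behaviour---but it is still a routine propositional manipulation.
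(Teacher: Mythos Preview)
Your proposal is correct and matches the spirit of the paper, which states this lemma without proof and treats each clause as a routine consequence of unfolding the definitions of $\sref$, $\rely$, and $\enf$ into pointwise propositional facts. Your handling of the context subscripts (taking $Y=Z$, as the unsubscripted $beh_P$ in the hypotheses of \refeq{eq:3} and \refeq{eq:5} already suggests) is the intended reading, and the only mildly non-trivial step---using $beh_P.C$ twice in \refeq{eq:5}---is exactly as you describe.
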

Rule \refeq{eq:17} allows a rely condition to be removed,
\refeq{eq:38} allows a rely condition to be weakened and by
\refeq{eq:3}, the refinement holds for the rely condition $r$ on both
sides if the behaviour of $C$ implies the behaviour of $A$ under rely
condition $r$.  By \refeq{eq:5}, we may remove the enforced property
$d$ if the rely condition and behaviour of $C$ together imply $d$. Of
course, it may be the case that only the rely condition without the
program or the program without the rely condition is enough to
establish the enforced property. Both these cases are covered by
\refeq{eq:5}.

The lemma below allows one to distribute a rely condition
in and out of a sequential composition and an iterated command. The
lemma requires that the given rely condition splits.
\begin{lemma}
  \label{lem:rely-sc-ref}
  Suppose $r$ is an interval predicate, $P$ a non-empty set of
  processes and $Y$, $Z$ are sets of variables. If $r$ splits, then
  \begin{eqnarray}
    \label{eq:27}
    (\Rely{r}{C_1 \ch C_2}) & \sref_P^{Y,Z} & (\Rely{r}{C_1}) \ch
    (\Rely{r}{C_2}) 
    \\
    \label{eq:29}
    \Rely{r}{C^\omega} & \sref_P^{Y,Z} & (\Rely{r}{C})^\omega
    \\
    \label{eq:28}
    (\Rely{r}{A  \sref_P^{Y,Z} C}) & \imp & 
    (\Rely{r}{A^\omega} \sref_P^{Y,Z} C^\omega)
  \end{eqnarray}
\end{lemma}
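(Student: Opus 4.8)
The plan is to unfold the definitions of $\Rely{}{}$ and $\sref_P^{Y,Z}$ so that each of the three claims becomes an entailment between interval predicates, and then to exploit the single structural fact that, since $r$ splits, $r.\Delta.s$ guarantees $r.\Delta'.s$ for every subinterval $\Delta' \subseteq \Delta$. Writing $c_i = beh_{P,Z}.C_i$ and using $beh_{P,Z}.(\Rely{r}{C}) \equiv r \imp beh_{P,Z}.C$ together with the homomorphic clauses $beh.(C_1 \ch C_2) = beh.C_1 \ch beh.C_2$ and $beh.C^\omega = (beh.C)^\omega$, claim \refeq{eq:27} reduces to the interval-predicate entailment $(r \imp c_1) \ch (r \imp c_2) \entails r \imp (c_1 \ch c_2)$, and \refeq{eq:29} reduces to $(r \imp c)^\omega \entails r \imp c^\omega$. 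The context parameters $Y, Z$ are passive here, since neither $\Rely{}{}$, chop, nor ${}^\omega$ alters the context.

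For \refeq{eq:27} I would argue pointwise: fix $\Delta, s$, assume the left side together with $r.\Delta.s$. The chop yields two cases. In the finite case $\Delta = \Delta_1 \cup \Delta_2$ with $\Delta_1 \adjoins \Delta_2$, splitting of $r$ gives $r.\Delta_1.s$ and $r.\Delta_2.s$, which discharge the two local relies to produce $c_1.\Delta_1.s$ and $c_2.\Delta_2.s$, hence $(c_1 \ch c_2).\Delta.s$. In the infinite case $\lub.\Delta = \infty$ I use $r.\Delta.s$ directly to obtain $c_1.\Delta.s$ and conclude via the second disjunct of chop.

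For the iteration laws \refeq{eq:29} and \refeq{eq:28} a finite unfolding does not suffice, so I would use the greatest-fixed-point (coinduction) principle for ${}^\omega$, recalling $g^\omega = \nu z \spot (g \ch z) \lor \Empty$. For \refeq{eq:29}, take the candidate $Q \sdef (r \imp c)^\omega \land r$ and show $Q \entails (c \ch Q) \lor \Empty$: unfolding $(r \imp c)^\omega$ gives either $\Empty$ (done) or a head $(r \imp c)$ over $\Delta_1$ and a tail over $\Delta_2$; splitting of $r$ turns the head into $c.\Delta_1.s$ and keeps $r$ on $\Delta_2$ so the tail again lies in $Q$, with the $\lub.\Delta = \infty$ case handled as in \refeq{eq:27}. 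Coinduction then gives $Q \entails c^\omega$, i.e. $(r \imp c)^\omega \entails r \imp c^\omega$. For \refeq{eq:28}, I first read the premise $\Rely{r}{A} \sref_P^{Y,Z} \Rely{r}{C}$ as $(r \imp c) \entails (r \imp a)$ with $a = beh_{P,Y}.A$, which is logically equivalent to $r \land c \entails a$; the goal $\Rely{r}{A^\omega} \sref_P^{Y,Z} C^\omega$ unfolds to $c^\omega \land r \entails a^\omega$. I would run the identical coinductive argument with $Q \sdef c^\omega \land r$, using $r \land c \entails a$ in place of the discharged local rely to convert each head $c.\Delta_1.s$ into $a.\Delta_1.s$; note this is the one place where the argument genuinely crosses contexts, and it is exactly the premise that supplies the cross-context step.

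The main obstacle is the iteration cases: a naive induction on the number of unfoldings fails because $r$ must be propagated through a possibly infinite chain of adjoining subintervals, and the chop operator carries a separate clause for $\lub.\Delta = \infty$ that must be discharged in each coinductive step. The splitting hypothesis on $r$ is precisely what makes the coinduction succeed, since it is what lets $r$ survive on every tail segment $\Delta_2$ of the decomposition; without it the invariant $Q$ would fail to be a post-fixed point of $(g \ch {-}) \lor \Empty$, and the laws would not hold.
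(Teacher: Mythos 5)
The paper states \reflem{lem:rely-sc-ref} without proof, so there is no in-paper argument to compare yours against; your proposal in effect supplies the missing proof, and its logical core is correct. The reductions you set up are the right ones: \refeq{eq:27} becomes $(r \imp c_1) \ch (r \imp c_2) \entails r \imp (c_1 \ch c_2)$, and your case split on the two disjuncts of chop (finite decomposition $\Delta = \Delta_1 \cup \Delta_2$ versus $\lub.\Delta = \infty$), with $r \entails \Box r$ pushing $r$ onto $\Delta_1$ and $\Delta_2$, discharges both cases. For \refeq{eq:29} and \refeq{eq:28} you are right that unfolding finitely is insufficient and that coinduction on $g^\omega = \nu z \st (g \ch z) \lor \Empty$ is needed; your candidates $(r \imp c)^\omega \land r$ and $c^\omega \land r$ are indeed post-fixed points of the relevant monotone functional (the functional is monotone because chop is monotone in both arguments), and your reading of the premise of \refeq{eq:28} as $r \land beh_{P,Z}.C \entails beh_{P,Y}.A$ is the correct one however one parses the paper's $\rely r \st A \sref_P^{Y,Z} C$ notation, since both parsings are logically equivalent to it. One caveat: your claim that the contexts $Y,Z$ are ``passive'' glosses over a real point, namely that writing a single $c_i$ for both $beh_{P,Z}.C_i$ and $beh_{P,Y}.C_i$ presupposes $beh_{P,Z}.C_i \entails beh_{P,Y}.C_i$, which is not automatic --- e.g.\ $\idle_{p,Z} \entails \idle_{p,Y}$ only under a side condition such as $Y \subseteq Z$. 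This is a defect of the lemma as stated rather than of your argument (the paper's other monotonicity lemmas, e.g.\ \reflem{lem:command-ref}, carry the same implicit assumption, and in all applications $Y \subseteq Z$ holds), but an airtight write-up should make that side condition explicit before collapsing the two behaviour functions into one symbol.
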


The following theorem shows that rely and enforced conditions form a
Galois connection. Namely, command $C$ refines a command $A$ under
rely condition $r$ if and only if the command $C$ with enforced
condition $r$ refines $A$.
\begin{theorem}
  \label{thm:rely-enf-gc}
  $(\Rely{r}{A}) \sref_P^{Y,Z} C \quad  =  \quad A  \sref_P^{Y,Z} (\enf{r}\st{C})$
\end{theorem}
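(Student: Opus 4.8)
The theorem asserts a Galois-connection-style equivalence. Let me look at what the definitions give me directly. By definition,
$$
beh_{P,Z}.(\Rely{r}{A}) = (r \imp beh_{P,Z}.A)
\qquad\text{and}\qquad
beh_{P,Z}.(\enf{r}\st{C}) = (r \land beh_{P,Z}.C).
$$
Refinement $A \sref_P^{Y,Z} C$ unfolds to $beh_{P,Z}.C \entails beh_{P,Y}.A$, which by definition of $\entails$ means $\all \Delta, s \st beh_{P,Z}.C.\Delta.s \imp beh_{P,Y}.A.\Delta.s$.

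**Plan.**

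My plan is to reduce both sides of the claimed equality to a single pointwise implication over all intervals and streams and observe they coincide by elementary propositional logic. First I would expand the left-hand side $(\Rely{r}{A}) \sref_P^{Y,Z} C$: this is $beh_{P,Z}.C \entails beh_{P,Y}.(\Rely{r}{A})$, i.e.\ $beh_{P,Z}.C \entails (r \imp beh_{P,Y}.A)$, which pointwise reads $\all \Delta, s \st beh_{P,Z}.C.\Delta.s \imp (r.\Delta.s \imp beh_{P,Y}.A.\Delta.s)$. Next I would expand the right-hand side $A \sref_P^{Y,Z} (\enf{r}\st{C})$: this is $beh_{P,Z}.(\enf{r}\st{C}) \entails beh_{P,Y}.A$, i.e.\ $(r \land beh_{P,Z}.C) \entails beh_{P,Y}.A$, which pointwise reads $\all \Delta, s \st (r.\Delta.s \land beh_{P,Z}.C.\Delta.s) \imp beh_{P,Y}.A.\Delta.s$. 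The two quantified bodies are related by currying/uncurrying: $\phi \imp (\psi \imp \chi)$ is logically equivalent to $(\phi \land \psi) \imp \chi$, and since the quantifier prefix $\all \Delta, s$ is identical on both sides, the equivalence lifts to the full formulas. This establishes the equality, which I would present as a short two-step derivation.

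**Where the difficulty lies.**

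Honestly there is no serious obstacle here; the result is essentially a definitional unfolding plus the single propositional tautology (currying). The only thing requiring care is bookkeeping the contexts: on the left $A$ is interpreted in context $Y$ via $beh_{P,Y}$ while on the right $C$ (inside the enforced condition) is interpreted in context $Z$ via $beh_{P,Z}$, and the refinement arrows correctly pair the abstract command with context $Y$ and the concrete with context $Z$. I would verify that the directions of $\sref_P^{Y,Z}$ line up consistently with Definition of behaviour refinement, so that the concrete behaviour in $Z$ always sits on the antecedent side and the abstract behaviour in $Y$ on the consequent side in both expansions; this is what makes the two bodies genuine curry/uncurry variants of one another rather than mismatched formulas. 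Once that alignment is confirmed, the proof is a one-line appeal to the tautology $(\phi \imp (\psi \imp \chi)) \equiv ((\phi \land \psi) \imp \chi)$ under a common universal quantifier.
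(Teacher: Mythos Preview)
Your proposal is correct and matches the paper's proof essentially step for step: both unfold the definitions of $\sref_P^{Y,Z}$, $\rely$, and $\enf$, reduce each side to the common form $r \land beh_{P,Z}.C \entails beh_{P,Y}.A$, and invoke the currying tautology $(\phi \imp (\psi \imp \chi)) \equiv ((\phi \land \psi) \imp \chi)$ pointwise under the universal quantifier. The paper's derivation is simply a two-step chain through that intermediate formula, which is exactly what you describe.
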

\begin{proof}
  \begin{derivation}
    \step{(\Rely{r}{A}) \sref_P^{Y,Z} C}
    
    \trans{=}{definitions and logic}

    \step{r \land beh_{P,Z}.C \entails beh_{P,Y}.A}

    \trans{=}{definitions}

    \step{A \sref_P^{Y,Z} (\enf{r}\st{C}) \hfill \qedhere}
  \end{derivation}
\end{proof}
To see the usefulness of this theorem, consider the lemma below that
allows refinement over non-deterministic choice in the presence of a
rely condition. Using \refthm{thm:rely-enf-gc}, and the result of
\reflem{lem:rely-nd-ref} (below), we obtain a dual result
\refeq{eq:35} below on an enforced property.
\begin{lemma} 
  \label{lem:rely-nd-ref}
  $\begin{array}[c]{@{}r@{}}
    ((\Rely{r}{A_1}) \sref_P^{Y,Z} C_1) \land 
    ((\Rely{r}{A_2}) \sref_P^{Y,Z} C_2)
    \end{array}
     \imp
    \begin{array}[c]{@{}l@{}}
      (\Rely{r}{A_1 \sqcap A_2}) \sref_P^{Y,Z} C_1 \sqcap C_2
    \end{array}$
  \end{lemma}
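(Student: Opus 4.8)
The plan is to reduce the statement to the already-established decomposition rule for enforced conditions over non-deterministic choice (\refeq{eq:51} of \reflem{lem:decompenf}) by passing through the Galois connection of \refthm{thm:rely-enf-gc}. That connection lets me trade every occurrence of a rely condition for an enforced condition and back again, so the whole argument becomes a short chain of equivalences together with one application of an existing lemma, with no interval- or stream-level reasoning required.

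Concretely, I would first rewrite each hypothesis using \refthm{thm:rely-enf-gc}: the assumption $(\Rely{r}{A_1}) \sref_P^{Y,Z} C_1$ is equivalent to $A_1 \sref_P^{Y,Z} (\enf{r}\st{C_1})$, and similarly $(\Rely{r}{A_2}) \sref_P^{Y,Z} C_2$ becomes $A_2 \sref_P^{Y,Z} (\enf{r}\st{C_2})$. These are precisely the two premises of \refeq{eq:51} of \reflem{lem:decompenf} instantiated with $g \sdef r$. Applying that lemma yields $A_1 \sqcap A_2 \sref_P^{Y,Z} (\enf{r}\st{C_1 \sqcap C_2})$. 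A final appeal to \refthm{thm:rely-enf-gc}, read from right to left, converts this back into $(\Rely{r}{A_1 \sqcap A_2}) \sref_P^{Y,Z} (C_1 \sqcap C_2)$, which is the desired conclusion.

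As a sanity check, I would also verify the claim directly by unfolding definitions, which confirms there is no hidden difficulty. Writing out the goal $beh_{P,Z}.(C_1 \sqcap C_2) \entails beh_{P,Y}.(\Rely{r}{A_1 \sqcap A_2})$ gives $beh_{P,Z}.C_1 \lor beh_{P,Z}.C_2 \entails (r \imp (beh_{P,Y}.A_1 \lor beh_{P,Y}.A_2))$, while the two hypotheses read $beh_{P,Z}.C_i \entails (r \imp beh_{P,Y}.A_i)$ for $i \in \{1,2\}$. The conclusion then follows by purely propositional reasoning on interval predicates, the only mildly non-trivial step being $(r \imp X) \lor (r \imp Y) \entails (r \imp (X \lor Y))$.

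There is no substantial obstacle here: the content of the lemma is already packaged inside \reflem{lem:decompenf}, and the Galois connection does the bookkeeping. The one point to watch is that I must use the version of the decomposition rule for non-deterministic choice that permits an \emph{arbitrary} enforced predicate $g$ (namely \refeq{eq:51}), rather than the splitting predicate $h$ required for sequential composition in \refeq{eq:76}; only then is the instantiation $g \sdef r$ legitimate for an arbitrary rely condition $r$, and indeed \refeq{eq:51} carries no splitting side condition.
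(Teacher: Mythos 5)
Your proposal is correct, and it takes the mirror image of the route the paper itself sketches. The paper states \reflem{lem:rely-nd-ref} without proof and then uses the Galois connection \refthm{thm:rely-enf-gc} in the \emph{opposite} direction: it treats the rely version (this lemma) as the primitive fact and converts it into the enforced-condition property \refeq{eq:35}. You instead start from \refeq{eq:51} of \reflem{lem:decompenf} and convert it back through \refthm{thm:rely-enf-gc} to obtain the rely version --- which is legitimate within the paper's ordering, since \reflem{lem:decompenf} and \refthm{thm:rely-enf-gc} both precede \reflem{lem:rely-nd-ref}, and since \refeq{eq:51} carries no splitting side condition (you rightly flag that the splitting hypothesis attaches only to \refeq{eq:76}). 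Note, however, that \refeq{eq:35} and \refeq{eq:51} are literally the same statement up to renaming $g$ as $r$, so the paper's structure is somewhat redundant, and your primary route would be circular if one took the paper's derivation of \refeq{eq:35} from \reflem{lem:rely-nd-ref} to be the \emph{only} justification of that enforced-condition rule. Your ``sanity check'' removes any such doubt: unfolding the definitions gives exactly $beh_{P,Z}.C_1 \lor beh_{P,Z}.C_2 \entails (r \imp (beh_{P,Y}.A_1 \lor beh_{P,Y}.A_2))$ from the hypotheses $beh_{P,Z}.C_i \entails (r \imp beh_{P,Y}.A_i)$, and the step $(r \imp X) \lor (r \imp Y) \entails (r \imp (X \lor Y))$ is a pointwise propositional tautology. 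That self-contained argument is presumably what the paper's omitted proof amounts to, so you have both the paper's (implicit) elementary proof and a structurally tidier derivation that exhibits the lemma as a consequence of the enforced-condition calculus.
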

Using \refthm{thm:rely-enf-gc}, the results \refeq{eq:34} and
\reflem{lem:rely-nd-ref} may both immediately be converted into a
property for enforced conditions:
\begin{eqnarray}
  \label{eq:35}
  \begin{array}[c]{@{}r@{}}
    (A_1 \sref_P^{Y,Z} (\enf{r}\st{C_1})) \land \\
    (A_2 \sref_P^{Y,Z} (\enf{r}\st{C_2}))
  \end{array}
  &\  \ \imp\ \  &
  \begin{array}[c]{@{}l@{}}
    A_1 \sqcap A_2 \sref_P^{Y,Z} (\enf{r}\st{C_1 \sqcap C_2})
  \end{array}  
\end{eqnarray}

One can also use \refthm{thm:rely-enf-gc} and \reflem{lem:rely-sc-ref}
to obtain the following dual property, where we assume $g$ is an
interval predicate that splits.
\begin{eqnarray}
  \label{eq:36}
  (\enf{g}\st{C_1}) \ch
  (\enf{g}\st{C_2})  & \sref_P^{Y,Z} & ( \enf{g}\st{C_1 \ch C_2})
\end{eqnarray}
The following theorem may be used to decompose a proof that an
parallel composition of an abstract program is refined by the parallel
composition of a concrete program in the context of an overall rely
condition $r$.
\begin{theorem} 
  \label{thm:decompose} $(\Rely{r}{\Par_{p:P}A_p}) \sref_P^{Y,Z}
  (\Par_{p:P}C_p)$ holds if there exist $P_1, P_2 \subseteq
  P$ such that $P = P_1 \cup P_2$ and $P_1
  \cap P_2 = \emptyset$ and both of the following hold for some
  interval predicates $r_1$ and $r_2$.
  \begin{eqnarray}
    \label{eq:15}
    (\Rely{r \land r_1}{\Par_{p:P_1}A_p}) & \sref_{P_1}^{Y,Z} & (\Par_{p:P_1}
    C_p)
    \\
    \label{eq:25}
    (\Rely{r \land r_2}{\Par_{p:P_2}A_p}) & \sref_{P_2}^{Y,Z} & (\Par_{p:P_2}C_p)
    \\
    \label{eq:26}
    r \land beh_{P_2,Z}.(\Par_{p:P_2}C_p) & \entails &
    r_1
    \\
    r \land beh_{P_1,Z}.(\Par_{p:P_1}C_p)
    & \entails &  r_2\label{eq:14}
  \end{eqnarray}
\end{theorem}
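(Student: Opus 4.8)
The plan is to unfold the conclusion into universal-implication form and then exploit the deliberately non-circular pattern encoded in the four hypotheses. By the definitions of $\Rely$ and $\sref_P^{Y,Z}$, the goal $(\Rely{r}{\Par_{p:P}A_p}) \sref_P^{Y,Z} (\Par_{p:P}C_p)$ is equivalent to
$$r \land beh_{P,Z}.(\Par_{p:P}C_p) \entails beh_{P,Y}.(\Par_{p:P}A_p).$$
So I fix an arbitrary interval $\Delta$ and stream $s$, assume $r$ and $beh_{P,Z}.(\Par_{p:P}C_p)$ hold there, and aim to establish the abstract parallel behaviour, which I will witness using the \emph{given} partition $P_1,P_2$.

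First I would project the concrete behaviour onto this partition, obtaining both $beh_{P_1,Z}.((\Par_{p:P_1}C_p)\ch\Idle)$ and $beh_{P_2,Z}.((\Par_{p:P_2}C_p)\ch\Idle)$ over $\Delta$. This uses that the semantics of $\Par$ decomposes along any partition of $P$ into the conjunction of its sub-group behaviours --- the same decomposition that underlies Lemma~\ref{lem:decompenf}. From these I extract $beh_{P_1,Z}.(\Par_{p:P_1}C_p)$ and $beh_{P_2,Z}.(\Par_{p:P_2}C_p)$ on the (possibly shorter) subintervals on which each sub-program actually runs before idling.

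The heart of the argument is the mutual yet non-circular discharge of the auxiliary rely conditions. Using $r$ together with the concrete $P_2$-behaviour, hypothesis \eqref{eq:26} yields $r_1$; symmetrically \eqref{eq:14} yields $r_2$ from $r$ and the concrete $P_1$-behaviour. Each auxiliary condition is discharged from the \emph{opposite} group's concrete behaviour, not from its abstraction, so there is no circularity. Equipped with $r \land r_1$ on the $P_1$-interval, the unfolding of hypothesis \eqref{eq:15} gives $beh_{P_1,Y}.(\Par_{p:P_1}A_p)$, and likewise \eqref{eq:25} with $r \land r_2$ gives $beh_{P_2,Y}.(\Par_{p:P_2}A_p)$. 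Re-appending the trailing idle by monotonicity of chop (Lemma~\ref{lem:command-ref}, \eqref{eq:41}, together with the trivial refinement of $\Idle$ by $\Idle$) and taking $P_1,P_2$ as the existential witnesses reassembles $beh_{P,Y}.(\Par_{p:P}A_p)$, completing the derivation.

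The main obstacle is the interval and idle bookkeeping glossed over above: the decomposed concrete behaviours carry a trailing $\ch\Idle$, so the subintervals on which $\Par_{p:P_1}C_p$ and $\Par_{p:P_2}C_p$ genuinely execute need not coincide, whereas \eqref{eq:26} and \eqref{eq:14} must deliver each $r_i$ over precisely the interval on which the other group's abstract program is required. Making this rigorous relies on the rely conditions behaving well under restriction to subintervals --- in practice $r$, $r_1$, $r_2$ are $\Always$-style predicates that split, and idling preserves the write-based guarantees. I would therefore either add a splitting hypothesis (cf.\ Lemma~\ref{lem:rely-sc-ref}) or observe that the relevant guarantees are stable under idling, and then close the interval matching using the adjoining-interval structure of chop exactly as in the proof of Lemma~\ref{lem:simulation}.
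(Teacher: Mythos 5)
Your proposal is correct, and its logical core is the same as the paper's proof: rewrite the goal as $r \land beh_{P,Z}.(\Par_{p:P}C_p) \entails beh_{P,Y}.(\Par_{p:P}A_p)$, use \refeq{eq:26} and \refeq{eq:14} to obtain $r_1$ and $r_2$ from $r$ together with the \emph{opposite} group's concrete behaviour (your non-circularity remark is exactly the paper's step that cancels $r_1 \land r_2$ from the antecedent), then fire the unfolded \refeq{eq:15} and \refeq{eq:25} and recombine. The genuine difference is where the work is placed. The paper's proof is a purely algebraic entailment calculation that never fixes an interval or stream: it conjoins the unfolded hypotheses, discharges $r_1$ and $r_2$, and closes with a single appeal to ``definitions''. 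That last step silently absorbs precisely the two obligations your semantic argument makes explicit: (i) $beh_{P,Z}.(\Par_{p:P}C_p)$ is an \emph{existential} over partitions, so instantiating it at the given $P_1, P_2$ requires a partition-independence (associativity/commutativity) property of the $\Par$ semantics --- your ``decomposes along any partition'' claim, which neither you nor the paper proves; and (ii) the sub-behaviours it yields are padded with $\ch \Idle$, so the two groups' genuine execution intervals need not coincide with each other, nor with the intervals over which \refeq{eq:26} and \refeq{eq:14} must deliver $r_1$ and $r_2$. Your route pays for this honesty with the interval/idle bookkeeping you flag, and your proposed repairs (relies that split, guarantees stable under idling) are consistent with how the theorem is actually deployed --- the derived $r_1$ is explicitly required to split in \refsec{sec:verification}, and the proof of \refthm{thm:parametrise} takes $r_2$ to be $true$. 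So the paper's proof buys brevity by leaving its final step unjustified, while your version is the more rigorous of the two, at the cost of needing those stability assumptions stated (or a separate lemma for the partition/idle manipulation of the $\Par$ semantics).
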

\begin{proof}
  \begin{derivation*}
    \step{\refeq{eq:15} \land \refeq{eq:25}}

    \trans{=}{definition of $\sref_P^{Y,Z}$, logic $(a \entails (b
      \imp c)) = (a \land b \entails c)$}

    \step{(r \land r_1 \land
      beh_{P_1,Z}.(\Par_{p:P_1}C_p) \entails beh_{P_1,Y}.(\Par_{p:P_1}
      A_p)) \land}
    \step{
      (r \land r_2 \land
      beh_{P_2,Z}.(\Par_{p:P_2}C_p) \entails beh_{P_2,Y}.(\Par_{p:P_2} A_p))}

    \trans{\imp}{logic} 

    \step{r \land r_1 \land beh_{P_1,Z}.(\Par_{p:P_1}C_p)
      \land r_2 \land beh_{P_2,Z}.(\Par_{p:P_2}C_p) \entails beh_{P_1,Y}.(\Par_{p:P_1} A_p) \land
      beh_{P_2,Y}.(\Par_{p:P_2} A_p)}

    \trans{\imp}{\refeq{eq:26} and \refeq{eq:14}}

    \step{r \land beh_{P_1,Z}.(\Par_{p:P_1}C_p) \land
      beh_{P_2,Z}.(\Par_{p:P_2}C_p) \entails beh_{P_1,Y}.(\Par_{p:P_1} A_p) \land
      beh_{P_2,Y}.(\Par_{p:P_2} A_p) }
    \step{}
  \end{derivation*}

  \noindent Hence we have the following calculation: 
  \begin{derivation}
    \step{\exists P_1, P_2  \st P_1 \cup
      P_2 = P 
      \land (P_1 \cap P_2 = \emptyset) 
      \land \refeq{eq:15} \land  \refeq{eq:25}}
    
    \trans{\imp}{calculation above and logic}

    \step{\exists P_1, P_2
       \st
         (P_1 \cup P_2 = P) \land
         (P_1 \cap P_2 = \emptyset)
         \land  
         r \land  beh_{P_1,Z}.(\Par_{p:P_1}C_p) \land
         beh_{P_2,Z}.(\Par_{p:P_2}C_p)
       \entails}
     \step{\hfill 
         beh_{P_1,Y}.(\Par_{p:P_1} A_p) \land 
         beh_{P_2,Y}.(\Par_{p:P_2} A_p)
     }

    \trans{=}{definitions}

    \step{(\Rely{r}{\Par_{p:P} A_p})
      \sref_{P} (\Par_{p:P} C_p) \hfill\qedhere}
  \end{derivation}
\end{proof}

When modelling a lock-free program \cite{CGLM06,DSW11,VHHS06}, one
assumes that each process repeatedly executes operations of the data
structure, and hence the processes of the system only differ in terms
of the process ids. For such programs, a proof of the parallel
composition may be simplified to as described by the theorem below.
\begin{theorem}
  \label{thm:parametrise}
  Suppose $p \in Proc$ and $A 
  $ and $C
  $ are commands with input parameter $p$ such that $W \subseteq Y$
  and $X \subseteq Z$. Then $ (\Rely{r}{\Par_{p:P}A(p)}) \sref_P^{Y,Z}
  (\Par_{p:P}C(p))$ holds if the following holds for some
  interval predicate $r_1$ and some $p \in P$ where $Q \sdef P \bs
  \{p\}$.
  \begin{eqnarray}
    \label{eq:70}
    (\Rely{r \land r_1}{A(p)}  \sref_{p}^{Y,Z} 
    C(p))
    & \quad \land \quad 
    (r \land beh_{Q}.(\Par_{q:Q}C(q))  \entails 
    r_1)
  \end{eqnarray}
\end{theorem}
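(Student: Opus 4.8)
The plan is to derive \refthm{thm:parametrise} from the general parallel-decomposition result \refthm{thm:decompose} by induction on the size of $P$, exploiting the fact that under parametrisation every process runs a renaming of the same command. First I would fix the distinguished process $p$ supplied by the hypothesis, set $Q \sdef P \bs \{p\}$, and take as decomposition $P_1 \sdef \{p\}$ and $P_2 \sdef Q$. The crucial choice is the second rely condition: I would instantiate $r_2 \sdef beh_{p,Z}.C(p)$, i.e. the behaviour of the single distinguished process, so that it records precisely the interference that $p$ contributes to the remaining processes.

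For the base case $|P| = 1$ we have $Q = \emptyset$, so $beh_{Q,Z}.(\Par_{q:Q}C(q)) \equiv \True$ and the second conjunct of \refeq{eq:70} collapses to $r \entails r_1$; hence $r \land r_1 \equiv r$ and the first conjunct of \refeq{eq:70} is exactly the goal $(\Rely{r}{A(p)}) \sref_p^{Y,Z} C(p)$. For the inductive step I would discharge the four premises of \refthm{thm:decompose} in turn. Premise \refeq{eq:15} is the first conjunct of \refeq{eq:70}, and premise \refeq{eq:26} is its second conjunct. Premise \refeq{eq:14}, namely $r \land beh_{p,Z}.C(p) \entails r_2$, holds trivially by the choice $r_2 = beh_{p,Z}.C(p)$. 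The only premise requiring work is \refeq{eq:25}, the sub-refinement $(\Rely{r \land r_2}{\Par_{q:Q}A(q)}) \sref_Q^{Y,Z} (\Par_{q:Q}C(q))$.

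I would obtain \refeq{eq:25} from the induction hypothesis applied to $Q$ under global rely $r' \sdef r \land r_2$. To apply it I must supply, for some $q \in Q$ with $Q' \sdef Q \bs \{q\}$, a rely condition playing the role of $r_1$; by symmetry I would reuse $r_1$ (renamed $p \mapsto q$). The two obligations are then: (i) $(\Rely{r' \land r_1}{A(q)}) \sref_q^{Y,Z} C(q)$, which follows from the (renamed) first conjunct of \refeq{eq:70} since $r' \land r_1 \entails r \land r_1$, using weakening of rely conditions \refeq{eq:38} and transitivity of $\sref$; and (ii) $r' \land beh_{Q',Z}.(\Par_{q':Q'}C(q')) \entails r_1$. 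For (ii) I would observe that $r' \land beh_{Q',Z}.(\Par C) = r \land beh_{p,Z}.C(p) \land beh_{Q',Z}.(\Par C)$, and that $beh_{p,Z}.C(p) \land beh_{Q',Z}.(\Par C) \entails beh_{P\bs\{q\},Z}.(\Par_{p':P\bs\{q\}}C(p'))$ --- instantiating the existential partition in the ``otherwise'' clause of the behaviour definition with $\{p\}$ and $Q'$, and using $beh_{p,Z}.C(p) \entails beh_{p,Z}.(C(p) \ch \Idle)$ (chop admits an empty second interval on which $\Idle$ holds). The right-hand side then entails $r_1$ by the (renamed) second conjunct of \refeq{eq:70}.

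The main obstacle I anticipate is precisely this last decomposition of the parallel behaviour: the definition of $beh_{P,Z}.(\Par_{p:P}C_p)$ is not a plain conjunction but an existentially-quantified partition with the asynchronous-termination idling $\ch \Idle$, so the identification of $beh_{P\bs\{q\}}$ with $beh_p \land beh_{Q'}$ holds only up to these idling steps and only in the entailment direction I require. A secondary but genuine subtlety is the appeal to parametric symmetry to transfer the two hypotheses of \refeq{eq:70} from the single distinguished $p$ to an arbitrary $q \in Q$; making this rigorous requires that $r_1$ and the whole construction be invariant under the renaming $p \mapsto q$, which is where the assumption that the processes differ only in their identifiers is used.
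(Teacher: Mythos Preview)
Your proposal is correct and, like the paper, reduces to \refthm{thm:decompose} with $P_1=\{p\}$, $P_2=Q$, and then proceeds by induction on the size of the process set. The substantive difference is your choice of $r_2$. The paper simply takes $r_2 \sdef \True$: then premise~\refeq{eq:14} is vacuous and premise~\refeq{eq:25} becomes $(\Rely{r}{\Par_{q:Q}A(q)}) \sref_Q^{Y,Z} (\Par_{q:Q}C(q))$, i.e.\ exactly the original goal for the strictly smaller set $Q$, so the induction hypothesis applies directly with the \emph{same} global rely $r$ and no need to recombine behaviours.

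Your instantiation $r_2 \sdef beh_{p,Z}.C(p)$ is more elaborate but buys something the paper glosses over. With $r_2=\True$ the inductive step still needs the (renamed) second conjunct of \refeq{eq:70} for $Q$, namely $r \land beh_{Q\setminus\{q\}}.(\Par C) \entails r_1^{q}$; parametric renaming of the given hypothesis, however, only yields $r \land beh_{P\setminus\{q\}}.(\Par C) \entails r_1^{q}$, and $Q\setminus\{q\} \subsetneq P\setminus\{q\}$. The paper's informal appeal to parametricity tacitly assumes the environment hypothesis holds uniformly for all subsets, which is reasonable in applications (cf.\ \refeq{eq:64}) but is not stated in the theorem. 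Your approach sidesteps this by accumulating the removed processes' behaviours in the rely, so that at each level $r' \land beh_{Q'}$ can be reassembled into $r \land beh_{P\setminus\{q\}}$ via the partition clause of the parallel semantics --- exactly the ``obstacle'' you flag. In short: the paper's route is shorter; yours is more self-contained and actually closes a gap the paper leaves to the reader.
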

\begin{proof}
  Applying \refthm{thm:decompose} and choosing $P_1 = \{p\}$,
  and $P_2 = Q$, the proof of $$(\Rely{r}{\Par_{p:P} A(p)})
  \sref_P^{Y,Z} (\Par_{p:P} C(p))$$ decomposes as follows for some
  interval predicates $r_1$ and $r_2$.
  \begin{eqnarray}
    \label{eq:12-a}
    (\Rely{r \land r_1}{A(p)}  \sref_{p}^{Y,Z} 
    C(p))
    & \land & 
    (\Rely{r \land r_2}{\Par_{p:Q}A(p)}  \sref_{Q}^{Y,Z}  \Par_{p:Q}C(p))
    \\
    \label{eq:19-a}
    (r \land beh_{Q}.(\Par_{p:Q}C(p))  \entails 
    r_1)
    & \land & 
    (r \land beh_{p}.C(p) \entails r_2)
  \end{eqnarray}
  The first conjuncts of conditions \refeq{eq:12-a} and
  \refeq{eq:19-a} hold by assumption \refeq{eq:70}.  Choosing $r_2 =
  true$, the proof of the second conjunct of \refeq{eq:19-a} is trivial
  and the proof of the second conjunct of \refeq{eq:12-a} follows by
  induction on the size of the set of processes $Q$. In particular, we
  use the fact that the behaviour of $\Par_{p: \{\}} A(p)$ is $true$
  as the base case of the induction. \hfill\qedhere
\end{proof}

\subsection{Transformation rules}
\label{sec:transformation-rules}
In this section we present a number of additional refinement rules
that allow one to transform coarse-grained code into code with finer
granularity. These proofs are greatly simplified by the fact that we
consider interval-based behaviour, which includes consideration of the
possible interference from other processes. The theorems we present
are developed around the Treiber Stack example. We anticipate that
several more can be developed when considering other examples. We
further conjecture that such theorems can also be used to derive
concurrent data structures via a series of refinements.

The theorem below allows refinement of an enforced property $\Always
(e = va)$ to a command $C$ within $\rely r \st va \asgn e \ch C \ch
\Guard{e = va}$ provided that $C$ does not write to $va$ and the rely
condition $r$ ensures that the $ABA$ problem does not occur on $e$.
\begin{theorem}
  \label{thm:ABA}
  Suppose $p$ is a process, $Z \subseteq Var \cup Addr$, $v \in Z
  \cap Var$, $e$ is an expression, $C$ is a command and $r$ is an
  interval predicate. If both of the following hold
  \begin{eqnarray}
    \label{eq:66}
    beh_{p,Z}.C & \entails &  \Always \neg \mcW.v.p 
    \\
    \label{eq:65}
    r & \imp & 
    \Box \neg ABA.e \land (\all q : Proc \bs \{p\} \st \neg \mcW.v.q)
  \end{eqnarray}
  then
  \begin{eqnarray}
    \label{eq:7}
    \begin{array}[c]{@{}l@{}}
      \rely r \st 
      \begin{array}[t]{@{}l@{}}
        v \asgn e \ch 
        (\enf{\Always (e = v) }\st{C}) \ch 
        \Guard{e = v}
      \end{array}
    \end{array} & \sref_p^{Z} &
    \begin{array}[c]{@{}l@{}}
      v \asgn e \ch  
      C  \ch 
      \Guard{e = v}
    \end{array}
  \end{eqnarray}
\end{theorem}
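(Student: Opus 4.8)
The plan is to peel behaviour refinement down to a single entailment and then discharge it by a direct argument over intervals. Write $A = v\asgn e \ch (\enf{\Always(e=v)}\st{C}) \ch \Guard{e=v}$ for the abstract body and $C_0 = v\asgn e \ch C \ch \Guard{e=v}$ for the concrete one. By the definitions of $\rely$ and $\sref$ (equivalently, via the Galois connection of \refthm{thm:rely-enf-gc}) the goal $\refeq{eq:7}$ reduces to
\[
 r \land beh_{p,Z}.(v\asgn e \ch C \ch \Guard{e=v}) \entails beh_{p,Z}.\big(v\asgn e \ch (\enf{\Always(e=v)}\st{C})\ch\Guard{e=v}\big).
\]
Since $beh_{p,Z}$ distributes over $\ch$ and $beh_{p,Z}.(\enf{\Always(e=v)}\st{C})=\Always(e=v)\land beh_{p,Z}.C$, both sides carry the same outer chop skeleton; the only new obligation is that the enforced predicate $\Always(e=v)$ holds on the middle, $C$-block. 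So I would fix an interval $\Delta$ and stream $s$ on which the concrete side and $r$ both hold and chase the chop decomposition.

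In the principal case $\Delta$ splits as $\Delta_1 \adjoins \Delta_2 \adjoins \Delta_3$ carrying $beh_{p,Z}.(v\asgn e)$, $beh_{p,Z}.C$ and $beh_{p,Z}.\Guard{e=v}$ respectively. First I would pin the value of $v$: unfolding $v\asgn e$ as $\exists k \st \Eval_{p,Z}(e,k)\ch\Update_{p,Z}(v,k)$ yields a constant $k$, a witness time $t_e\in\Delta_1$ with $e=k$ at $t_e$ (from the $\Eventually$ inside $\Eval$), and $v=k$ at $\lub.\Delta_1$ (from $\Always(v=k)$ in $\Update$, which is non-empty). Next I would show $v$ is stable throughout $\Delta_2\cup\Delta_3$: by $\refeq{eq:66}$ process $p$ does not write $v$ during $C$, by the $\idle_{p,Z}$ conjunct of the guard's behaviour it does not write $v$ during $\Delta_3$, and by $\refeq{eq:65}$ no $q\in Proc\bs\{p\}$ writes $v$; hence no process holds write permission to $v$ anywhere on $\Delta_2\cup\Delta_3$, and $\refeq{eq:60}$ gives stability, so $v=k$ throughout $\Delta_2\cup\Delta_3$. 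Finally the guard $\Guard{e=v}$ supplies a time $t_3\in\Delta_3$ with $e=v$, and since $v=k$ there, $e=k$ at $t_3$.

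The hard part is then the ABA step. I have $e=k$ at $t_e\in\Delta_1$ and $e=k$ at $t_3\in\Delta_3$, with $t_e$ preceding all of $\Delta_2$, $t_3$ following it, and $[t_e,t_3]\subseteq\Delta$ (as $\Delta$ is an interval). Suppose for contradiction that $e$ took some value $k'\neq k$ at a time $t_2$ with $t_e<t_2<t_3$; then partitioning $[t_e,t_3]$ into the three adjoining pieces $\{t_e,\dots,t_2-1\}$, $\{t_2\}$, $\{t_2+1,\dots,t_3\}$ exhibits $\Eventually(e=k)\ch\Eventually(e=k')\ch\Eventually(e=k)$ on the subinterval $[t_e,t_3]$, i.e.\ $ABA.e$ holds there, contradicting $\Box\neg ABA.e$, which $\refeq{eq:65}$ extracts from $r$. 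Hence $e=k$ throughout $[t_e,t_3]$, and in particular throughout $\Delta_2$; combined with $v=k$ on $\Delta_2$ this gives $\Always(e=v).\Delta_2.s$, so $beh_{p,Z}.(\enf{\Always(e=v)}\st{C}).\Delta_2.s$ holds. Re-assembling by the definition of $\ch$ over $\Delta_1\adjoins\Delta_2\adjoins\Delta_3$ then yields $beh_{p,Z}.A.\Delta.s$, as required.

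I expect two obstacles. The first, and main, is exactly the ABA argument: it is the one place where hypotheses $\refeq{eq:65}$ and $\refeq{eq:66}$ are all consumed, and the argument only closes because the two anchor points $t_e$ and $t_3$ bracket the entire $C$-interval $\Delta_2$. The second is the bookkeeping for the divergent cases of chop, where $\lub.\Delta=\infty$ and the guard is never reached: the case in which $v\asgn e$ itself diverges is immediate from the second disjunct of $\ch$, whereas a divergent $C$ is delicate because there is then no $t_3$ to anchor the ABA reasoning — in the CAS-loop applications to Treiber's stack the body $C$ terminates and the guard is reached, so I would either restrict the rule to such terminating $C$ or supply the tail with a separate stability-plus-$r$ argument.
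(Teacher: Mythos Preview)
Your approach is essentially the paper's: reduce the refinement to a single entailment, then in the case where $e\neq v$ somewhere during $C$ derive an $ABA.e$ pattern contradicting $r$. The paper phrases this as a chop-level calculation (splitting the enforced property on $C$ as the tautology $\Always(e=v)\lor\Eventually(e\neq v)$ and showing the second disjunct, together with $r$, entails $false$), whereas you work with explicit time witnesses $t_e,t_2,t_3$; the content is the same. Your flagged concern about a divergent $C$ is apt: the paper's calculation also does not explicitly discharge that branch of the chop semantics, and in the Treiber-stack applications the body between the assignment and the guard always terminates.
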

\begin{proof}
  Condition \refeq{eq:7} is equivalent to
  \begin{eqnarray*}
    r \land beh_{p,Z}.(v \asgn e \ch  
    C  \ch 
    \Guard{e = v})
    & \entails & 
    beh_{p,Z}.(v \asgn e \ch 
    (\enf{\Always (e = v) }\st{C}) \ch 
    \Guard{e = v})
  \end{eqnarray*}
  By logic, the antecedent of the formula above is equivalent to
  \[
  r \land beh_{p,Z}.(v \asgn e \ch  
  (\Enf{\Always (e = v) \lor \Eventually (e \neq v)}{C})  \ch 
  \Guard{e = v})
  \]
  The $\Always(e = v)$ case is trivial. For the $\Eventually (e \neq
  v)$ case, we have 
  \begin{derivation}
    \step{r \land beh_{p,Z}.(v \asgn e \ch 
      (\enf{\Eventually (e \neq v) }\st {C}) \ch 
      \Guard{e = v})}



    \trans{\entails}{\refeq{eq:66} and \refeq{eq:65} together implies $\all q :
      P \st \Always \neg \mcW.v.q$}

    \step{r \land beh_{p,Z}.(v \asgn e \ch 
      (\Enf{\Eventually (e \neq v) \land \stable.v}{C}) \ch  {  \Guard{e = v}})}

    \trans{\entails}{\refeq{eq:65}, $beh_{p,Z}. \Guard{e = v}
      \entails \Always \neg \mcW.v.p$ and {\bf HC1}}

    \step{r \land beh_{p,Z}.(v \asgn e \ch 
      (\Enf{\Eventually (e \neq v) \land \stable.v}{C}) \ch 
      (\enf{\stable.v}\st{  \Guard{e = v}}))}

    \trans{\entails}{$beh_{p,Z}$ definition}

    \step{r \land \exists k \st \Eventually (e = k) \ch (\Always (v = k) \land \neg \Empty) \ch 
      ({\Eventually (e \neq v) \land \stable.v}) \ch 
      (\stable.v \land \Eventually (e = v))}

    \trans{\entails}{using stability of $v$}

    \step{r \land \exists k \st \Eventually (e = k) \ch (\Always (v = k) \land \neg \Empty) \ch 
      {\Eventually (e \neq k)} \ch 
      \Eventually (e = k)}

    \trans{\entails}{using $r \imp \neg ABA.e$}

    \step{false \hfill \qedhere }

  \end{derivation}
\end{proof}

The theorem below allows one to replace an assignment to an expression
by an equivalent assignment provided that the process under
consideration has the necessary permissions. 
\begin{theorem}[Replace assignment]
  Suppose $e$ and $e'$ are expressions, $v \in Var$, $Y \subseteq Z
  \subseteq (Var \cup Addr)$ are sets of locations and $p$ is a
  process. If $IntFree.e.p \land beh_{p,Z}.\Idle \entails
  ReadAllLocs.e.p$, then
  \label{thm:expr-eq}
  \begin{eqnarray*}
    \Rely{\Always (e = e') \land IntFree.e.p 
    }{v \asgn e} & \sref_p^{Y,Z} & v \asgn e'  
  \end{eqnarray*}
\end{theorem}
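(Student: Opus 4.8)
The plan is to enter through the Galois connection of Theorem~\ref{thm:rely-enf-gc}, which turns the goal $(\Rely{\Always(e=e') \land IntFree.e.p}{v \asgn e}) \sref_p^{Y,Z} (v \asgn e')$ into the equivalent behaviour-refinement statement $v \asgn e \sref_p^{Y,Z} \enf{\Always(e=e') \land IntFree.e.p}\st{v \asgn e'}$. Unfolding the definitions of $\sref$ and $beh$, this reduces to the single entailment
\[
\Always(e=e') \land IntFree.e.p \land beh_{p,Z}.(v \asgn e') \entails beh_{p,Y}.(v \asgn e).
\]
I would then expand both assignment behaviours using Figure~\ref{fig:beh-def}, writing the concrete side as $\exists k \st \Eval_{p,Z}(e',k) \ch \Update_{p,Z}(v,k)$ and aiming to produce the abstract side $\exists k \st \Eval_{p,Y}(e,k) \ch \Update_{p,Y}(v,k)$ for the same witness $k$.

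The bulk of the work is to push the two rely conjuncts into the first component of the chop and convert the evaluation of $e'$ into an evaluation of $e$. Since $\Always(e=e')$ and $IntFree.e.p$ are both defined through $\Always$, they split, so they hold on whichever subinterval $\Delta_1$ carries $\Eval_{p,Z}(e',k)$. On $\Delta_1$ the predicate $\Eval_{p,Z}(e',k)$ gives $\Eventually(e'=k \land ReadAllLocs.e'.p)$, so there is a time $t$ with $e'=k$; by $\Always(e=e')$ the same $t$ gives $e=k$. To obtain the required read permission $ReadAllLocs.e.p$ at $t$, I invoke the theorem's hypothesis $IntFree.e.p \land beh_{p,Z}.\Idle \entails ReadAllLocs.e.p$: the predicate $\Eval_{p,Z}(e',k)$ contains the conjunct $beh_{p,Z}.\Idle$ (i.e. $\idle_{p,Z}$), and $IntFree.e.p$ holds on $\Delta_1$, so the hypothesis supplies read permission to the locations of $e$ across $\Delta_1$, and in particular at $t$. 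Together these give $\Eventually(e = k \land ReadAllLocs.e.p)$ on $\Delta_1$, and the context weakening $\idle_{p,Z} \entails \idle_{p,Y}$ (immediate from $Y \subseteq Z$) then yields $\Eval_{p,Y}(e,k)$ on $\Delta_1$.

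The update component is routine: from $Y \subseteq Z$ we get $Z \setminus \{v\} \supseteq Y \setminus \{v\}$, hence $\idle_{p,Z \setminus \{v\}} \entails \idle_{p,Y \setminus \{v\}}$, and since $v \in Var$ the first case of $\Update$ applies, giving $\Update_{p,Z}(v,k) \entails \Update_{p,Y}(v,k)$. Recombining the two components through the chop yields $\Eval_{p,Y}(e,k) \ch \Update_{p,Y}(v,k)$, i.e. $beh_{p,Y}.(v \asgn e)$; I would also note that the second, infinite disjunct of the chop ($\lub.\Delta = \infty$) is covered by the identical evaluation argument applied to the whole interval. The main obstacle is this middle step: transferring read permission from $e'$ to $e$ at the evaluation instant, which is exactly what the hypothesis is there to bridge, together with ensuring the $\Eventually$ witness $t$ for $e=k$ falls in a state where $ReadAllLocs.e.p$ holds --- handled because the hypothesis delivers read permission across all of $\Delta_1$ rather than at a single point.
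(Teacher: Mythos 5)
Your proposal is correct and follows essentially the same route as the paper: the entry via Theorem~\ref{thm:rely-enf-gc} lands at exactly the entailment $\Always(e=e') \land IntFree.e.p \land beh_{p,Z}.(v \asgn e') \entails beh_{p,Y}.(v \asgn e)$ that the paper takes as its starting point, and your subsequent steps (expanding the assignment behaviours, using $\Always(e=e')$ together with the hypothesis to replace $\Eventually(e'=k \land ReadAllLocs.e'.p)$ by $\Eventually(e=k \land ReadAllLocs.e.p)$, then weakening the context from $Z$ to $Y$) mirror the paper's derivation. Your additional care about distributing the splitting rely conjuncts into the evaluation subinterval and about the infinite disjunct of the chop only makes explicit what the paper leaves implicit.
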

\begin{proof}
  The proof holds if $ \Always (e = e') \land IntFree.e.p
  \land beh_{p,Z}.(v \asgn e') \entails beh_{p,Y}.(v \asgn e)$
  
  \begin{derivation}
    \step{\Always (e = e') \land IntFree.e.p \land
      beh_{p,Z}.(v \asgn e')}  
    
    \trans{\equiv}{definition of $beh_{p,Z}$}
    
    \step{\Always (e = e') \land IntFree.e.p \land 
      \exists k \st \Eval_{p,Z}.(e', k) \ch \Update_{p,Z}(v,k) }
    
    \trans{\equiv}{definition of $\Eval$}
    
    \step{\Always (e = e') \land IntFree.e.p \land \exists
      k \st (\Eventually (e' = k \land ReadAllLocs.e'.p) \land
      beh_{p,Z}.\Idle )  \ch \Update_{p,Z}(v,k)}
    
    \trans{\entails}{using $\Always (e = e')$ and assumption
      $IntFree.e \land beh_{p,Z}.\Idle \entails ReadAllLocs.e.p$}
    
    \step{\exists
      k \st (\Eventually (e = k \land ReadAllLocs.e.p) \land 
      beh_{p,Z}.\Idle)  \ch \Update_{p,Z}(v,k)}
    
    \trans{\entails}{$Y \subseteq Z$, definition of $beh_{p,Y}$}
    
    \step{beh_{p,Y}.(v \asgn e) \hfill \qedhere} 
    
  \end{derivation}
\end{proof}

The following theorem allows one to split a guard evaluation so that
the variable being tested is stored locally. The theorem allows one to
perform some additional behaviour that does not affect the abstract
state.
\begin{theorem}[Introduce command]
  \label{thm:grd-intro}
  Suppose $e$ is an expression, $v$ is a variable, $ae$ is an
  address-valued expression, $p$ is a process, $Y, Z \subseteq Var$
  such that both $Y \subseteq Z$ and $v \in Y$, and $C$ is a
  command. Then both of the following hold:
  \begin{align}
    \label{eq:61}
    v \asgn e & \sref_p^{Y,Z} C \ch v \asgn e & &
    \textrm{provided $beh_{p,Z}.C \entails \idle_{p,Y}$}
    \\
    \label{eq:62}
    ae \hasgn e & \sref_p^{Y,Z} C \ch ae \hasgn e & &\textrm{provided
      $beh_{p,Z}.C \entails \idle_{p,Y} \land \neg WriteSomeLoc.ae.p$}
  \end{align}
\end{theorem}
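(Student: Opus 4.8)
The plan is to unfold the refinement $A \sref_p^{Y,Z} B$ to its defining implication $beh_{p,Z}.B \entails beh_{p,Y}.A$ and then calculate using the behaviour equations of \reffig{fig:beh-def}. For \refeq{eq:61} the goal becomes $beh_{p,Z}.(C \ch v \asgn e) \entails beh_{p,Y}.(v \asgn e)$; expanding the left-hand side by the rules for sequential composition and variable assignment gives $beh_{p,Z}.C \ch (\exists k \st \Eval_{p,Z}(e,k) \ch \Update_{p,Z}(v,k))$. Before the main calculation I would record two monotonicity facts that follow from $Y \subseteq Z$: since $\idle_{p,Z} \entails \idle_{p,Y}$ (a universal over $Z$ implies the universal over the subset $Y$), we get $\Eval_{p,Z}(e,k) \entails \Eval_{p,Y}(e,k)$, and likewise $\Update_{p,Z}(v,k) \entails \Update_{p,Y}(v,k)$ using $Y\bs\{v\} \subseteq Z\bs\{v\}$.

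First I would apply the provision $beh_{p,Z}.C \entails \idle_{p,Y}$ together with monotonicity of chop to weaken the left-hand side to $\idle_{p,Y} \ch (\exists k \st \Eval_{p,Y}(e,k) \ch \Update_{p,Y}(v,k))$. Then I would pull the (k-independent) leading idle through the existential and re-associate the nested chops to obtain $\exists k \st (\idle_{p,Y} \ch \Eval_{p,Y}(e,k)) \ch \Update_{p,Y}(v,k)$, arranged so that the idle abuts only the evaluation and not the update. At this point the widening lemma \refeq{eq:widen-1}, namely $\idle_{p,Y} \ch \Eval_{p,Y}(e,k) \entails \Eval_{p,Y}(e,k)$, absorbs the leading idle, leaving exactly $\exists k \st \Eval_{p,Y}(e,k) \ch \Update_{p,Y}(v,k) = beh_{p,Y}.(v \asgn e)$, which closes the case.

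For \refeq{eq:62} I would run the identical calculation starting from $beh_{p,Z}.C \ch (\exists k, a \st (\Eval_{p,Z}(ae,a) \land \Eval_{p,Z}(e,k)) \ch \Update_{p,Z}(a,k))$. The only new ingredient is absorbing the leading idle into a \emph{conjunction} of two evaluations; since chop distributes over conjunction in the forward direction, $\idle_{p,Y} \ch (g_1 \land g_2) \entails (\idle_{p,Y} \ch g_1) \land (\idle_{p,Y} \ch g_2)$, I can apply \refeq{eq:widen-1} separately to $\Eval_{p,Y}(ae,a)$ and to $\Eval_{p,Y}(e,k)$. The additional hypothesis $beh_{p,Z}.C \entails \neg WriteSomeLoc.ae.p$ is what guarantees that $C$ never holds write permission on a location needed to evaluate $ae$, so that at the folded evaluation point $p$ can read all of $accessed.ae$ and the address $a$ determined after $C$ is a coherent target for the single predicate $\Eval_{p,Y}(ae,a)$ governing the subsequent $\Update_{p,Y}(a,k)$.

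The hard part will be the folding step: justifying that prepending the interval on which $C$ runs to the evaluation interval still yields a single well-formed $\Eval$ predicate rather than a genuine chop. This is precisely the purpose of the widen lemmas \refeq{eq:widen-1} and \refeq{eq:widen-2}, but the bookkeeping requires care — pulling the existentials over $k$ (and $a$) outside the leading idle, re-associating so the idle meets only the evaluation segment, and in the address case splitting the conjoined evaluations before widening. The two permission side-conditions, $\idle_{p,Y}$ in both cases and $\neg WriteSomeLoc.ae.p$ in the address case, are exactly the hypotheses that make these manipulations sound.
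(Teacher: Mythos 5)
Your proof is correct and takes essentially the same route as the paper's: expand the behaviours by definition, use the hypothesis to weaken $beh_{p,Z}.C$ to $\idle_{p,Y}$, weaken the context subscripts from $Z$ to $Y$ using $Y \subseteq Z$, pull the existential out and re-associate the chops, and absorb the leading idle with \refeq{eq:widen-1}. The paper writes out only the \refeq{eq:61} case and dismisses \refeq{eq:62} with ``follows the same structure'', so your explicit handling of the conjoined evaluations via the forward distribution of chop over conjunction simply fills in detail the paper leaves implicit.
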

\begin{proof} We prove \refeq{eq:61} as follows. The proof of
  \refeq{eq:62} follows the same structure.  
  \begin{derivation}
    \step{}
    \step{beh_{p,Z}.(C \ch v \asgn e)}

    \trans{\equiv}{definition of $beh_{p,Z}$}

    \step{beh_{p,Z}.C
      \ch 
          \exists k  \st
            \Eval_{p,Z}(e,k)
            \ch \Update_{p,Z}(v,k)
      }

    \trans{\entails}{assumption on $beh_{p,Z}.C$}

    \step{ \idle_{p,Y} \ch \exists k  \st
      \begin{array}[t]{@{}l@{}}
        \Eval_{p,Z}(e,k) \ch \Update_{p,Z}(v,k) 
      \end{array}
    }

    \trans{\entails}{logic, $Y \subseteq Z$ and $v \in Y$}
     
    \step{\exists k  \st 
        \begin{array}[t]{@{}l@{}}
          \idle_{p,Y} \ch  
          \Eval_{p,Y}(e,k) \ch \Update_{p,Y}(v,k) 
        \end{array}
      }

    \trans{\entails}{\refeq{eq:widen-1} and definition of $beh_{p,Y}$}
    
    \step{beh_{p,Y}.(v \asgn e) \hfill \qedhere}
  \end{derivation}
\end{proof}

Guard and expression evaluation occur in two different (observable)
states, and hence, for example, $[a = 42] \ch v \asgn a$ does not
guarantee that $v$ has a final value $42$ because the value of $a$ may
change after the guard evaluation.  The following theorem shows that
testing a guard can be split over multiple steps by introducing a new
local variable to the program.

\begin{theorem}[Split guard]
  \label{thm:refine-guard}
  Suppose 
  $va, v \in Z$, $k$ is a
  constant and $beh_{p,Z}.C \imp \idle_{Y \cup
    \{va\}} 
  $ then
  \begin{eqnarray*}
    [v = k] & \sref_p^{Y,Z} & va \asgn v \ch \enf IntFree.va.p \st (C
    \ch [va = k])
  \end{eqnarray*}
\end{theorem}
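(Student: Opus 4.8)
By the definition of behaviour refinement the goal reduces to the single universal implication
\[
beh_{p,Z}.\bigl(va \asgn v \ch \enf IntFree.va.p \st (C \ch [va = k])\bigr) \entails beh_{p,Y}.[v = k].
\]
Since $accessed.(v = k).\sigma = \{v\}$, the right-hand target unfolds to $beh_{p,Y}.[v=k] = \Eval_{p,Y}.((v=k),true) = \Eventually((v = k) \land \mcR.v.p) \land \idle_{p,Y}$. So the plan is to extract from any concrete behaviour of the left-hand program (i) one instant at which $v = k$ and $p$ holds read permission to $v$, and (ii) the fact that $p$ writes no location of $Y$ across the whole interval.

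First I would expand the left-hand side using the $beh$-clauses for sequential composition, assignment, enforced conditions and guards. Over an interval $\Delta$ this yields a value $j$ and adjoining subintervals $\Delta_{1a} \adjoins \Delta_{1b} \adjoins \Delta_{2a} \adjoins \Delta_{2b}$ covering $\Delta$ such that: $\Eval_{p,Z}(v,j)$ holds on $\Delta_{1a}$, so some $t_0 \in \Delta_{1a}$ has $v = j$ and $\mcR.v.p$; $\Update_{p,Z}(va,j)$ holds on the non-empty $\Delta_{1b}$, giving $\Always(va = j \land \mcW.va.p)$ there; $beh_{p,Z}.C$ holds on $\Delta_{2a}$; $\Eval_{p,Z}.((va=k),true)$ holds on $\Delta_{2b}$, so some $t_1 \in \Delta_{2b}$ has $va = k$; and, crucially, $IntFree.va.p$, i.e. $\Always \neg \mcI.va.p$, holds throughout $\Delta_2 \sdef \Delta_{2a} \cup \Delta_{2b}$.

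The heart of the argument is to show $va$ is constantly $j$ on $\Delta_2$. No process writes $va$ there: the hypothesis $beh_{p,Z}.C \entails \idle_{Y \cup \{va\}}$ gives $\Always \neg \mcW.va.p$ on $\Delta_{2a}$, the $\idle_{p,Z}$ conjunct of the guard evaluation gives it on $\Delta_{2b}$ (as $va \in Z$), and $IntFree.va.p$ excludes writes by every $x \neq p$; hence $\Always(\all x : Proc \st \neg \mcW.va.x)$ on $\Delta_2$, so by \refeq{eq:60} $va$ is stable on $\Delta_2$. As $\Delta_{1b} \adjoins \Delta_2$ and $va = j$ at $\lub.\Delta_{1b} = \glb.\Delta_2 - 1$, stability carries $va = j$ across all of $\Delta_2$; evaluating at $t_1$ then forces $j = k$. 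Therefore $v = j = k$ at $t_0$ with $\mcR.v.p$, which delivers clause (i). For clause (ii), the $\idle_{p,Z}$ conjuncts on $\Delta_{1a}$ and $\Delta_{2b}$ cover $Y \subseteq Z$, the hypothesis on $C$ gives $\idle_{Y \cup \{va\}} \entails \idle_{p,Y}$ on $\Delta_{2a}$, and on $\Delta_{1b}$ the update obeys $\idle_{p,Z \bs \{va\}}$ whose only omitted write is to $va$; since $va$ is a freshly introduced local we have $va \notin Y$, so this does not violate $\idle_{p,Y}$. Conjoining the four phases yields $\idle_{p,Y}$ over $\Delta$, and with clause (i) this establishes $beh_{p,Y}.[v=k]$.

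I expect the main obstacle to be the stability propagation: justifying that the value stored in $va$ survives unchanged from the update interval $\Delta_{1b}$ through the (possibly long) interval $\Delta_2$ by combining $IntFree.va.p$, the non-write property of $C$, the guard's idling, and the adjacency bookkeeping about $\glb.\Delta_2 - 1$. Two further points need care: the implicit freshness side-condition $va \notin Y$ (without which the write in $\Delta_{1b}$ would break $\idle_{p,Y}$), and the divergent executions arising from the second disjunct of chop, in which the evaluation or $C$ fails to terminate and the guard is never reached; these must be dispatched separately from the finite decomposition above.
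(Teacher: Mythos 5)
Your proposal is correct and is essentially the paper's own proof recast pointwise: your stability argument (no process writes $va$ over $\Delta_2$, by $IntFree.va.p$ together with the hypothesis on $C$ and the guard's own idling, hence $\stable.va$ by \refeq{eq:60}, which carries $va = j$ across the boundary from the update and forces $j = k$) is exactly the paper's auxiliary calculation followed by its ``case analysis on $j = k$'', and your reassembly of $\Eventually((v=k) \land \mcR.v.p)$ together with $\idle_{p,Y}$ over the whole interval (using the implicit side conditions $Y \subseteq Z$ and $va \notin Y$, which the paper's proof also invokes despite their absence from the theorem statement) corresponds to the paper's closing steps via \refeq{eq:widen-1} and \refeq{eq:widen-2}. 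The only substantive difference is that you explicitly flag the divergent executions arising from the second disjunct of chop as needing separate treatment; the paper's calculation passes over this case silently (its contradiction for $j \neq k$ presupposes the finite decomposition), so this is added care on your part rather than a gap relative to the paper's own argument.
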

\begin{proof} 
  We first prove the following: 
  \begin{derivation}
    \step{IntFree.va.p \land  (beh_{p,Z}.C \ch \Eval_{p,Z}(va, k))}

    \trans{\entails}{assumptions $beh_{p,Z}.C \entails \idle_{Y \cup
        \{va\}}$ and $Y \subseteq Z$}

    \step{\stable.va \land (\idle_{p,Y} \ch \Eval_{p,Y}(va, k))} 

    \trans{\entails}{\refeq{eq:widen-1}}

    \step{\stable.va \land \Eval_{p,Y}(va, k)}

    \trans{\entails}{use $\stable.va$}

    \step{\Always (va = k) \land \idle_{p,Y}}  

  \end{derivation}

  \noindent
  Then, we have the following calculation
  \begin{derivation}
    \step{beh_{p,Z}.(va \asgn v \ch \enf IntFree.va.p \st (C \ch [va = k]))}

    \trans{\entails}{expand definition}

    \step{(\exists j \st \Eval_{p,Z}(v,j) \ch \Update_{p,Z}(va,j))
      \ch (IntFree.va.p \land (beh_{p,Z}.C \ch \Eval_{p,Z}(va, k)))} 

    \trans{\entails}{calculation above}

    \step{(\exists j \st \Eval_{p,Z}(v,j) \ch \Update_{p,Z}(va,j)
      ) \ch \Always (va = k) \land \idle_{p,Y}}

    \trans{\entails}{logic, assume $j$ fresh
    }

    \step{\exists j \st \Eval_{p,Z}(v,j) \ch \Update_{p,Z}(va,j)
       \ch \Always (va = k) \land \idle_{p,Y}}

    \trans{\entails}{case analysis on $j = k$, case $j \neq k$ yields
      a contradiction}

    \step{\Eval_{p,Z}(v,k) \ch 
      \Update_{p,Z}(va,k)
      \ch  \idle_{p,Y}}   

    \trans{\entails}{$Y \subseteq Z$, definition of $\Update$ using
      assumption $va \notin Y$} 

    \step{\Eval_{p,Y}(v,k) \ch \idle_{p,Y} \ch \idle_{p,Y}}   

    \trans{\entails}{\refeq{eq:widen-2} twice and definition of $beh$}

    \step{beh_{p,Y}.[v = k] \hfill\qedhere}

  \end{derivation}
\end{proof}
A theorem such as \refthm{thm:refine-guard} is more difficult to
establish in a model that only considers pre/post states because the
guard evaluation on the left of $\sref_p^{Y,Z}$ is over a single state
and the command on the right is over multiple
states. 

The following theorem allows an assignment to be introduced provided
that the new variable is distinct from the abstract context.
\begin{theorem} 
  \label{thm:intro-asgn}
  Suppose $Y, Z \subseteq Var$ such that $Y \subseteq Z$, $v \in Z \bs
  Y$ and $e$ is an expression and $p \in Proc$. Then
  \begin{eqnarray*}
    \enf{\neg \Empty}\st{\Idle} & \sref_p^{Y,Z} & v \asgn e
  \end{eqnarray*}
\end{theorem}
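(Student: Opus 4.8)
The plan is to unfold the definition of behaviour refinement, so that the goal $\enf{\neg\Empty}\st{\Idle} \sref_p^{Y,Z} v \asgn e$ reduces to the single interval-predicate entailment $beh_{p,Z}.(v \asgn e) \entails beh_{p,Y}.(\enf{\neg\Empty}\st{\Idle})$. The right-hand side expands, via the definition of an enforced condition together with $beh_{p,Y}.\Idle = \idle_{p,Y}$, to $\neg\Empty \land \idle_{p,Y}$; the left-hand side expands to $\exists k \st \Eval_{p,Z}(e,k) \ch \Update_{p,Z}(v,k)$. It therefore suffices to prove, for an arbitrary $k$, that $\Eval_{p,Z}(e,k) \ch \Update_{p,Z}(v,k) \entails \neg\Empty \land \idle_{p,Y}$.

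First I would record two containments of idle contexts that follow from the hypotheses. Since $Y \subseteq Z$, idling over $Z$ implies idling over the smaller set $Y$, so $\idle_{p,Z} \entails \idle_{p,Y}$. Since moreover $v \in Z \bs Y$ gives $v \notin Y$, we have $Y \subseteq Z \bs \{v\}$ and hence $\idle_{p,Z \bs \{v\}} \entails \idle_{p,Y}$. Reading these against the definitions then gives $\Eval_{p,Z}(e,k) \entails \idle_{p,Z} \entails \idle_{p,Y}$, and, using the $v \in Var$ branch of $\Update$ (valid because $Z \subseteq Var$), $\Update_{p,Z}(v,k) \entails \idle_{p,Z \bs \{v\}} \land \neg\Empty \entails \idle_{p,Y} \land \neg\Empty$.

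By monotonicity of chop the product then entails $\idle_{p,Y} \ch (\idle_{p,Y} \land \neg\Empty)$, and the remaining task is to collapse this to $\idle_{p,Y} \land \neg\Empty$. For the $\idle_{p,Y}$ conjunct I would use that $\idle_{p,Y}$ is a conjunction of $\Always$-predicates and hence joins (the same join property of $\Always$ invoked in the proof of Lemma~\ref{lem:simulation}), so $\idle_{p,Y} \ch \idle_{p,Y} \entails \idle_{p,Y}$. For $\neg\Empty$ I would argue directly from the two disjuncts of the chop definition: in the finite-split case the second subinterval carries $\neg\Empty$ and is a subset of $\Delta$, so $\Delta \neq \emptyset$; in the remaining case $\lub.\Delta = \infty$, whence $\Delta$ is infinite and again non-empty.

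I expect no serious obstacle, since the argument is essentially bookkeeping on idle contexts. The one point needing care is the $\neg\Empty$ conjunct: it cannot be obtained from idling (which holds vacuously on the empty interval) and must instead be extracted from the $\Update$ phase, and one must separately check the $\lub.\Delta = \infty$ disjunct of chop so that non-emptiness is not lost when the evaluation phase covers the whole interval. The step $Y \subseteq Z \bs \{v\}$ is precisely where the hypothesis $v \in Z \bs Y$ is used, and it is the only place the side conditions genuinely enter.
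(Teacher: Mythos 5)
Your proof is correct and follows essentially the same route as the paper's (very terse) proof: the refinement holds because $beh_{p,Z}.(v \asgn e)$ ensures idling on $Z \bs \{v\} \supseteq Y$, using exactly the hypotheses $Y \subseteq Z$ and $v \notin Y$. Your write-up is in fact more careful than the paper's one-line justification, since you also account for the $\neg\Empty$ conjunct via the $\Update$ phase and check the $\lub.\Delta = \infty$ disjunct of chop, details the paper leaves implicit.
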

\begin{proof}
  The refinement holds because $v \notin Y$, $Y \subseteq Z$ and $v
  \asgn e$ ensures $\idle_{Z \bs \{v\}}$. \hfill \qedhere
\end{proof}

We also develop a transformation theorem for executing a successful
CAS operation. 
\begin{theorem}[Introduce CAS]
  \label{thm:replace-CAS}
  Suppose $ae$ is an address-valued expression, $e$ is an expression,
  $\alpha \in Z$ and $r$ is an interval predicate such that
  \begin{eqnarray}\
    \label{eq:68}
    r &\ \   \entails\ \  &
    \Always (accessed.(\derefe ae)\subseteq Z) \land
    IntFree.\alpha \land 
    \\
    \label{eq:71}
    & & \Always ((\derefe ae = \alpha) \imp (\beta = ae \mapsto f)) \\
    \label{eq:79}
    & & \Always(ReadAllLocs.\beta.\alpha \imp ReadAllLocs.e)
  \end{eqnarray}
  Then
  \begin{eqnarray*}
    \Rely{r}{ae \hasgn e} &
    \sref_p^{Z} & 
    CASOK_p(ae, \alpha, \beta)
  \end{eqnarray*} 
\end{theorem}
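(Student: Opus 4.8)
The plan is to pass through the Galois connection of \refthm{thm:rely-enf-gc}, which lets me trade the rely condition for an enforced one: proving $\Rely{r}{ae \hasgn e} \sref_p^{Z} CASOK_p(ae,\alpha,\beta)$ is equivalent to proving $ae \hasgn e \sref_p^{Z} \enf{r}\st{CASOK_p(ae,\alpha,\beta)}$, i.e.\ the single entailment
\[
r \land beh_{p,Z}.CASOK_p(ae,\alpha,\beta) \entails beh_{p,Z}.(ae \hasgn e).
\]
I would then unfold the left-hand behaviour using \refex{ex:CAS-beh} and \reffig{fig:beh-def} to obtain $OnlyAccessedBy.(\derefe ae).p \land \bigl(\Eval_{p,Z}(\derefe ae = \alpha, true) \ch (\exists k,a \st \Eval_{p,Z}(ae,a) \land \Eval_{p,Z}(\beta,k) \ch \Update_{p,Z}(a,k))\bigr)$, and compare it with the target $\exists k,a \st \Eval_{p,Z}(ae,a) \land \Eval_{p,Z}(e,k) \ch \Update_{p,Z}(a,k)$. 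The evaluation of $ae$ and the trailing $\Update_{p,Z}(a,k)$ already coincide, so the two genuine tasks are to absorb the guard subinterval into the evaluation phase and to turn the evaluation of $\beta$ into an evaluation of $e$ yielding the \emph{same} value $k$.

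For the first task I would establish that $\derefe ae$ is stable across the combined guard-and-evaluation prefix. Each $\Eval_{p,Z}$ factor carries $\idle_{p,Z}$, so $p$ writes no location of $Z$; by \refeq{eq:68} we have $accessed.(\derefe ae) \subseteq Z$, so $p$ never writes a location of $\derefe ae$, while $OnlyAccessedBy.(\derefe ae).p$ denies every other process access to exactly those locations. With no process holding write permission, \textbf{HC1} gives that $\derefe ae$ does not change across the prefix, so the witness $\derefe ae = \alpha$ produced by the guard persists throughout, in particular at the instant at which $\beta$ is read. The guard factor also entails $\idle_{p,Z}$, so \refeq{eq:widen-1} lets me fold it into the following $\Eval$ block.

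For the second task, at the time $t$ where $\Eval_{p,Z}(\beta,k)$ supplies $\beta = k$ together with $ReadAllLocs.\beta.p$, stability gives $\derefe ae = \alpha$ at $t$, and $IntFree.\alpha$ (from \refeq{eq:68}) keeps $\alpha$ itself fixed, so condition \refeq{eq:71} forces $\beta$ to agree with the live expression $e$ at $t$; hence $e = k$ at $t$. Simultaneously \refeq{eq:79} upgrades $ReadAllLocs.\beta.p$ to $ReadAllLocs.e.p$ at $t$. Since the two predicates share the same $\idle_{p,Z}$ conjunct, this yields $\Eval_{p,Z}(\beta,k) \entails \Eval_{p,Z}(e,k)$, and reassembling the now-merged prefix with the unchanged $\Update_{p,Z}(a,k)$ delivers exactly $beh_{p,Z}.(ae \hasgn e)$.

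I expect the crux to be the second task: justifying that the single time witnessing $\beta = k$ can serve as the witness for $e = k$. This rests on the stability of $\derefe ae$ from the first task (so that \refeq{eq:71} is applied at a live, not stale, state) and on the stability of $\alpha$ from $IntFree.\alpha$; the $OnlyAccessedBy$ guarantee of the CAS, rather than any property of $ae \hasgn e$ itself, is what makes the stability argument go through. The guard-absorption step is then routine by \refeq{eq:widen-1}, and the permission bookkeeping is immediate from \refeq{eq:79}.
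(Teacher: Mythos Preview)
Your proposal is correct and follows essentially the same route as the paper: invoke \refthm{thm:rely-enf-gc} to trade the rely for an enforced condition, unfold the $CASOK_p$ behaviour, use $OnlyAccessedBy.(\derefe ae).p$ together with \refeq{eq:68}, $\alpha \in Z$, $\idle_{p,Z}$ and \textbf{HC1} to turn the $\Eventually(\derefe ae = \alpha)$ witness into $\Always(\derefe ae = \alpha)$, absorb the guard phase via \refeq{eq:widen-1}, and then apply \refeq{eq:71} and \refeq{eq:79} to convert $\Eval_{p,Z}(\beta,k)$ into $\Eval_{p,Z}(e,k)$. The only cosmetic difference is ordering: the paper first merges the guard into the evaluation block with \refeq{eq:widen-1} and then upgrades $\Eventually$ to $\Always$, whereas you argue stability first and fold afterwards; either order works.
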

\begin{proof} 
  Using \refthm{thm:rely-enf-gc}, we may equivalently
  prove $ae \hasgn e \sref_p^{Z} \enf{r}\st{CASOK_p(ae,
    \alpha, \beta)}$. We have the following calculation.
  \begin{derivation}
    \step{beh_{p,Z}.(\enf{r}\st {CASOK_p(ae, \alpha, \beta)})}
  
    \trans{\equiv}{expand definitions}

    \step{r \land IntFree.ae.p \land \left(
        \begin{array}[c]{@{}l@{}}
          beh_{p,Z}.[\derefe ae = \alpha]
          \ch          
          beh_{p,Z}.(ae \asgn \beta)  
        \end{array}\right)
    } 

    \trans{\equiv}{expand definitions}

    \step{r \land OnlyAccessedBy.(\derefe ae).p \land }
    \step{\left(
        \begin{array}[c]{@{}l@{}}
          \Eval_{p,Z}.(\derefe ae = \alpha, true)
          \ch          
          \exists a, k \st  
          (\Eval_{p,Z}(ae, a) \land \Eval_{p,Z}(\beta, k))\ch
          \Update_{p,Z}(a,k)
        \end{array}\right)
    } 

    \trans{\entails}{logic, expand definitions}

    \step{r \land OnlyAccessedBy.(\derefe ae).p \land }
    \step{\exists a, k \st  
        \begin{array}[c]{@{}l@{}}
          (\Eventually (\derefe ae = \alpha)
          \land \idle_{p,Z})
          \ch          
          (\Eval_{p,Z}(ae, a) \land \Eval_{p,Z}(\beta, k))\ch
          \Update_{p,Z}(a,k) 
        \end{array}
    } 

    \trans{\entails}{\refeq{eq:widen-1}}

    \step{r \land  OnlyAccessedBy.(\derefe ae).p \land}
    \step{\exists a, k \st  
        \begin{array}[c]{@{}l@{}}
          (\Eventually (\derefe ae = \alpha)
          \land          
          \Eval_{p,Z}(ae, a) \land \Eval_{p,Z}(\beta, k)) \ch
          \Update_{p,Z}(a,k) 
        \end{array}
    }



    \trans{\entails}{\refeq{eq:68}, $\alpha \in Z$ and
      $OnlyAccessedBy.(\derefe ae).p$ implies all locations in
      $\derefe ae = \alpha $ are stable}

    \step{r \land \exists a, k \st  
        \begin{array}[c]{@{}l@{}}
          (\Always (\derefe ae = \alpha)
          \land 
          \Eval_{p,Z}(ae, a) \land \Eval_{p,Z}(\beta, k))\ch
          \Update_{p,Z}(a,k)
        \end{array}
    } 

    \trans{\entails}{\refeq{eq:71}}

    \step{r \land \exists a, k \st  
      \begin{array}[c]{@{}l@{}}
        (\Always (\beta = e)  \land
        \Eval_{p,Z}(ae, a) \land \Eval_{p,Z}(\beta, k))\ch 
          \Update_{p,Z}(a,k) 
        \end{array}
    } 

    \trans{\entails}{$\Always (\beta = e)$ and \refeq{eq:79}}

    \step{r \land \exists a, k \st  
        \begin{array}[c]{@{}l@{}}
          (\Eval_{p,Z}(ae, a) \land \Eval_{p,Z}(e, k))\ch
          \Update_{p,Z}(a,k) 
        \end{array}
    } 

    \trans{\entails}{definitions}

    \step{beh_{p,Z}.(ae \hasgn e) \hfill \qedhere}
  \end{derivation}
\end{proof}

\section{Behaviour refinement the Treiber Stack}
\label{sec:verification}

In this section, we verify that the Treiber Stack (modelled by
$TS(P)$) refines the coarse-grained abstract program (modelled by
$LS(P)$), i.e., we prove: 
\begin{eqnarray}
  \label{eq:19}
  LS(P) & \sref_{P} & TS(P)  
\end{eqnarray}
Our proof strategy decomposes the parallel composition
(\refsec{sec:decomp-parall-comp}), which allows us to consider the
push and pop operations executed by a single process separately
(\refsec{sec:proof-push-operation} and
\refsec{sec:proof-pop-operation}). We derive the necessary rely
conditions for the push and pop operations as part of the proofs in
Sections \ref{sec:proof-push-operation} and
\ref{sec:proof-pop-operation}, which are then discharged in
\refsec{sec:proof-that-rest}.

\subsection{Decompose parallel composition}
\label{sec:decomp-parall-comp}

We first apply \reflem{lem:fr-ref} and then \refeq{eq:59} of
\reflem{lem:command-ref} to reduce the refinement to the following
proof obligation:
\begin{eqnarray}
  \label{eq:80}
  \|_{p:P} LP(p) & \sref_{p}^{TF} &  \|_{p:P} TP(p)
\end{eqnarray}
where 
\begin{eqnarray*}
  TF & \sdef & \{Top, FAddr\} \\
  LP(p) & \sdef &  \Context{n_p,rv_p}{LPP(p)^\omega} \\
  TP(p) & \sdef &  \Context{t_p,n_p, tn_p, rv_p}{TPP(p)^\omega}
\end{eqnarray*}
Using \refthm{thm:decompose}, we further decompose the parallel
composition in \refeq{eq:80} to obtain the following proof
obligations.
\begin{eqnarray}
  \label{eq:52}
  \rely r_1 \st LPP(p)^\omega  & \sref_{p}^{TF} & PP(p)^\omega
  \\
  \label{eq:64}
  beh_{P',TF}.(\Par_{P'} PP(p)^\omega)  &
  \entails &
  r_1  
\end{eqnarray}

Condition $r_1$ is yet to be determined, and is calculated as part of
the proof of \refeq{eq:52}. However, we require that the condition
$r_1$ that we derive splits to allow our transformation lemmas to be
applied. Recalling that the $L$ denotes the set of variables of the
coarse-grained abstraction $LS(P)$, we define
\begin{eqnarray}
  \label{eq:47}
  T & \sdef &  L \cup \{p: Proc \st tn_p, rv_p\}
\end{eqnarray}
to be set of variables of the coarse-grained abstraction $TS(P)$ and
obtain the follows. 
\begin{derivation}
  \step{\refeq{eq:52}}

  \trans{\follows}{\refeq{eq:29} decompose iteration assuming $r_1$
    splits}
  
  \step{\Rely{r_1}{LPP(p) \sref_{p}^{TF} PP(p)}}
  
  \trans{\follows}{\refeq{eq:27} of \reflem{lem:rely-sc-ref} assuming
    $r_1$ splits, then \reflem{lem:rely-nd-ref}}
  
  \step{\Rely{r_1}{LPush(p)} \sref_{p}^{L,T} Push(p) \land \hfill
    \refstepcounter{equation} \label{eq:A} (\arabic{equation})}
  \step{\Rely{r_1}{LPop(p)} \sref_{p}^{L,T} Pop(p) \hfill
    \refstepcounter{equation} \label{eq:B} (\arabic{equation})}
\end{derivation}

\subsection{Proof of push operation \refeq{eq:A}}
\label{sec:proof-push-operation}
We prove the $Push(p)$ operation as follows:
\begin{derivation}
  \step{\Rely{r_1}{LPush(p)} \sref_{p}^{L,T} Push(p)}

  \trans{\iff}{expand definitions}

  \step{\Rely{r_1}{LSetup(p,x) \ch
      EnvSt(p) \ch 
      LDoPush(p)} \sref_{p}^{L,T} \Code{Setup(p,x) \ch TryPush(p)^\omega
      \ch DoPush(p)}}
  
  \trans{\follows}{ \reflem{lem:rely-sc-ref} assuming $r_1$ splits}

  \step{\Rely{r_1}{EnvSt(p) \ch
      LDoPush(p)} \sref_{p}^{L,T} \Code{TryPush(p)^\omega \ch DoPush(p)}}
  
  \trans{\follows}{Lemmas \ref{lem:cmd-splits} and
    \ref{lem:cmd-joins}, using $EnvSt(p)$ both splits and joins }

  \step{\Rely{r_1}{EnvSt(p) \ch EnvSt(p) \ch 
      LDoPush(p)} \sref_{p}^{L,T}  \Code{TryPush(p)^\omega 
    \ch DoPush(p)}} 


  

\end{derivation}
\noindent
Using \reflem{lem:rely-sc-ref} (i.e., monotonicity of `;') and the assumption that $r_1$ splits,
the final refinement above holds if both of the following hold. 
\begin{eqnarray}
  \label{eq:30}
  \Rely{r_1}{EnvSt(p)} & \sref_{p}^{L,T} & TryPush(p)^\omega
  \\
  \label{eq:20}
  \Rely{r_1}{
        EnvSt(p) \ch 
        LDoPush(p)
    } & \sref_{p}^{L,T} &
  DoPush(p)
\end{eqnarray}

\paragraph{Proof of \refeq{eq:30}.}

\begin{derivation}
  \step{\refeq{eq:30}}

  \trans{\iff}{Lemmas \ref{lem:cmd-joins} and \ref{lem:cmd-splits},
    as $EnvSt(p)$ both joins and splits}

  \step{\Rely{r_1}{EnvSt(p)^\omega} \sref_{p}^{L,T} TryPush(p)^\omega}

  \step{(\Rely{r_1}{EnvSt(p)})^\omega \sref_{p}^{L,T} TryPush(p)^\omega}

  \trans{\follows}{$\omega$ is monotonic}

  \step{\Rely{r_1}{EnvSt(p)} \sref_{p}^{L,T}
    TryPush(p)}
  
  \trans{\iff}{\refthm{thm:rely-enf-gc}}

  \step{EnvSt(p) \sref_{p}^{L,T}
    \enf{r_1}\st{TryPush(p)}}
\end{derivation}

Commands $h_3$ and $hf_5$ of $TryPush(p)$ trivially satisfy the
requirements on the write permissions within $EnvSt(p)$, and hence
satisfy $EnvSt(p)$. For command $h_4$, we must ensure that $n_p \cdota
nxt \notin SAddr$, otherwise $\neg WriteSomeLoc.SAddr$ may not
hold. Hence, we require that $r_1$ implies:
\begin{eqnarray}
  \Always(pc_p = h_4 & \imp &  (n_p \cdota nxt) \notin SAddr) 
  \label{eq:22}
\end{eqnarray}
Using assumption \refeq{eq:22}, the proof of \refeq{eq:30} is
completed.

\paragraph{Proof of \refeq{eq:20}.}

By \refthm{thm:rely-enf-gc}, we may turn any rely condition on the
left of $\sref_p^{L,T}$ into an enforced property on the
right. 
Hence, condition \refeq{eq:20} is equivalent to:
\begin{eqnarray}
  \label{eq:31}
  \begin{array}[c]{@{}l@{}}
      EnvSt(p) \ch 
    LDoPush(p)
  \end{array}
  & \sref_{p}^{L,T} &
  \enf{r_1}\st{DoPush(p)}
\end{eqnarray}
To prove \refeq{eq:31}, we first simplify the right hand side of
$\sref_p^{L,T}$. Because $Top$ includes a modification counter and
every update to $Top$ increments this counter, each new value of $Top$
is guaranteed to be different from all previous values and hence, the
following is trivially guaranteed:
\begin{eqnarray}
  \label{eq:24}
  \Box \neg ABA.Top
\end{eqnarray}
We write $C \sqsupseteq_P^{Y,Z} A$ for $A \sref_P^{Y,Z} C$ and perform
the following calculation.

\begin{derivation}
  \step{\enf{r_1}\st{DoPush(p)}}

  \trans{\sqsupseteq_p^{T}}{expand $DoPush(p)$, remove labels}

  \step{\enf r_1 \st t_p \asgn \derefe Top \ch n_p \cdota nxt \hasgn ptr.t_p \ch
    CASOK_p( Top, t_p,  
    (n_p, ctr.t_p + 1))}
  
  \trans{\sqsupseteq_p^{T}}{assumption \refeq{eq:24} and
    \refthm{thm:ABA}}
  
  \step{\enf r_1 \st
    \begin{array}[t]{@{}l@{}}
      t_p \asgn \derefe Top \ch {}\\
      (\Enf{\Always (\neg \mcI.Top.p \land t_p = \derefe Top)}{ 
        n_p\cdota nxt \hasgn ptr.t_p}) \ch 
      CASOK_p( Top, t_p, (n_p, ctr.t_p + 1))
    \end{array}
  }

  \trans{\sqsupseteq_p^{T}}{\refthm{thm:expr-eq} then \refeq{eq:10}
    of \reflem{lem:enf-prop-intro-ref}}

  \step{\enf r_1 \st
    \begin{array}[t]{@{}l@{}}
      t_p \asgn \derefe Top \ch  {} \\
      (\enf{\Always \neg \mcI.Top.p}\st{ 
        n_p\cdota nxt \hasgn ptr.(\derefe Top)}) \ch  
      CASOK_p( Top, t_p, (n_p, ctr.t_p + 1))
    \end{array}
  }

  \trans{\sqsupseteq_p^{T}}{assumption $r_1$ splits, 
    then \refeq{eq:36} and \reflem{lem:enf-prop-intro-ref} to remove
    enforced property}

  \step{(\enf r_1 \st t_p \asgn \derefe Top) \ch \hfill
    \refstepcounter{equation} \label{eq:A1} (\arabic{equation})}
  \step{(
      \begin{array}[c]{@{}l@{}}
        \enf r_1 \st 
      \begin{array}[t]{@{}l@{}}
         (
          \begin{array}[c]{@{}l@{}}
            \enf \Always \neg \mcI.Top.p \st  n_p\cdota nxt \hasgn ptr.(\derefe Top)
          \end{array}
        ) \ch 
        CASOK_p( Top, t_p, (n_p, ctr.t_p + 1))
      \end{array}
    \end{array}
  ) \hfill \refstepcounter{equation} \label{eq:A2} (\arabic{equation})} 
\end{derivation}

Hence, by \refeq{eq:41} of \reflem{lem:command-ref}, the proof of
\refeq{eq:20} holds if we prove both of the following:
\begin{eqnarray}
  \label{eq:42}
  EnvSt(p) & \sref_p^{L,T} & 
  \refeq{eq:A1}
  \\
  \label{eq:43}
  LDoPush(p)
  & \sref_p^{L,T} & \refeq{eq:A2}
\end{eqnarray}
The proof of \refeq{eq:42} is trivial. For \refeq{eq:43}, we
strengthen $r_1$ so that it implies that there is no interference on
the stack nodes if there is no interference on $Top$ and that there is
no interference on addresses $n_p \cdota key$ and $n_p \cdota nxt$,
i.e., we require that $r_1$ satisfies: 
\begin{eqnarray}
  \label{eq:81}
  \Box (\Always \neg \mcI.Top.p \imp  IntFree.SAddr.p) \land 
  IntFree.\{n_p \cdota key, n_p \cdota nxt\}.p
\end{eqnarray}
Thus, we have the following calculation. 

\begin{derivation}
  \step{\refeq{eq:A2}} 
  
  \trans{\srefeq_p^{T}}{expand definitions}

  \step{\enf r_1 \st
      \begin{array}[t]{@{}l@{}}
        (
        \enf \Always \neg \mcI.Top.p \st  n_p\cdota nxt \hasgn ptr.(\derefe Top)
        ) \ch {} \\
        (\enf{\Always \neg \mcI.Top.p }\st{[\derefe Top = t_p]
          \ch Top \asgn (n_p, ctr.t_p + 1)})
      \end{array}} 

    \trans{\sqsupseteq_p^{T}}{$\Always c$ joins for any state
      predicate $c$ and \reflem{lem:enf-sc-ref}}
  
  \step{\Enf{r_1 \land \Always \neg \mcI.Top.p}{
      n_p\cdota nxt \hasgn ptr.(\derefe Top)} \ch 
    [\derefe Top = t_p]\ch Top \hasgn (n_p, ctr.t_p + 1)
  } 
  
  \trans{\sqsupseteq_p^{T}}{\refeq{eq:62} of
    \refthm{thm:grd-intro} and assumption \refeq{eq:81}}
  
  \step{\enf IntFree.(SAddr \cup \{Top, n_p \cdota key, n_p \cdota
    nxt\}).p \st (n_p\cdota nxt) \hasgn ptr.(\derefe Top) \ch Top
    \hasgn (n_p, ctr.(\derefe Top) + 1) }

  \trans{\srefeq_p^{T}}{definition of $LDoPush(p)$}

  \step{LDoPush(p)}
\end{derivation}

\subsection{Proof of pop operation \refeq{eq:B}}
\label{sec:proof-pop-operation}
We may decompose this operation as follows:

\begin{derivation}
  \step{\Rely{r_1}{LPop(p) \sref_{p}^{L,T} Pop(p)}}

  \trans{\iff}{expand definitions}
  
  \step{\rely r_1 \st EnvSt(p) \ch (LEmpty(p) \sqcap LDoPop(p))
    \sref_{p}^{L,T}  TryPop(p)^\omega \ch (Empty(p)
    \sqcap DoPop(p))}
  
  \trans{\follows}{distribute `;', $EnvSt(p)$ splits}
  
  \step{\rely r_1 \st EnvSt(p) \sref_{p}^{L,T} TryPop(p)^\omega \land
    \hfill \refstepcounter{equation} \label{eq:B1}
    (\arabic{equation})}

  \step{\rely r_1 \st LEmpty(p) \sref_{p}^{L,T} Empty(p) \land \hfill
        \refstepcounter{equation} \label{eq:B2} (\arabic{equation}) }

      \step{\rely r_1 \st EnvSt(p) \ch LDoPop(p) \sref_{p}^{L,T} DoPop(p)
        \hfill \refstepcounter{equation} \label{eq:B3}
        (\arabic{equation})}
\end{derivation}

\paragraph{Proof of \refeq{eq:B1}.}
This property holds in a similar manner to the proof of
\refeq{eq:30}. In particular, the proof holds because $TryPop(p)$ does
not modify any location within $\{Top, n_p\cdota val\} \cup SAddr$.

\paragraph{Proof of \refeq{eq:B2}.}
This property is trivial using monotonicity properties and
\refthm{thm:refine-guard}.

\paragraph{Proof of \refeq{eq:B3}.}

We strengthen $r_1$ so that it implies
\begin{eqnarray}
  \Box(\Always (\derefe Top = t_p \land pc_p = lt_7)
  &\ \    \imp\ \  &   
  \Always (tn_p = ptr.(\derefe Top)\mapsto nxt) \land
  IntFree.SAddr.p) \label{eq:9}   
\end{eqnarray}
By \refeq{eq:9}, if the global top value $Top$ matches the local copy
$t_p$, then the global next value $Top.nxt$ must be the same as the
local copy $tn_p$, and that there is no interference on the locations
within $SAddr$. Using this condition, we prove \refeq{eq:B3} as
follows.
\begin{derivation}
  \step{\enf{r_1}\st {ToCAS(p) \ch lt_7: CASOK_p( Top, t_p, (tn_p, ctr.t_p +
      1))}}

  \trans{\sqsupseteq_p^{T}}{expandin definition of $ToCAS(p)$}

  \step{\enf r_1 \st 
    \begin{array}[t]{@{}l@{}}
      t_p \asgn \derefe Top \ch 
      [ptr.t_p \neq null] \ch tn_p \asgn ptr.t_p \mapsto nxt \ch rv_p \asgn 
      ptr.t_p \mapsto key \ch  {} \\
      lt_7: CASOK_p( Top, t_p, (tn_p, ctr.t_p + 1))
    \end{array}
  }

  \trans{\sqsupseteq_p^{T}}{\refthm{thm:ABA} and \refeq{eq:24}}
  
  \step{\enf r_1  \st
    \begin{array}[t]{@{}l@{}}
      t_p \asgn \derefe Top \ch {} \\
      \left(\begin{array}[c]{@{}l@{}}
          \enf (\Always \neg \mcI.Top.p
          \land \derefe Top = t_p) \st
          \begin{array}[t]{@{}l@{}}
          {[}ptr.t_p \neq null{]} \ch {} \\
          tn_p \asgn ptr.t_p \mapsto nxt \ch {}\\
          rv_p \asgn
          ptr. t_p \mapsto key
        \end{array}
        \end{array}\right)
      \ch {}\\
      lt_7: CASOK_p( Top, t_p, (tn_p, ctr.t_p + 1))
    \end{array}
  }

  \trans{\sqsupseteq_p^{T}}{use $\Always (\derefe Top = t_p)$, then
    \reflem{lem:enf-prop-intro-ref}}
  \trans{}{$beh_{p,T}.CASOK_p( Top, \alpha, \beta) \entails
    \neg \mcW.t_p.p$} 

  \step{\enf r_1  \st
    \begin{array}[t]{@{}l@{}}
      t_p \asgn \derefe Top \ch {} \\
      \left(\begin{array}[c]{@{}l@{}}
          \enf \Always \neg \mcI.Top.p
          \st
          \begin{array}[t]{@{}l@{}}
            {[}ptr.(\derefe Top) \neq null{]} \ch {} \\
            tn_p \asgn ptr.(\derefe Top) \mapsto nxt \ch {} \\
            rv_p \asgn
            ptr.(\derefe Top) \mapsto key
          \end{array}
        \end{array}\right)
      \ch {}
      \\
      lt_7: CASOK_p( Top, t_p, (tn_p, ctr.(\derefe Top) + 1))
    \end{array}
  }

  \trans{\sqsupseteq_p^{T}}{\refeq{eq:36} using $r_1$
    splits, then weaken enforced property}

  \step{
    \begin{array}[t]{@{}l@{}}
      t_p \asgn \derefe Top \ch {} \\
      \left(\begin{array}[c]{@{}l@{}}
        \enf \Always \neg \mcI.Top.p \st
        \begin{array}[t]{@{}l@{}}
          [ptr.(\derefe Top) \neq null] \ch {} \\
          tn_p \asgn ptr.(\derefe Top) \mapsto nxt \ch {} \\
          rv_p \asgn
          ptr.(\derefe Top) \mapsto key
        \end{array}
      \end{array}\right)
      \ch {}
      \\
      \enf r_1  \st lt_7: CASOK_p( Top, t_p, (tn_p, ctr.(\derefe Top) + 1))
    \end{array}
  }

  \trans{\sqsupseteq_p^{T}}{\refthm{thm:replace-CAS} using
    \refeq{eq:9}, $CASOK_p(ae, \alpha, \beta) \entails IntFree.ae.p$}

  \step{
    \begin{array}[t]{@{}l@{}}
      t_p \asgn \derefe Top \ch {} \\
      \enf \Always \neg \mcI.Top.p \st 
      \begin{array}[t]{@{}l@{}}
        [ptr.(\derefe Top) \neq null] \ch 
        tn_p \asgn ptr.(\derefe Top)\mapsto nxt \ch rv_p \asgn
        ptr.(\derefe Top)\mapsto key
        \ch  {}\\
         Top \hasgn (ptr.(\derefe Top) \mapsto nxt, ctr.(\derefe Top) + 1)
      \end{array}
    \end{array}
  }
\end{derivation}

Using monotonicity of `;', the proof of $\refeq{eq:B3}$ reduces to the
following proof obligations.
\begin{eqnarray}
  \label{eq:6}
  EnvSt_p & \sref_p^{L,T} & t_p \asgn \derefe Top 
  \\
  \label{eq:12}
  {[}ptr.(\derefe Top) \neq null{]} & \sref_p^{L,T} & {[}ptr.(\derefe Top) \neq null{]} \ch tn_p \asgn
  ptr.(\derefe Top) \mapsto nxt
\end{eqnarray}
The proof of \refeq{eq:6} is trivial because $t_p \notin L$. Property
\refeq{eq:12} holds as follows: 
\begin{derivation}
  \step{[ptr.(\derefe Top) \neq null]} 
  
  \trans{\sref_P^{L}}{definitions}

  \step{[ptr.(\derefe Top) \neq null] \ch (\Enf{\neg \Empty}{\Idle})}
  
  \trans{\sref_P^{L,T}}{\refthm{thm:intro-asgn} because $tn_p \notin
    L$}

  \step{[ptr.(\derefe Top) \neq null] \ch tn_p \asgn ptr.(\derefe Top) \mapsto nxt}

\end{derivation}

\OMIT{
Then we have
\begin{derivation}
  \step{beh_p.(lt_7: (\Enf{\Always \neg \mcI.Top.p }{[\derefe Top = t_p] \ch Top \asgn tn_p})) }
  
  \trans{\entails}{}

  \step{\Always \neg \mcI.Top.p \land }

  \step{\left(
    \begin{array}[c]{@{}l@{}}
\Eventually (\derefe Top = t_p \land \Always \neg \mcW_p.Top)
    \ch {} \\
    \exists k \st (\Eventually (k = tn_p) \land \Always \neg \mcW_p.Top)\ch \Always
    (Top = k)
  \end{array}
\right)}

  \trans{\entails}{using $\Always \neg \mcI.Top.p$}

  \step{
    \begin{array}[c]{@{}l@{}}
      \Always (\derefe Top = t_p)
      \ch 
      \exists k \st (\Eventually (k = tn_p) \land \Always (\derefe Top = t_p))\ch \Always
      (Top = k)
    \end{array}
  }

  \trans{\entails}{using rely condition \refeq{eq:9}}

  \step{
    \begin{array}[c]{@{}l@{}}
      true
      \ch 
      \exists k \st (\Eventually (k = tn_p) \land \Always (tn_p = Top.nxt))\ch \Always
      (Top = k)
    \end{array}}

  \trans{\entails}{logic}

  \step{
    \begin{array}[c]{@{}l@{}}
      true
      \ch 
      \exists k \st \Eventually (k = Top.nxt)\ch \Always
      (Top = k)
    \end{array}}

  \trans{\entails}{logic and $\Eventually c$ widens}

  \step{
    \begin{array}[c]{@{}l@{}}
      \exists k \st \Eventually (k = Top.nxt)\ch \Always
      (Top = k)
    \end{array}}

  \trans{\equiv}{definition}

  \step{beh_p.(Top \asgn Top.nxt)}

\end{derivation}

\begin{brijesh}
  This calculation above can probably be generalised to a lemma. Then
  the proof just becomes application of the lemma. 
\end{brijesh}
}

\subsection{Proof of \refeq{eq:64}}
\label{sec:proof-that-rest}

The rely condition $r_1$ is required to imply \refeq{eq:22},
\refeq{eq:24} and \refeq{eq:9}. We define the weakest possible
condition and obtain 
\begin{eqnarray*}
  r_1 & \sdef & \refeq{eq:22} \land
  \refeq{eq:24} \land \refeq{eq:9}
\end{eqnarray*}

To prove \refeq{eq:22}, we show that the condition below holds:
\begin{eqnarray*}
  \all q : P \bs \{p\} \st
  \begin{array}[t]{@{}l@{}}
    \Always (pc_q = ht_5 \imp n_q \neq n_p) \land
    \Always (pc_q = lt_7 \imp tn_q \neq n_p)
  \end{array}
\end{eqnarray*}
which ensures that process $q$ can never insert $n_p$ into the
queue. The proof of the formula above relies on the fact that $SAddr
\cap FAddr = \emptyset$ is an invariant of $TS(P)$. Invariance of $SAddr
\cap FAddr = \emptyset$ is straightforward to verify.

To show that processes $q \neq p$ satisfy \refeq{eq:9}, we must
consider commands executed by process $q$ that either make the
antecedent true or falsify the consequent of \refeq{eq:9}. The counter
for $Top$ is only incremented and hence process $q \neq p$ cannot make
the antecedent of \refeq{eq:9} true. Furthermore, the command in
process $q$ that falsifies the consequent (i.e., $CASOK_q( Top, t_q,
(tn_q, ctr.t_q + 1))$) also falsifies the antecedent.

\section{Conclusions and related work}
\label{sec:concl-relat-work}

Methods for verifying linearisability have received a large amount of
attention in the last few years. Herlihy and Wing's original paper use
possibilities and Owicki/Gries-style \cite{OG76} proof outlines, which
defines the set of possible abstract data structures that corresponds
to each point of interleaving. As we have already mentioned, Doherty
et al \cite{CDG05,DGLM04} use a simulation-based method using
input/output automata, Vafeiadis et al use a framework that combines
separation logic and rely/guarantee reasoning \cite{Vaf07,VHHS06} and
Derrick et al have developed refinement-based methods
\cite{DSW07,DSW11TOPLAS,DSW11}. O'Hearn et al develop a method using a
so-called hindsight lemma \cite{ORVYY10} and Jonsson presents a method
that uses refinement calculus \cite{Jon12}. A number of tool-based
methods have also been developed, but these often place restrictions
on the final implementation. For instance, Amit et al present static
analysis techniques \cite{ARRSY07}, Burckhardt et al \cite{BDMT10}
develop a tool for checking whether or not an algorithm is
\emph{deterministically linearisable} (so that future behaviour need
not be considered) and Vafeiadis has developed a tool that can be used
to verify linearisability for such deterministically linearisable
algorithms \cite{Vaf10}.  Verification of linearisability using
coarse-grained abstraction has been proposed by Turon and Wand, but
they do not show that the abstraction itself is linearisable
\cite{TW11}. Elmas et al \cite{EQSST10}, and separately Groves
\cite{Gro08} use a reduction-based method, but unlike our approach,
these methods are not compositional.

Despite this large set of results, due to the complexity of such
concurrent data structures a satisfactory scalable solution to
verification remains an open problem. The approach proposed by this
paper is to split a verification into two phases --- the first reduces
the size (and hence complexity) of the problem by showing (via a
series of small refinements) that the atomicity of an implementation
can made more coarse, leaving one with a simpler program that can be
verified to be linearisable. Note we have presented the verification
in a different order, i.e., shown linearisability of the abstraction
first. 

This paper presents a compositional interval-based method of verifying
linearisability that does not require one to identify the
linearisation points within the concrete code. Instead, we prove that
the concrete code implements a coarse-grained abstraction. Due to this
coarse granularity, the linearisation points are easier to identify
and the proof itself is simpler. Rely/guarantee-style rules together
with splits/joins properties are used to develop transformation
theorems, which are in turn used to decompose proof obligations. By
using an interval-based framework together with fractional permissions
we are able to model true concurrency between parallel processes.
This also enables reasoning at a finer level of atomicity than is
often allowed because we allow reasoning at the level of variable and
memory accesses during expression evaluation.

As B{\"a}umler et al point out, reasoning over interval allows one to
determine the future behaviour of a program, which in turn allows one
to sometimes avoid backwards reasoning \cite{BSTR11}, e.g., the
Michael and Scott queue \cite{Michael96}. Our experiments indicate
that interval-based reasoning via coarse-grained abstraction also
simplifies proofs of Heller et al's coarse grained lazy set algorithm
\cite{HHLMSS07}, which is known to have linearisation points outside
the operations being verified \cite{VHHS06,CGLM06,DSW11}. In the
terminology of Burckhardt et al, this corresponds to a
non-deterministically linearisable program, and hence lies outside the
scope of the tools in \cite{ARRSY07,BDMT10,Vaf10}. We believe that
interval-based reasoning allows generality beyond pre/post state
reasoning, and that such generalisations are necessary for taming the
increasing concurrency in everyday applications.

The methods we have presented have not yet been mechanised and this
remains the next obvious extension to this work. We conjecture that
the refinement-based framework will also be useful for a derivation,
which we aim to explore as part of future work. Such work would draw
on, for instance, the derivational approach proposed by Vechev and
Yahav \cite{VY08}.

\paragraph{Acknowledgements.}
  Brijesh Dongol and John Derrick are sponsored by EPSRC Grant
  EP/J003727/1. We thank Lindsay Groves and Ian J. Hayes for their
  helpful comments on an earlier draft. This paper has benefited from
  the input of anonymous reviewers.

\bibliographystyle{plain}
\bibliography{thesis,jreferences}

\end{document}